\documentclass[aps,pra,showpacs,notitlepage,twocolumn,superscriptaddress,nofootinbib]{revtex4-1}
\usepackage{bbm}
\usepackage{mathrsfs}

\usepackage{epsfig}
\usepackage{soul,xcolor}
\usepackage{graphicx}

% for use in creating sideways figure
% \usepackage{rotating}
%\captionsetup[figure]{justification=justified, singlelinecheck=off} 
%\usepackage{caption}
%\captionsetup{justification = justified, format=plain}
\usepackage{amsfonts}
\usepackage{amsthm}
\usepackage[figuresright]{rotating}
\usepackage{amssymb}
\usepackage{amsmath}
\usepackage{dcolumn}
\usepackage{enumitem}
\usepackage{physics}
\usepackage{float}
\usepackage{bm}
\usepackage{verbatim}
\usepackage{braket}
\usepackage{kbordermatrix}
\usepackage[normalem]{ulem}
\usepackage[ruled,vlined,linesnumbered]{algorithm2e}
\usepackage{setspace}
\setlength{\skiptext}{10pt}
\setlength{\skiprule}{5pt}

\usepackage[colorlinks,linkcolor=blue,anchorcolor=blue,citecolor=blue,urlcolor=blue]{hyperref}
\newtheorem{theorem}{Theorem}
\newtheorem{lemma}[theorem]{Lemma}

\def\>{\rangle}
\def\<{\langle}

\newcommand{\bea}{\begin{eqnarray}}
\newcommand{\eea}{\end{eqnarray}}
\newcommand{\be}{\begin{equation}}
\newcommand{\ee}{\end{equation}}
\newcommand{\bfg}{\begin{figure}[htbp]}
\newcommand{\efg}{\end{figure}}

\newcommand{\mattwocb}[4]{\left[
	\begin{array}{cc}{#1}&{#2}\\{#3}&{#4}\end{array}\right]}
\def\lbL{\lb\rule{0pt}{2.4ex}}
\def\lpL{\left(\rule{0pt}{2.4ex}}

\def\lb{\left[}			\def\rb{\right]}
			\def\rp{\right)}

% Use in highlighting temporary references
\usepackage{xcolor}
\definecolor{crimson}{RGB}{140,41,53}
% Command used for temporary markups

%%%%%%%%%%%%%%%%%%%%%%%%%%%%%%%%%%%%%%%%%%%%%%%%
%%%%%%%%%%%%%%%%%%%%%%%%%%%%%%%%%%%%%%%%%%%%%%%%
%%%%%%%%%%%%%%%%%%%%%%%%%%%%%%%%%%%%%%%%%%%%%%%%
\begin{document}

    \title{A Grand Unification of Quantum Algorithms}
    \author{John M. Martyn}\affiliation{Center for Theoretical Physics, Massachusetts Institute of Technology, Cambridge, Massachusetts 02139, USA}\affiliation{Department of Physics, Massachusetts Institute of Technology, Cambridge, Massachusetts 02139, USA}
    \author{Zane M. Rossi}\affiliation{Department of Physics, Massachusetts Institute of Technology, Cambridge, Massachusetts 02139, USA}
    \author{Andrew K. Tan}\affiliation{Department of Physics, Co-Design Center for Quantum Advantage, Massachusetts Institute of Technology, Cambridge, Massachusetts 02139, USA}
    \author{Isaac L. Chuang}\affiliation{Department of Physics, Co-Design Center for Quantum Advantage, Massachusetts Institute of Technology, Cambridge, Massachusetts 02139, USA} \affiliation{Center for Ultracold Atoms, and Research Laboratory of Electronics, Massachusetts Institute of Technology, Cambridge, Massachusetts 02139, USA}

%%%%%%%%%%%%%%%%%%%%%%%%%%%%%%%%%%%%%%%%%%%%%%%%
\begin{abstract}
    Quantum algorithms offer significant speedups over their classical counterparts for a variety of problems. The strongest arguments for this advantage are borne by algorithms for quantum search, quantum phase estimation, and Hamiltonian simulation, which appear as subroutines for large families of composite quantum algorithms. A number of these quantum algorithms were recently tied together by a novel technique known as the quantum singular value transformation (QSVT), which enables one to perform a polynomial transformation of the singular values of a linear operator embedded in a unitary matrix. In the seminal GSLW'19 paper on QSVT [Gily\'en, Su, Low, and Wiebe, ACM STOC 2019], many algorithms are encompassed, including amplitude amplification, methods for the quantum linear systems problem, and quantum simulation. Here, we provide a pedagogical tutorial through these developments, first illustrating how quantum signal processing may be generalized to the quantum eigenvalue transform, from which QSVT naturally emerges. Paralleling GSLW'19, we then employ QSVT to construct intuitive quantum algorithms for search, phase estimation, and Hamiltonian simulation, and also showcase algorithms for the eigenvalue threshold problem and matrix inversion. This overview illustrates how QSVT is a single framework comprising the three major quantum algorithms, suggesting a \emph{grand unification} of quantum algorithms.
\end{abstract}

\maketitle

%%%%%%%%%%%%%%%%%%%%%%%%%%%%%%%%%%%%%%%%%%%%%%%%
\section{Introduction}\label{sec:Introduction}
    Algorithms solve problems by presenting a process or set of rules to be followed, utilizing a basic set of building blocks provided. Computer science traditionally employs Boolean circuit components as the basic blocks, from which standard arithmetic operations may be composed, as Boolean functions.  Quantum computation employs a different set of basic blocks, typically unitary operations on one or more two-state systems (qubits), to realize quantum circuits.  A fundamental challenge arises, however, when seeking to unite the world of quantum circuits with that of Boolean functions: in general, Boolean functions need not be reversible, whereas quantum circuits are manifestly unitary transforms, and must thus be invertible.
    
    Early in the history of quantum computation, this barrier was transcended by seminal work~\cite{bennett1989time} showing that all Boolean functions can be made reversible, with only a small overhead in space and time.  Toffoli and Fredkin famously illustrated this idea by showing how the ideal Newtonian dynamics of finite-radii spheres (billiard balls) can be used to simulate reversible Boolean circuits, via their collisions~\cite{fredkin1982conservative}.  Following this concept, simulation of arbitrary Boolean functions can also be accomplished using quantum circuits, by first embedding the desired function into a reversible Boolean circuit, then constructing a quantum circuit realizing this invertible transform.  Such an embedding is a core part of Shor's quantum factoring algorithm~\cite{shor1994algorithms}, for example, as used in the modular exponentiation of an input number.
    
    Intriguingly, however, the two other major ``primordial'' quantum algorithms, Grover's quantum search algorithm~\cite{grover1996fast}, and the Hamiltonian simulation algorithm~\cite{feynman1982simulating,lloyd1996universal}, do not employ an embedding of a reversible Boolean function.  In fact, a key part of the quantum factoring algorithm is its use of the quantum Fourier transform, which has no direct classical analogue, in the sense that it is not at all like a quantum embedding of a reversible Boolean function for the Fourier transform. And yet, all three of these algorithms provide solutions to problems with clear classical counterparts, and attain known speedups over the comparable classical ``Boolean function'' approaches.  So wherein lies the ability of quantum algorithms to address and speed up the solution to a classically specified problem?
    
    As illustrated in this tutorial, a key idea in uniting quantum and classical computation is {\em not} to first make classical computation reversible.  Instead, observe that the dynamical behavior of a {\em subsystem} of a quantum system can be {\em non-unitary}, and thus can directly realize irreversible, non-linear functions.  An extreme case of this is projective measurement: the billiard ball model can realize non-invertible gates simply by discarding balls, but this would be inefficient.  More constructively, the recently developed framework of {\em quantum signal processing} (QSP)~\cite{Low_2016,Low_2019} provides a systematic method to make a quantum subsystem transform under nearly arbitrary polynomial functions of degree $d$, using $\mathcal{O}(d)$ elementary unitary quantum operations.  Crucially, the polynomial describes not the output of the full quantum system, but only a very specific and clearly identified subsystem.  And remarkably, the essential ideas behind QSP originate from the early days of practical control of two-level quantum systems, with nuclear magnetic resonance~\cite{freeman1998spin,levitt1986composite,wimperis1994broadband}.
    
    With this framework, we present in this tutorial a pedagogical overview of the modern approach to quantum search, factoring, and simulation, focusing on how all three of these central quantum algorithms may be unified as instances of the recently developed {\em quantum singular value transform (QSVT) algorithm}~\cite{Gily_n_2019}. The QSVT algorithm generalizes QSP and efficiently applies a polynomial transformation to the singular values of a linear operator (governing a particular subsystem) embedded in a larger unitary operator. And more recently, such a singular value transformation has been generalized to apply to an operator embedded in a block of a Hamiltonian~\cite{lloyd2021hamiltonian}.
    
    Singular values naturally arise in this context from the fact that the input and output spaces of the embedded linear operator may be of different sizes.
    The polynomial transformation of the singular values is achieved by applying a specific sequence of SU(2) rotations to the embedded subspace, where each rotation is parameterized by an angle $\phi_k \in \mathbb{R}$. The QSVT algorithm is parametric in that the polynomial transformation is completely characterized by the choice of phase angles $\{\phi_k\}$. Moreover, given the desired polynomial transformation, the QSP phase angles which generate it may be classically efficiently and stably computed~\cite{Low_2016, Gily_n_2019, efficient_qsp_phase_factors}.  
        
    This seemingly simple parameterization endows QSVT with immense flexibility and power. Using QSVT as a subroutine, we present quantum algorithms for the search problem and phase estimation, and give simplified arguments for optimal Hamiltonian simulation by QSVT and matrix inversion by QSVT. We also present a QSVT-based algorithm for \emph{the eigenvalue threshold problem}, wherein one wishes to know if a (normal) matrix has an eigenvalue above some threshold, and use this to simplify the presentation of the use of QSVT for phase estimation. Each algorithm is realized simply as an instance of QSVT with a natural oracle, adaptively repeated and interspersed in a common pattern with simple quantum gates and measurements. 
    
    This common pattern is particularly fascinating. At face value, the algorithms for the search problem, factoring (aka phase estimation), and Hamiltonian simulation appear to share no similar structures, owing their quantum speedups to different sources, and yet they can all be derived from a single algorithmic primitive, and interpolated between by a simple change of parameters. In addition, these three central algorithms form the foundation for most quantum algorithms currently known. For instance, the Harrow-Hassidim-Lloyd algorithm for linear systems~\cite{Harrow_2009} incorporates Hamiltonian simulation and phase estimation in order to invert a Hermitian matrix, and similarly the quantum counting algorithm integrates quantum search with phase estimation to count the number of marked elements in an unstructured set~\cite{Brassard_1998}. As shown in this tutorial, by simply adjusting the parameters of QSVT, one can construct nearly all known quantum algorithms. It is in this sense that QSVT provides a \emph{grand unification} of quantum algorithms. 
    
    While some of these applications have been covered in recent works on QSVT~\cite{Gily_n_2019, rall2021faster}, here we aim to present these constructions as pedagogically as possible, providing detailed procedures and intuition for each, and including explicit examples to support abstract ideas where appropriate. We also supply performance bounds and resource requirements for each of the algorithms presented here, which we anticipate will be helpful.  It is our hope that this presentation will make QSVT and QSP more accessible, catalyzing future developments in quantum algorithms.
    
    Throughout this tutorial, we will assume familiarity with basic concepts in quantum computing, such as unitary dynamics and measurement, as well as the conventional quantum algorithms for search, phase estimation, and Hamiltonian simulation. For a comprehensive review of these subjects, see~\cite{nielsen2010quantum, preskill1998lecture, kitaev2002classical}. Further, we aim to present this work in a manner accessible to readers without prior knowledge of QSP and QSVT, but if more information on these topics is needed,~\cite{low2017quantum, Low_2016, gilyen2019quantum, Gily_n_2019} may serve as helpful references.

    \subsection{Road Map}
        
    We share the story of this grand unification of quantum algorithms by first surveying the development of QSVT in Section~\ref{sec:QSP_To_QSVT}, beginning with quantum signal processing and then demonstrating how this technology leads to quantum eigenvalue transforms and ultimately the quantum singular value transformation. Thereafter, we detail a statement of grand unification for currently known quantum algorithms, presenting QSVT-based algorithms for the search problem and the eigenvalue threshold problem in Secs.~\ref{sec:Search} and~\ref{sec:Threshold}, respectively. Further fulfilling the unification promise, we introduce an algorithm for phase estimation by QSVT in Section~\ref{sec:PhaseEstimation}, and show how the quantum Fourier transform emerges from cascading QSVT sequences. We then highlight in Section~\ref{sec:FuncEval} how QSVT can yield intuitive algorithms for Hamiltonian simulation and matrix inversion. Lastly, in Section~\ref{sec:Discussion}, we explore the implications of these results and discuss areas of future research in the utility of QSVT.

\section{From QSP to QSVT} \label{sec:QSP_To_QSVT}

    In this section, we develop the framework of QSVT, starting with quantum signal processing and then providing a concrete application of this technology to amplitude amplification. Building off this example, we establish quantum eigenvalue transforms and finally quantum singular value transforms. Overall, this section is meant to be pedagogical and easily accessible to readers with a background in basic quantum circuits and linear algebra, as the constructions of this section underlie the remainder of the work. For more details on QSP and QSVT, including alternate introductions to the topics, rigorous proofs, and applications not covered here, see \cite{low2017quantum, Low_2016, gilyen2019quantum, Gily_n_2019}.

\subsection{Quantum Signal Processing}
\label{sec:qsp}

Quantum signal processing (QSP) generalizes the results of composite pulse sequences~\cite{Low_2016, Low_2017, Low_2019}, and is built on the idea of interleaving two kinds of single qubit rotations: a {\em signal} rotation operator $W$, and a {\em signal processing} rotation operator $S$.  These rotation operations are about different axes through the Bloch sphere, e.g. commonly $W$ is an $x$-rotation, while $S$ is a $z$-rotation.  Moreover, the signal rotation always rotates through the same angle $\theta$, whereas the signal processing rotation rotates through a variable angle according to some pre-determined sequence.

For example, let the signal rotation operator be
    \bea
        W(a) = \mattwocb{a}{i \sqrt{1-a^2}}{i \sqrt{1-a^2}}{a}\,,
        \label{eq:wa_qsp}
    \eea
which is an $x$-rotation by angle $\theta = -2\cos^{-1} a$. And let the signal processing rotation operator be
    \bea \label{eq:qsp_rotation}
        S(\phi) = e^{i \phi Z}\,,
    \eea
which is a $z$-rotation by angle $-2\phi$.  For a tuple of phases $\vec{\phi} = (\phi_0, \phi_1, ... \phi_d) \in \mathbb{R}^{d+1}$, and using these conventions\footnote{The convention presented in Eq.~(\ref{eq:qsp_uphi}) may be called the Wx convention. See Appendix~\ref{sec:QSPConventions} for other conventions.}, the QSP operation sequence $U_{\vec\phi}$ is defined as
    \be
        U_{\vec\phi} = e^{i \phi_0 Z} \prod_{k=1}^d  W(a)  e^{i\phi_k Z}
        \label{eq:qsp_uphi}
        \,.
    \ee
 What is interesting is how QSP can modify the incoming signal. Suppose $\vec{\phi} = (0,0)$, i.e. there is no processing, such that $U_{\vec{\phi}} = W(a)$ is just the unchanged signal.  If we plot the probability of a $|0\>$ qubit input staying unchanged under this operation, i.e. $p = |\<0|U_{\vec{\phi}}|0\>|^2$, as a function of $\theta$, we find a nice cosinusoid (the lower plot in Figure~\ref{fig:qsp_response_functions}), because for this case $p = \cos^2\left(\frac{\theta}{2}\right)$.  Now if we do some signal processing, by letting $\vec{\phi} = (\pi/2, -\eta, 2\eta, 0, -2\eta, \eta)$, where $\eta = \frac{1}{2} \cos^{-1}\left(-1/4\right)$, then for the new $U_{\vec{\phi}}$ using these phases, we find the new probability $p = |\<0|U_{\vec{\phi}}|0\>|^2 = \frac{1}{8} \cos^2\left(\frac{\theta}{2}\right) \Big(3 \cos^8\left(\frac{\theta}{2}\right)$ $- 15 \cos^6\left(\frac{\theta}{2}\right) +35 \cos^4\left(\frac{\theta}{2}\right) - 45 \cos^2\left(\frac{\theta}{2}\right) + 30\Big)$ (the upper dotted dashed line in Figure~\ref{fig:qsp_response_functions}), which for small $\theta$ is approximately $1-\frac{5}{8}\left( \frac{\theta}{2}\right)^6$. This has the nice property that the qubit remains unflipped for a wide range of signal angles, but then a sharp transition happens around $\theta\approx 2\pi/3$. This increases sensitivity to specific values of the signal! 

    \begin{figure}[htpb]
        \centering
        \includegraphics[width=\columnwidth]{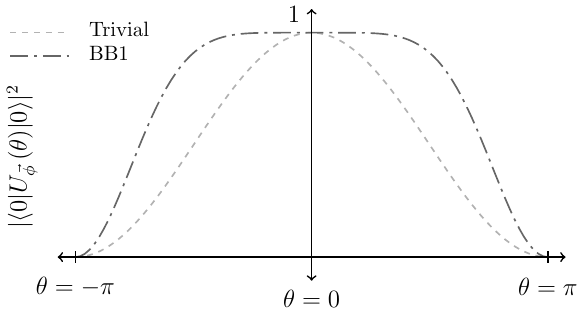}
        \caption{The transition probabilities for $\lvert 0 \rangle \rightarrow \lvert 0 \rangle$ after the application of $\exp(i\theta X/2)$ (uniform dashed) and the QSP sequence for the BB1 protocol (dotted dashed). The latter transition probability is seen to be near-one for a wider range of $\theta$.}
        \label{fig:qsp_response_functions}
    \end{figure}

Thus, sequences like this are widely employed in magnetic resonance imaging to increase image contrast.  This particular sequence is famous in the field of NMR, and is known as the ``BB1'' pulse sequence~\cite{wimperis1994broadband}. Many such ``composite pulse'' sequences are known~\cite{freeman1998spin,RevModPhys.76.1037,wolfowicz2007pulse}, with a wide range of variations, and in experimental quantum computation, they serve to suppress specific kinds of errors and enhance sensitivity to specific kinds of signals, across implementations ranging from quantum dots~\cite{wang2012composite} and nitrogen vacancy centers in diamonds~\cite{aiello2013composite}, to trapped ions~\cite{PhysRevA.92.060301,Tomita_2010,Ivanov:11} and superconducting qubits~\cite{chow2013microwave}.  

That QSP sequences can have such signal transformation properties is well known, because in general the matrix element $P(a) = \<0|U_{\vec{\phi}}|0\>$ becomes a polynomial in $a$, with the order of the polynomial being at most the length of the sequence of QSP phases $\vec{\phi}$.  Specifically, for example, for $\vec{\phi} = (0,0)$, $P(a) = a$; for $\vec{\phi} = (0,0,0)$, $P(a) = 2a^2-1$; and for $\vec{\phi} = (0,0,0,0)$, $P(a) = 4a^3-3a$. These are the Chebyshev polynomials of the first kind, $T_{d}(a)$.

Perhaps the most remarkable property of the QSP sequence of Eq.~(\ref{eq:qsp_uphi}), however, is the reverse of this statement: it turns out that for a given polynomial $P(a)$ (subject to some reasonable constraints), there exists a set of QSP phase angles $\vec{\phi}$ such that $P(a) = \<0|U_{\vec{\phi}}|0\>$. Specifically~\cite{Low_2016}:

    \begin{theorem}[Quantum Signal Processing]
    The QSP sequence $U_{\vec{\phi}}$ produces a matrix which may be expressed as polynomial function of $a$:
    \begin{equation}\label{eq:qsp_thm}
        \begin{gathered}
            e^{i \phi_0 Z} \prod_{k=1}^d  W(a)  e^{i\phi_k Z} = \\
            \mattwocb{P(a)}{iQ(a)\sqrt{1-a^2}}{iQ^*(a)\sqrt{1-a^2}}{P^*(a)}
        \end{gathered}
    \end{equation}
    for $a\in[-1, 1]$, and a $\vec{\phi}$ exists for any polynomials $P$, $Q$ in $a$ such that:
    \begin{enumerate}[label=(\roman*)]
      \item ${\rm deg}(P) \leq d$, ${\rm deg}(Q) \leq d-1$,
      \item $P$ has parity $d~ {\rm mod}~ 2$ and $Q$ has parity $(d-1)~ {\rm mod}~ 2$,
      \item $|P|^2 + (1-a^2) |Q|^2 = 1$.
    \end{enumerate}
    \label{thm:qsp}
    \end{theorem}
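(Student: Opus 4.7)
The plan is to prove both halves of the theorem by induction on the sequence length $d$. The forward direction---that the stated product yields a matrix of the given polynomial form with $P,Q$ obeying (i)--(iii)---is the easier induction, while the converse---that every $(P,Q)$ satisfying (i)--(iii) is realized by some $\vec\phi$---carries the substantive content and is where I expect the real work to lie.

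For the forward direction, the base case $d=0$ is trivial with $P(a)=e^{i\phi_0}$ and $Q\equiv 0$. For the inductive step, I assume a sequence of length $d-1$ has the required block form with polynomials $P_{d-1}, Q_{d-1}$ and compute the effect of appending $W(a)\,e^{i\phi_d Z}$ on the right. This gives a recursion
\[
P_d = e^{i\phi_d}\bigl(a P_{d-1} - (1-a^2)\, Q_{d-1}\bigr), \qquad Q_d = e^{-i\phi_d}\bigl(P_{d-1} + a\, Q_{d-1}\bigr).
\]
The degree bounds in (i) follow by inspection, the parities in (ii) because multiplication by $a$ flips parity while $(1-a^2)$ preserves it, and the norm constraint (iii) is automatic because $U_{\vec\phi}$ is a product of unitaries, so the $(0,0)$ entry of $U_{\vec\phi} U_{\vec\phi}^\dagger = I$ gives $|P_d|^2 + (1-a^2)|Q_d|^2 = 1$.

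For the converse, given target polynomials $P,Q$ with $\deg P \le d$ and $\deg Q \le d-1$, the strategy is to identify a last phase $\phi_d$ so that stripping off one factor $W(a)\,e^{i\phi_d Z}$ on the right produces reduced polynomials
\[
P_{d-1} = (1-a^2) e^{i\phi_d} Q + a e^{-i\phi_d} P, \qquad Q_{d-1} = a e^{i\phi_d} Q - e^{-i\phi_d} P,
\]
of degrees at most $d-1$ and $d-2$, which then fall under the inductive hypothesis. A priori these formulas give degrees $d+1$ and $d$ respectively, so the decisive point is a coefficient-cancellation argument at the leading orders.

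The main obstacle, and the step where condition (iii) earns its keep, is showing that a suitable $\phi_d$ exists. Matching the $a^{2d}$ coefficient of $|P|^2 + (1-a^2)|Q|^2 = 1$ yields $|p_d| = |q_{d-1}|$ for the leading coefficients of $P$ and $Q$; in the generic case $q_{d-1}\neq 0$ this lets me solve $e^{2i\phi_d} = p_d/q_{d-1}$ for a real phase, making the $a^{d+1}$ coefficient of $P_{d-1}$ and the $a^{d}$ coefficient of $Q_{d-1}$ vanish simultaneously. The next-highest coefficients vanish for free from the assumed parities of $P,Q$, so the reduced degrees are genuinely $d-1$ and $d-2$ with the correct parities for the induction to apply at step $d-1$. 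The degenerate case $p_d=q_{d-1}=0$ is handled by taking $\phi_d$ to be arbitrary. The normalization of $P_{d-1},Q_{d-1}$ follows from unitarity of the reduced product, and the base case $d=0$ forces $P$ to be a constant of unit modulus, realized by setting $e^{i\phi_0}=P$. This closes the induction.
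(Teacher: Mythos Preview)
Your proposal is correct and, for the forward direction, matches the paper exactly: both argue by induction from the base case $d=0$ with $P=e^{i\phi_0}$, $Q=0$, and the recursion you write out is the one implicitly invoked when the paper says ``easily proved by induction.''

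For the converse, the paper does not actually give a proof; it simply remarks that the reverse direction ``is more involved, and can be proven in a number of ways, including an elegant interpretation involving Laurent polynomial algebras,'' citing Haah and Chao et al. Your approach---peeling off one $W(a)\,e^{i\phi_d Z}$ factor at a time by reading off $\phi_d$ from the leading-coefficient condition $|p_d|=|q_{d-1}|$ forced by (iii)---is the standard constructive argument that appears in the earlier QSP literature (e.g., Low--Chuang and Gily\'en et al.), and it is correct as you have written it, including the degenerate case and the parity-based vanishing of the subleading coefficients. So you have supplied more than the paper does: the paper defers to references, whereas you give the elementary inductive construction. The Laurent-polynomial route the paper alludes to recasts the problem in terms of $w=e^{i\theta/2}$ and factors over a Laurent ring, which buys numerical stability for actually computing $\vec\phi$; your argument is cleaner for establishing existence.
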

The forward part of this theorem is easily proven by induction, starting from the $d=0$ case, for which $P=e^{i\phi_0}$ and $Q=0$. 
The reverse direction of this theorem is more involved, and can be proven in a number of ways, including an elegant interpretation involving Laurent polynomial algebras~\cite{Haah_2019, chao2020finding}. 

Often however, we are interested not in the unitaries that can be constructed with QSP but rather the achievable polynomial transformations of the input, ${\rm Poly}(a)$, in a subsystem.
If as above, we choose ${\rm Poly}(a) = \braket{0 | U_{\vec\phi} | 0} = P(a)$ we are limited to $P$ for which there exists a polynomial $Q$ satisfying the conditions of Theorem~\ref{thm:qsp}.
This can be quite a limiting condition for some applications.
For example, for $a = \pm 1$, $W(\pm 1)$ is proportional to the identity matrix and the entire sequence QSP sequence collapses to a single $z$-rotation limiting us to polynomials ${\rm Poly}(a)$ such that $|{\rm Poly}(\pm 1)| = 1$.
This limitation can be overcome by instead defining ${\rm Poly}(a) = \braket{+ | U_{\vec\phi} | +} = \text{Re}(P(a)) + i\text{Re}(Q(a))\sqrt{1-a^2}$.
In this case, it can be shown that we can accurately approximate any real polynomial with parity $d~ {\rm mod}~ 2$ such that $\deg({\rm Poly}) \le d$, and $|{\rm Poly}(a)| \le 1 ~ \forall ~a \in [-1, 1]$. 
This can be achieved by selecting an appropriate $P$ whose real part approximates the desired function and a $Q$ with small real component.
With this convention, the set of achievable polynomials is sufficiently expressive for all of the applications described in this tutorial.  In general, the basis employed for the desired polynomial may be called the {\em signal basis}, and unless otherwise specified, below we take this basis to be $|+\>, |-\>$.
% This can therefore be taken to be the definition of QSP.
Appendix~\ref{sec:QSPConventions} elaborates on this, and discusses the different conventional statements of the QSP Theorem~\ref{thm:qsp}.

While Theorem~\ref{thm:qsp} guarantees the existence of such a $\vec{\phi}$, it does not provide a method for determining it. Fortunately, Remez-type exchange algorithms can efficiently compute a $\vec{\phi}$ that produces a good approximation to any feasible polynomials $P$ and $Q$~\cite{Low_2016}. Further, more efficient and numerically stable algorithms have also been found~\cite{chao2020finding, Haah_2019, arute2019quantum}, and novel optimization techniques are currently under development~\cite{lin2021optimization}. Appendix~\ref{sec:qsp_angle_examples} gives explicit examples of $\vec{\phi}$ for a wide family of polynomials used in realizing quantum algorithms, and illustrates an open-source code package accompanying this tutorial for generating QSP phase angle coefficients.

\subsection{An Application to Amplitude Amplification and Search}\label{sec:AA_and_search}

    Toward motivating QSVT from QSP, we introduce an illustrative example of how multi-qubit problems may be simplified by identifying qubit-like subsystems, to which the ideas of QSP may be applied. The concept demonstrated in this subsection is essentially that of \emph{qubitization}~\cite{Low_2019}, which forms a core tenet of QSVT. 
    
    Specifically, we discuss the problem of amplitude amplification; a similar construction will be discussed in Section~\ref{sec:Search} employing the major theorems of QSVT, but here we begin from basic principles. It is hoped that the geometric intuition behind the argument which follows, and the expediency of the fully-developed construction in Section~\ref{sec:Search}, will complement each other.

    The utility of Theorem~\ref{thm:qsp} is nicely demonstrated in solving the following problem. Suppose you are given a unitary $U$ (which may act on some Hilbert space of large dimension; i.e., larger than just a qubit) and its inverse, $U^\dagger$, as well as two operators $A_\phi$ and $B_\phi$, each of which rotates the phases of a specific, privileged state, namely:
        \bea
            A_\phi &=& e^{i\phi |A_0\>\<A_0|} 
            \label{eq:aa_aphi}
            \\
            B_\phi &=& e^{i\phi |B_0\>\<B_0|}
            \,.
        \eea
    The challenge is to construct a circuit $Q$ using $U$, $U^\dagger$, $A_\phi$, and $B_\phi$ such that 
        \bea
             |\< A_0 | Q |B_0\>| \longrightarrow 1
        \eea
    in the limit of a sufficiently large circuit, and assuming that the original matrix element $\< A_0 | U |B_0\>$ of $U$ is nonzero.
    
    This problem is known as {\em amplitude amplification}, and remarkably it can be solved without knowing the specific initial value $\<
    A_0 | U |B_0\>$, by using the {\em oblivious fixed-point amplitude amplification} quantum algorithm. We will show that such an algorithm arises only from Theorem~\ref{thm:qsp}, even in the multiple-qubit setting, by recognizing that there are two concentric Bloch spheres (qubit-like spaces) in this problem.
    
    Specifically, one can recognize that $U |B_0\>$ is a quantum state which has a non-zero component along $|A_0\>$, and another component perpendicular to $|A_0\>$.  So we define
        \be
            |A_\perp\> = \frac{1}{\mathcal{N}} \Big( I-|A_0\>\<A_0| \Big) U|B_0\>
            \,,
        \ee
    where $\mathcal{N}$ is the normalization factor needed to make $|A_\perp\>$ a unit vector. Then
        \bea
            U |B_0\> = a |A_0\>  + \sqrt{1-a^2} |A_\perp\>
        \eea
    for $a=\<A_0|U|B_0\>$ (we may assume that $a$ is real because a possible phase may be absorbed into $|B_0\rangle$)). Similarly, we may define some $|B_\perp\>$ such that
        \bea
            U |B_\perp\> = -a|A_\perp\> + \sqrt{1-a^2} |A_0\>
        \,.
        \eea
    These ideas are illustrated in the diagram of Figure~\ref{fig:amplitude-amplification-bloch-sphere}, using the familiar Bloch sphere.
    
        \begin{figure}[htpb]
            \centering
            \includegraphics[width=\columnwidth]{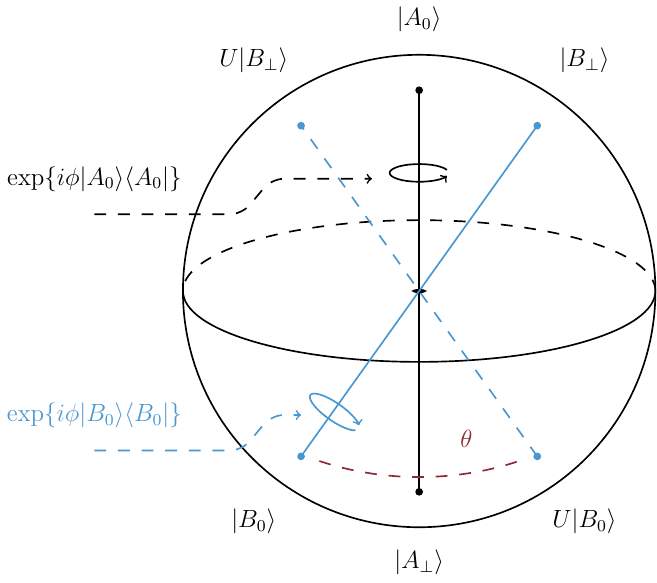}
            \caption{An illustration of amplitude amplification, where one desires to prepare the state $\lvert A_0\rangle$ (here the north pole of some Bloch sphere) obliviously to this state, given only $\lvert B_0 \rangle$, an operator $U$ whose $\langle A_0 \lvert U \rvert B_0\rangle = a$ component is non-zero, and the ability to rotate about the states $\lvert B_0 \rangle$ (blue) and $\lvert A_0 \rangle$ (black) through chosen angles. The standard Grover iterate can be recovered in this model if one desires, by producing a simple rotation by $\theta = 2\cos^{-1}\left(\sqrt{1-a^2} \right) = 2\sin^{-1}a$ (red) toward $\lvert A_0 \rangle$.}
            \label{fig:amplitude-amplification-bloch-sphere}
        \end{figure}
    
    Thus, on the two-dimensional Hilbert space spanned by $\{|A_0\>, |A_\perp\>\}$, the action of $U$ is that of a $2$$\times$$2$ unitary:
        \begin{align}
            \begin{aligned}
                U =\  &a \left (\rule{0pt}{2.4ex} |A_0\>\<B_0| - |A_\perp\>\<B_\perp| \right) \\ 
                &+\sqrt{1-a^2} \left( \rule{0pt}{2.4ex} |A_\perp\>\<B_0| + |A_0\>\<B_\perp| \right),
            \end{aligned}
        \end{align}
    which is convenient to represent in matrix form, as
        \bea
            U = \kbordermatrix{\mbox{} & |B_0\rangle & |B_\perp \rangle \\
              |A_0\rangle & a    & \sqrt{1-a^2} \\ 
              |A_\perp \rangle & \sqrt{1-a^2} & -a 
            }
            \,,
            \label{eq:ampamp_rot}
        \eea
    where the labels on the matrix indicate that columns act on $|B_0\>, |B_\perp\>$ (from left to right), and the rows act on $|A_0\>, |A_\perp\>$ (from top to bottom), such that $U$ brings a state in the $B$ basis into a state in the $A$ basis.  These two bases are the two Bloch spheres (or qubit bases) encoded in the problem. Moreover, $U$ is a reflection operation, which we may represent in this qubit-like abstraction as $R(a) = X R_y(\theta)$, with $\theta = 2\cos^{-1}\left(\sqrt{1-a^2} \right) = 2\sin^{-1}a$.
    
    This two-Bloch sphere picture provides the intuitive basis for the following theorem:
        \begin{theorem}[Amplitude Amplification]
            Given unitary $U$, its inverse $U^\dagger$, and operators $A_\phi = e^{i\phi |A_0\>\<A_0|}$, $B_\phi = e^{i\phi |B_0\>\<B_0|}$, 
              \bea
                    \<A_0 | A_{\phi_0} \left[ \prod_{k=1}^{\frac{d-1}{2}} U B_{\phi_{2k-1}} U^\dagger A_{\phi_{2k}} \right] U |B_0\> = {\rm Poly}(a)\,,
                    \label{eq:thm_aa_qsp}
              \eea
            where $d$ is odd, and ${\rm Poly}(a)$ is a polynomial in $a = \<A_0|U|B_0\>$ of degree at most $d$ that obeys the conditions on $P$ from Theorem~\ref{thm:qsp}.
            \label{thm:aa}
        \end{theorem}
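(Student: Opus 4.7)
The plan is to reduce Theorem~\ref{thm:aa} to Theorem~\ref{thm:qsp} by restricting attention to the two-dimensional invariant subspace $\mathcal{S} = \text{span}\{|A_0\>, |A_\perp\>\} = \text{span}\{|B_0\>, |B_\perp\>\}$ constructed in the paragraphs preceding the theorem. The phase operators $A_\phi$ and $B_\phi$ act as the identity off $\mathcal{S}$, and Eq.~(\ref{eq:ampamp_rot}) shows that $U$ and $U^\dagger$ preserve $\mathcal{S}$; hence the full product $\mathcal{Q}$ preserves $\mathcal{S}$, so the left-hand side of Eq.~(\ref{eq:thm_aa_qsp}) is a single entry of the $2\times 2$ unitary obtained by restricting $\mathcal{Q}$ to $\mathcal{S}$. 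The multi-qubit problem has been collapsed to a single-qubit one.

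Next I would fix a convenient frame on $\mathcal{S}$ by exploiting the invariance of $\<A_0|\mathcal{Q}|B_0\>$ under any joint substitution $(|A_0\>, |B_0\>, U) \mapsto (V_A|A_0\>, V_B|B_0\>, V_A U V_B^\dagger)$ with $V_A, V_B$ unitary on $\mathcal{S}$. A short check shows that under such a substitution $A_\phi \mapsto V_A A_\phi V_A^\dagger$ and $B_\phi \mapsto V_B B_\phi V_B^\dagger$, so that the inner $V$-factors telescope and $\mathcal{Q}\mapsto V_A \mathcal{Q} V_B^\dagger$, while the inverse conjugations picked up by the bra and ket leave the matrix element unchanged. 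Choosing $V_A, V_B$ to send both $|A_0\>$ and $|B_0\>$ to $(1,0)^T$, one obtains $A_\phi = B_\phi = \text{diag}(e^{i\phi},1) = e^{i\phi/2}\, e^{i(\phi/2)Z}$, and $U$ becomes precisely the real reflection in Eq.~(\ref{eq:ampamp_rot}), which is trace-zero and determinant $-1$, so that $U^2 = I$ and $U^\dagger = U$ on $\mathcal{S}$.

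In this frame the full sequence becomes, up to an overall phase, the alternating product $U\, e^{i(\phi_1/2) Z}\, U\, e^{i(\phi_2/2) Z}\, U\, \cdots\, e^{i(\phi_d/2) Z}\, U$, which is structurally a QSP sequence. Conjugating by a fixed $y$-rotation $R$ chosen so that $R U R^\dagger$ equals, up to a sign, the QSP signal $W(a)$ of Eq.~(\ref{eq:wa_qsp}), the product becomes exactly the QSP ansatz of Eq.~(\ref{eq:qsp_thm}) at the same argument $a$, with interior phases shifted to $\phi_k/2$ and with initial/final phases inherited from $R$ and $R^\dagger$. Theorem~\ref{thm:qsp} then immediately provides that $\<A_0|\mathcal{Q}|B_0\>$ is a polynomial in $a$, whose degree, parity, and unitarity conditions transfer directly from the ones listed for $P$ in Theorem~\ref{thm:qsp}.

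The main obstacle I expect lies in this final step: the matrix in Eq.~(\ref{eq:ampamp_rot}) is a real reflection while $W(a)$ is a rotation with imaginary off-diagonals, and the operator counts are asymmetric (amp-amp uses $d{+}1$ copies of $U$ and $d$ interior phase rotations, while the QSP ansatz uses $d$ signal operators and $d{+}1$ phases). The conjugation by $R$ is what absorbs both discrepancies, but verifying that the resulting polynomial really obeys the stated parity and the unitarity relation $|P|^2 + (1-a^2)|Q|^2 = 1$ — rather than some rescaled or reflected cousin of them — requires careful bookkeeping of the signs, the factors of $i$, and the boundary phases produced by $R$ and $R^\dagger$.
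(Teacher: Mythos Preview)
Your proposal follows essentially the same route as the paper: restrict to the two-dimensional subspace, observe that $U^\dagger = U = R(a)$ there and that $A_\phi, B_\phi$ act as $z$-rotations, then rewrite each $R(a)$ in terms of $W(a)$ so that the product becomes a standard QSP sequence to which Theorem~\ref{thm:qsp} applies. Your explicit gauge-freedom argument is a clean formalization of what the paper does implicitly by simply writing $U$ in the $\{|A_0\>,|A_\perp\>\}$ and $\{|B_0\>,|B_\perp\>\}$ bases (Eq.~\eqref{eq:ampamp_rot}); the content is the same.

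One concrete slip in your final step: the conjugation cannot be by a $y$-rotation. Any real conjugation of the real reflection $R(a)$ stays real, whereas $W(a)$ has imaginary off-diagonals; moreover ``up to a sign'' cannot repair the determinant mismatch, since $\det R(a)=-1$ while $\det(\pm W(a))=+1$. The identity the paper actually uses is
\[
R(a) \;=\; -\,i\, e^{i(\pi/4)Z}\, W(a)\, e^{i(\pi/4)Z}
\]
(Eq.~\eqref{eq:wa_to_ra}), i.e., a $z$-rotation conjugation together with a global phase $-i$. Once you substitute this, the boundary $e^{i(\pi/4)Z}$ factors merge with the neighbouring $z$-rotations from $A_\phi$ and $B_\phi$ to produce new phases $\phi'_k$, which is exactly what resolves the ``asymmetric operator count'' you correctly flagged as the main obstacle.
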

    
    Why does this work? Note first that $U|B_0\>$ lives in the $A$-qubit space, spanned by $|A_0\rangle$ and $|A_\perp \rangle$. This qubit then gets rotated around its ``$z$'' axis by $A_\phi$. The $U^\dagger$ then rotates this around the $y$ axis (and also does a reflection around $Z$, but we can disregard that for this intuition).  $U^\dagger$ also maps the state from the $A$-qubit space back into the $B$-qubit space, spanned by $|B_0\rangle$ and $|B_\perp \rangle$. Next, $B_\phi$ rotates the state around the $B$-qubit space's $z$-axis.  Then the leftmost $U$ does another $y$-rotation (and reflection), and maps the state back into the $A$-qubit space.  The sequence in the square brackets maps the state back and forth between the two qubit bases, sandwiching $y$-axis rotations with $z$-axis rotations. This sandwich of rotations is thus just doing quantum signal processing as in Sec. \ref{sec:qsp}; we understand this behavior well!
    
    The formal proof of this theorem begins by noting that on the subspaces defined by the two Bloch spheres defined above, $U = U^\dagger$.  Moreover, note that $R(a)$, the $ 2\times 2$ unitary representation of $U$ (in the qubit bases relevant to our problem), is related to the $W(a)$ of Eq.~(\ref{eq:wa_qsp}) by 
        \bea
        	R(a) = -i e^{i\frac{\pi}{4} Z} W(a)  e^{i\frac{\pi}{4} Z}\,,
        	\label{eq:wa_to_ra}
        \eea
    Substituting this into Eq.~(\ref{eq:thm_aa_qsp}), and recognizing that $A_\phi$ and $B_\phi$ simply become $z$-axis rotations, we obtain
        \bea
            \<A_0 | \left( e^{i\phi'_0 Z} \left[ \prod_{k=1}^{d} W(a) e^{i\phi'_{k} Z} \right] \right) U |B_0\> = {\rm Poly}(a), \quad 
            \label{eq:aa_qsp_seq}
        \eea
    where $\{ \phi'_k\}$ are linear combinations of the original phases $\{ \phi_k\}$. By Theorem~\ref{thm:qsp}, the term in parentheses is a matrix of related polynomials evaluated at $a$. The above matrix element thus evaluates to a polynomial $\text{Poly}(a)$ of degree at most $d$ (which can be seen by counting the instances of $U$ and $U^\dag$ in Eq.~(\ref{eq:thm_aa_qsp})), completing the proof of Theorem~\ref{thm:aa}.
    
    Theorem~\ref{thm:aa} takes on the meaning of performing ``oblivious amplitude amplification'' when the phases $\{\phi_k\}$ are chosen such that the polynomial constructed approximates the sign function for small values of $a$. The technical aspects of generating the proper polynomials are further discussed in Sec. \ref{sec:Search} and Appendix \ref{sec:appx_fpsearch}.
    
    Grover's celebrated quantum search algorithm~\cite{grover1996fast}, which is similar in character to the above, can easily be derived from the construction of amplitude amplification. In the search problem, some computational basis state $|A_0\>$ is unknown, but an oracle is given that implements
        \be
            A_\pi = e^{i\pi |A_0\>\<A_0|}\,,
        \ee
    and the goal is to create a quantum state close to $|A_0\>$ using as few queries to the oracle as possible.  The search algorithm solves this problem by choosing $U = H^{\otimes n}$ (Hadamards on all the qubits in the search space), and starting with $|B_0\> = |0\>$. Note that since $|A_0\>$ is a computational basis state, $\<A_0 |U|B_0\> = 2^{-n/2} = 1/\sqrt{N}$ because $U|B_0\> = |\psi_0\>$ is an equal superposition over all $N$ basis states in the search space. This means that the amplitude amplification algorithm can be applied, with the specific case of $a=1/\sqrt{N}$ being known, meaning we choose all the $\phi_k = \pi$ for all $k$. This choice implies that 
    % psi is described in the text above
        \bea
            U B_\phi U^\dagger = e^{i\pi H^{\otimes n} |0\>\<0| H^{\otimes n}} = I - 2|\psi_0\>\<\psi_0|\,,
        \eea
    which can be recognized as Grover's ``inversion about mean'' iterate. This choice also produces an oscillating polynomial that monotonically increases the matrix element up to $d \approx \lceil \pi/(2\sin^{-1} a) \rceil$ steps~\cite{PhysRevLett.113.210501}. This may be seen by noting that each application of the signal rotation operator rotates through an angle $2\sin^{-1} a$, and the target state is at most an angle $\pi$ away, similar to the argument used to derive the query complexity of Grover's algorithm in ~\cite{nielsen2010quantum}. The amplitude amplification algorithm thus gives Grover's quantum search algorithm in this limit, and the number of oracle calls required is $d \approx \pi/(2 \sin^{-1} a) = \pi/(2\sin^{-1}(1/\sqrt{N})) = O(\sqrt{N})$, the known performance of Grover's algorithm. Once more, a similar result will be discussed in Section~\ref{sec:Search}, where the flow from qubitization to QSVT to fixed point amplitude amplification will be applied more generally.

%%%%%%%%%%%%%%%%%%%%%%%%%%%%%%%%%%%%%%%%%%%%%%%%%%%%%%%%%%%%%%%%%%%%%%%%%%%%
\subsection{Quantum Eigenvalue Transforms}\label{sec:eigenvalue_transforms}

Theorem~\ref{thm:aa} is known as amplitude amplification because it accomplishes a polynomial transform of one specific amplitude, namely the matrix element of $U$ at $|A_0\>\<B_0|$.  However, this polynomial transform can actually be performed over {\em an entire vector space}, not just a one-dimensional matrix element.  In particular, we now show that this technology can be used to polynomially transform all the {\em eigenvalues} of a Hamiltonian $\mathcal{H}$ that has been embedded into a block of a unitary matrix $U$.

Specifically, suppose we have the unitary
    \bea
        % U = \mattwoc{H}{\cdot}{\cdot}{\cdot}
        U = \kbordermatrix{\mbox{} &0 & 1 \\
          0 & \mathcal{H}     & \cdot \\ 
          1 & \cdot & \cdot 
        }
        \,,
        \label{eq:block_u_h}
    \eea
where $\mathcal{H}$ is some $N$$\times$$N$ (possibly very large) Hamiltonian operator, located in the upper left block of $U$, labeled by an index qubit being in the state $|0\>$ (and it is said that $\mathcal{H}$ has thus been ``qubitized''). We have included the indices $0$ and $1$ adjacent to the matrix representation of $U$ to schematically indicate how $\mathcal{H}$ is encoded in $U$. At the cost of some generality let us make some assumptions for simplicity of exposition.  In particular, assume the operator norm $\|\mathcal{H}\|$ is sufficiently small that this block embedding can be achieved, i.e. $\|\mathcal{H}\| \leq 1$ (if not, then one can instead embed some rescaled version of the Hamiltonian, $\mathcal{H}/\alpha$, but for now we will neglect this case for the sake of expository clarity). In particular, suppose that the eigenvectors and eigenvalues of $\mathcal{H}$ are given as
    \begin{equation}
        \mathcal{H} = \sum_\lambda \lambda |\lambda\>\<\lambda|
    \,.
    \end{equation}
Then, specializing to a specific block encoding for clarity, the missing blocks of $U$ may be completed as
    \bea
         U = \begin{bmatrix}
                \mathcal{H} & \sqrt{I-\mathcal{H}^2} \\
                \sqrt{I-\mathcal{H}^2} & -\mathcal{H} 
             \end{bmatrix}
         \,,
         \label{eq:U_block_encoding}
    \eea
where
    \begin{equation}
        \sqrt{I-\mathcal{H}^2} = \sum_\lambda \sqrt{1-\lambda^2} |\lambda\>\<\lambda|
        \,,
    \end{equation}
and it can be seen by inspection that $U^\dagger U = I$ as required for the unitarity of $U$, as long as the eigenvalues $\lambda$ are of reasonable scale. While a general block encoding need not take this specialized form, this choice of encoding will be sufficient for our illustrative purposes. Moreover, the treatment of general block encodings is presented in~\cite{Gily_n_2019, Low_2019}, wherein it is shown that a general block encoding takes a form similar to Eq.~(\ref{eq:U_block_encoding}) in a special basis related to the eigenbasis of $\mathcal{H}$.

Our specialized choice of block encoding means that $U$ may be expressed as a sum of two tensor products
    \be
    	U = Z\otimes \mathcal{H} + X \otimes \sqrt{I-\mathcal{H}^2}
    \,,
    \ee
and therefore acts as
    \bea
        U |0\>|\lambda\> &=& \lambda |0\>|\lambda\> + \sqrt{1-\lambda^2} |1\>|\lambda\> \\
        U |1\>|\lambda\> &=& -\lambda |1\>|\lambda\> + \sqrt{1-\lambda^2} |0\>|\lambda\>
        \,,
    \eea
which indicates that $U$ contains a Bloch sphere (i.e. a qubit basis) for each eigenspace\footnote{Note that even for degenerate eigenspaces the linearity of the QSP sequence ensures qubitization.} of $\mathcal{H}$ corresponding to a certain eigenvalue. In particular, $U$ may be expressed as a sum over $N$ separate Bloch spheres:
    \begin{align}\label{eq:uham_bloch_spheres}
        \begin{aligned}
            U &= \sum_\lambda \mattwocb{\lambda}{\sqrt{1-\lambda^2}}{\sqrt{1-\lambda^2}}{-\lambda} \otimes |\lambda\>\<\lambda| 
            \\
            &= \sum_\lambda \lbL{ \sqrt{1-\lambda^2} X + \lambda Z }\rb  \otimes |\lambda\>\<\lambda| 
            \\
            &=: \sum_\lambda R(\lambda)  \otimes |\lambda\>\<\lambda| = \bigoplus_\lambda R(\lambda) ,
        \end{aligned}
    \end{align}
where $R(\lambda)$ is defined as the operator in square
brackets in the penultimate line above, and may be interpreted as a reflection and rotation about the Bloch sphere's $y$-axis, exactly the same as we found for the $R(a)$ operator that appeared in the amplitude amplification construction as per Eq.~(\ref{eq:ampamp_rot}). We thus have a form for $U$ which parallels that of the amplitude amplification scenario. However, unlike amplitude amplification, we do not have one-dimensional phase rotation operators $A_\phi$ and $B_\phi$, because we have $N$ Bloch spheres, and not just two!

Nevertheless, there are still distinct vector spaces in which the input and output of $\mathcal{H}$ exist: these are, respectively, the column space and row space of the matrix $\mathcal{H}$, within $U$. In the scenario of Eq.~(\ref{eq:block_u_h}), these vector spaces are defined by the projector $\Pi := |0\>\<0|$ acting on the auxiliary qubit. Generalizing the way amplitude amplification employs the phase shift $A_\phi$ of
Eq.~(\ref{eq:aa_aphi}) acting on a single element $|A_0\>\<A_0|$, we may now define a projector controlled phase shift operation $\Pi_\phi$:
    \bea
    	\Pi_\phi := e^{i 2 \phi \Pi}
        \,,
    \eea
which imparts a phase of $e^{i2\phi}$ to the {\em entire subspace} determined by the projector $\Pi$. Note that if we want to be more precise, we may instead define this operation as
    \bea
    	\Pi_\phi := e^{i\phi (2\Pi - I)}
    \eea
which is a proper unitary transform and acts as a $z$-rotation, like $S(\phi)$ from Eq.~(\ref{eq:qsp_rotation}), but these two definitions differ only by a global phase which may be neglected. From a quantum circuit standpoint, $\Pi_\phi$ may be realized by employing two instances of projector controlled-{\sc not} gates (which we will refer to as $\Pi$-controlled-{\sc not}, or $\text{C}_{\Pi}\text{NOT}$ for short)
around a single qubit $z$-rotation by angle $\phi$:

\begin{figure}[htbp]
    \centering
    \includegraphics[width=0.8\columnwidth]{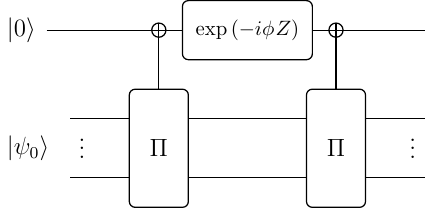}
    \caption{The circuit used to realize the projector controlled phase shift $\Pi_\phi$, also termed the ``phased iterate'', in analogy to the bare $\exp(i\phi Z)$ operation used in QSP, used as the signal processing rotation operator. We note that this circuit is valid for an arbitrary projector $\Pi$, although it simplifies to a single rotation for the simple projector $\Pi = |0\rangle \langle 0|$ that we've used in this section.}
    \label{fig:projector-phase-shift-quantum-circuit}
\end{figure}

\noindent
The main relevant observation is that on the subspace of each of the $N$ Bloch spheres of Eq.~(\ref{eq:uham_bloch_spheres}), $\Pi_\phi$ acts as a $z$-axis rotation,
    \bea
        \Pi_\phi = \sum_\lambda e^{i\phi Z} \otimes |\lambda\>\<\lambda| = \bigoplus_\lambda e^{i\phi Z}
    \eea
with $|\lambda\>$ as an eigenvector.

This picture of $N$ separate Bloch spheres evolving under $z$- and $y$-rotations provides the intuitive basis for the following theorem:
    \begin{theorem}[Eigenvalue transformation]
    Given a block encoding of Hamiltonian $\mathcal{H} = \sum_\lambda \lambda |\lambda\>\<\lambda|$ in a unitary matrix $U$,
        \bea
            U = \kbordermatrix{\mbox{} &\Pi &  \\
            \Pi & \mathcal{H}     & \cdot \\ 
            & \cdot & \cdot 
            }
            \,,
        \eea
    with the location of $\mathcal{H}$ determined by projector $\Pi$, and given the ability to perform $\Pi$-controlled NOT operations to realize projector controlled phase shift operations $\Pi_\phi$, then, for even $d$,
        \begin{align}\label{eq:thm_eigenvalue_transform}
            \begin{aligned}
                U_{\vec\phi} &= 
                \left[ \prod_{k=1}^{d/2} \Pi_{\phi_{2k-1}} U^\dagger \Pi_{\phi_{2k}}  U  \right] 
            &= \kbordermatrix{\mbox{} &\Pi &  \\
            \Pi & {\rm Poly}(\mathcal{H})     & \cdot \\ 
                & \cdot & \cdot 
          }
            \end{aligned}
        \end{align}
    where
        \be
            {\rm Poly}(\mathcal{H}) = \sum_\lambda {\rm Poly}(\lambda) |\lambda\>\<\lambda|
        \ee
    is a polynomial transform of the eigenvalues of $\mathcal{H}$. The polynomial is of degree at most $d$, and obeys the conditions on $P$ from Theorem~\ref{thm:qsp}. 
      
    Similarly, for odd $d$,
        \begin{align}\label{eq:thm_eigenvalue_transform_odd}
            \begin{aligned}
                U_{\vec\phi} &= 
                \Pi_{\phi_{1}}  U \left[ \prod_{k=1}^{(d-1)/2} \Pi_{\phi_{2k}} U^\dagger \Pi_{\phi_{2k+1}}  U  \right] 
                \\ &= \kbordermatrix{\mbox{} &\Pi &  \\
                \Pi & {\rm Poly}(\mathcal{H})     & \cdot \\ 
                & \cdot & \cdot 
                }
            \end{aligned}
        \end{align}
    where ${\rm Poly}(\mathcal{H})$ has an analogous interpretation. 
    \label{thm:eigenvalue_transform}
    \end{theorem}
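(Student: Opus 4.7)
The plan is to reduce the statement, via qubitization, to a direct sum of independent QSP sequences indexed by the eigenvalues of $\mathcal{H}$, and then to invoke Theorem~\ref{thm:qsp} on each summand. The point is that the decomposition already carried out in Eq.~(\ref{eq:uham_bloch_spheres}) writes $U$ as a direct sum of two-dimensional invariant subspaces $\mathrm{span}\{\Pi\,|\lambda\>_{\rm sys}, \Pi^\perp\,|\lambda\>_{\rm sys}\}$, on each of which $U$ acts as the $2\times 2$ reflection $R(\lambda) = \sqrt{1-\lambda^2}\,X + \lambda\,Z$. Because each of these subspaces is invariant under both $U$ and $\Pi_\phi$, the entire sequence $U_{\vec\phi}$ respects the decomposition, and we only need to analyze it eigenvalue by eigenvalue.

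On a fixed Bloch sphere labeled by $\lambda$, the first step is to check that the building blocks really reduce to the QSP building blocks. For $\Pi_\phi = e^{i\phi(2\Pi - I)}$ this is immediate: restricted to the $\lambda$-block the projector $\Pi$ acts as $|0\>\<0|$, so $\Pi_\phi$ restricts to $e^{i\phi Z}$, exactly $S(\phi)$ from Eq.~(\ref{eq:qsp_rotation}). For $U$ one uses that $R(\lambda)$ is Hermitian, hence $U^\dagger = U$ on each invariant subspace, and one converts $R(\lambda)$ to the QSP signal operator $W(\lambda)$ via the identity in Eq.~(\ref{eq:wa_to_ra}), absorbing the resulting $e^{i\pi/4\,Z}$ factors into neighboring $\Pi_\phi$ rotations. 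A small bookkeeping then shows that, on the $\lambda$-block, the product
\[
\prod_{k=1}^{d/2}\Pi_{\phi_{2k-1}} U^\dagger \Pi_{\phi_{2k}} U
\]
equals (up to an overall phase) a QSP sequence of the form $e^{i\phi'_0 Z}\prod_{k=1}^d W(\lambda)\,e^{i\phi'_k Z}$ for some phases $\{\phi'_k\}$ that are fixed affine combinations of $\{\phi_k\}$ independent of $\lambda$. Theorem~\ref{thm:qsp} then says that the $(0,0)$ matrix element of this sequence is a polynomial $P(\lambda)$ of degree at most $d$ with the stated parity and norm constraints, and the uniformity of the phases across $\lambda$-blocks means the \emph{same} polynomial $P$ is produced for every eigenvalue simultaneously. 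Summing the contributions over $\lambda$ yields
\[
\Pi\,U_{\vec\phi}\,\Pi \;=\; \sum_\lambda P(\lambda)\,|\lambda\>\<\lambda| \;=\; \mathrm{Poly}(\mathcal{H}),
\]
which is the upper-left block in the claim. The odd-$d$ case differs only in that a single extra factor of $U$ sits at the right of the sequence; this contributes one additional $R(\lambda)$ on each Bloch sphere, shifting the parity of the resulting QSP polynomial by one, which is exactly the parity asserted in Eq.~(\ref{eq:thm_eigenvalue_transform_odd}).

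The conceptual content is thus entirely contained in the qubitization decomposition and in Theorem~\ref{thm:qsp}; the main place where care is needed is the bookkeeping step that converts the $R(\lambda)$-and-$\Pi_\phi$ sequence into a genuine $W(\lambda)$-and-$e^{i\phi Z}$ sequence with a matched number of signal and processing factors, and that verifies the mapping of phases $\phi_k \mapsto \phi'_k$ is $\lambda$-independent. This is the only step I would expect to consume nontrivial effort; everything else follows mechanically from the two Bloch-sphere picture already developed in Sec.~\ref{sec:AA_and_search}.
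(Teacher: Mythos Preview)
Your proposal is correct and follows essentially the same route as the paper: decompose $U$ and $\Pi_\phi$ as a direct sum over the eigenspace Bloch spheres, use $R(\lambda)^\dagger = R(\lambda)$ to collapse $U^\dagger$ to $U$ on each block, convert $R(\lambda)$ to $W(\lambda)$ via Eq.~(\ref{eq:wa_to_ra}) while absorbing the extra $e^{i\pi/4\,Z}$ factors into $\lambda$-independent phase shifts, and then invoke Theorem~\ref{thm:qsp}. One small slip: in the odd-$d$ case the extra $\Pi_{\phi_1}U$ factor sits on the \emph{left} of the product, not the right, but this does not affect the argument.
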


Why does this work?  For the same reason that Theorem~{\ref{thm:aa}} does! To see this, let us use the specific encoding of Eq.~(\ref{eq:U_block_encoding}) to rewrite Eq.~(\ref{eq:thm_eigenvalue_transform}) as an action on $N$ separate Bloch spheres (the odd $d$ case in Eq.~(\ref{eq:thm_eigenvalue_transform_odd}) being analogous):
    \bea
        U_{\vec\phi} &=& \bigoplus_\lambda \left[ \prod_{k=1}^{d/2} e^{i\phi_{2k-1}Z} R(\lambda) e^{i\phi_{2k}Z} R(\lambda) \right],
        \label{eq:sigproc_tran_a}
    \eea
recognizing that we have chosen conventions for $R(\lambda)$ in Eq.~(\ref{eq:uham_bloch_spheres}) such that $R^\dagger(\lambda) = R(\lambda)$. This allows us to relabel the sum and the phases to put Eq.~(\ref{eq:sigproc_tran_a}) into a standard form of quantum signal processing (similar to how we obtained Eq.~(\ref{eq:aa_qsp_seq})):
    \bea
        U_{\vec\phi} &=& \bigoplus_\lambda \left[ e^{i\phi'_0 Z} \prod_{k=1}^{d} W(\lambda) e^{i\phi'_{k}Z} \right] 
        \label{eq:sigproc_tran_b}
        \,.
    \eea
By Theorem~\ref{thm:qsp}, the term in brackets is a matrix of polynomials in $\lambda$, verifying Theorem~\ref{thm:eigenvalue_transform}. Finally, although we have specialized to the specific block encoding of Eq.~(\ref{eq:U_block_encoding}), the proof for more general block encodings is contained in~\cite{Gily_n_2019, Low_2019}.

One immediately apparent utility of this eigenvalue transformation theorem is its usefulness in filtering eigenvalues. For example, we saw previously in Section~\ref{sec:qsp} that the BB1 sequence of phases can be used to increase sensitivity to the signal. In this eigenvalue transformation scenario, the signal is the eigenvalue $\lambda$, and a sequence of QSP phases can be used to selectively filter a range of eigenvalues, e.g. those below a threshold of $\lambda \approx 2\pi/3$ in the case of BB1 (for now this is an imprecise statement, but a similar question is a major concern of Section~\ref{sec:Threshold}). Measuring the projector $\Pi$ will show that the QSP sequence flips the index qubit with higher probability if the Hamiltonian has an eigenvalue larger than this threshold, versus when all the eigenvalues of $\mathcal{H}$ are below the threshold. This result also assumes the ability to prepare a state which has some overlap with the relevant eigenstates of $\mathcal{H}$, as input into $U$, but the specific amplitudes of the overlap are not crucial, because amplitude amplification can be employed to boost the signal.

%%%%%%%%%%%%%%%%%%%%%%%%%%%%%%%%%%%%%%%%%%%%%%%%%%%%%%%%%%%%%%%%%%%%%%%%%%%%%
\subsection{Quantum Singular Value Transforms}\label{sec:qsvt}

Theorem~\ref{thm:eigenvalue_transform} is known as the eigenvalue transformation because it accomplishes a polynomial transform of the eigenvalues of a matrix embedded within a larger unitary matrix. However, in general the embedded matrix need not have a well defined set of eigenvalues; for example, instead of being a square matrix, it could be rectangular. Remarkably, the same idea of transforming the embedded matrix using quantum signal processing can still apply. Specifically, we now show that QSP sequences can be used to polynomially transform all the {\em singular values} of a (possibly non-square) matrix $A$ which has been encoded into a block of a unitary matrix $U$.

Such a general matrix $A$ can be decomposed as 
    \bea
    	A = W_\Sigma\Sigma V^\dag_\Sigma
    \,,
    \eea
where $W_\Sigma$ and $V_\Sigma$ are unitary and $\Sigma$ is diagonal and contains along its diagonal the set of non-negative real numbers $\{ \sigma_k\}$, known as the singular values of $A$, of which there are $r=\text{rank}(A)$ nonzero values.  All matrices have such a {\em singular value decomposition} (SVD).

As $W_\Sigma$ and $V_\Sigma$ are unitary, their columns form orthonormal bases, which we denote by $\{|w_k\>\}$ and $\{|v_k\>\}$, respectively. $\{|w_k\>\}$ are the left singular vectors, which span the left singular vector space, and $\{|v_k\>\}$ are the right singular vectors, which span the right singular vector space. Using this notation, we may conveniently rewrite the singular value decomposition of $A$ in a form analogous to the eigenvalue decomposition: 
    \bea
    	A = \sum_{k=1}^r  \sigma_k |w_k\>\<v_k|
    \,.
    \eea

Now, suppose that we are given a unitary matrix $U$ such that $A$ is encoded in a block of $U$, i.e.
    \bea
        U = \kbordermatrix{\mbox{} & \Pi &  \\
          \tilde{\Pi} & A     & \cdot \\ 
                      & \cdot & \cdot 
        }
        \,,
        \label{eq:block_u_A}
    \eea
where $\tilde\Pi := \sum_k |w_k\>\<w_k|$ and $\Pi := \sum_k |v_k\>\<v_k|$ are projectors which locate $A$ within $U$, such that $A={\tilde{\Pi}} U \Pi$. We have again included $\Pi$ and $\tilde{\Pi}$ indices adjacent to the matrix representation of $U$ to schematically indicate how $A$ is encoded in $U$: $\Pi$ selects the columns and $\tilde{\Pi}$ the rows in which $A$ is encoded. Moreover, the projectors $\Pi$ and $\tilde{\Pi}$ also identify the left and right singular vector spaces, respectively. 

For pedagogical simplicity, let us assume for now that $A$ is a square matrix (this assumption is dropped in the next section). Again specializing to a specific block encoding, we may complete the missing blocks of the unitary $U$ by writing
    \bea
    \label{eq:A_block_encoding}
        U = \kbordermatrix{\mbox{} & 0 & 1 \\
          0 & A                 & \sqrt{I-A^2} \\
          1 & \sqrt{I-A^2}  & -A
          }
        \,,
    \eea
where the $0$ and $1$ are index qubit labels for the block matrices within $U$, and where $\sqrt{I-A^2}$ is formally defined in terms of the SVD of $A$, as
    \bea
    	\sqrt{I-A^2} := \sum_{k} \sqrt{1-\sigma_k^2} |w_k\>\<v_k|
        \,.
    \eea
It can then be verified straightforwardly that $U^\dagger U = I$, as long as the singular values $\{\sigma_k\}$ are less than or equal to $1$ (which may always be achieved by rescaling $A$ to some $A/\alpha$). Again, while a general block encoding need not take this specialized form, this choice will be sufficient for our illustrative purposes. And moreover, the treatment of general block encodings is presented in~\cite{Gily_n_2019}, wherein it is shown that a general block encoding takes a form similar to Eq.~(\ref{eq:A_block_encoding}) in special basis related to the left and right singular vectors of $A$. 

Just as with the analysis of the block-encoded Hamiltonian $\mathcal{H}$, there are multiple Bloch spheres in $U$. Specifically, note that
    \bea
        U |0\>|v_k\> &=&  \sigma_k |0\>|w_k\> + \sqrt{1-\sigma_k^2} |1\>|w_k\> \\
        U |1\>|v_k\> &=& -\sigma_k |1\>|w_k\> + \sqrt{1-\sigma_k^2} |0\>|w_k\>
        \,,
    \eea
so that $U$ may be expressed as a direct sum over a number of separate Bloch spheres equal to the rank of $A$:
    \begin{align}\label{eq:svd_bloch_spheres}
        \begin{aligned}
            U &= \sum_{k} \mattwocb{\sigma_k}{\sqrt{1-\sigma_k^2}}{\sqrt{1-\sigma_k^2}}{-\sigma_k} \otimes |w_k\>\<v_k| 
            \\
            &= \sum_{k} \lbL{ \sqrt{1-\sigma_k^2} X + \sigma_k Z }\rb  \otimes |w_k\>\<v_k| 
            \\
            &=: \sum_{k}  R(\sigma_k)  \otimes |w_k\>\<v_k| 
            \,,
        \end{aligned}
    \end{align}
where we have defined $R(\sigma_k)$ analogous to $R(\lambda)$ in Eq.~(\ref{eq:uham_bloch_spheres}). We now have a form for $U$ which directly parallels that of the eigenvalue transform scenario, and for exactly the same reasons, the following theorem holds: 

    \begin{theorem}[Singular value transformation]\label{thm:singular_value_transform}
        Given a block encoding of a matrix $A = \sum_k \sigma_k |w_k\>\<v_k|$ in a unitary matrix $U$,
            \bea
                U = \kbordermatrix{\mbox{} & \Pi &  \\
                  \tilde{\Pi} & A     & \cdot \\ 
                              & \cdot & \cdot 
                }
                \,,
            \eea
        with the location of $A$ determined by projectors $\Pi$ and $\tilde\Pi$, and given the ability to perform $\Pi$- and $\tilde\Pi$-controlled {\sc not} operations to realize projector controlled phase shift operations $\Pi_\phi$ and  $\tilde{\Pi}_\phi$ (defined as in Sec. \ref{sec:eigenvalue_transforms}), then, for odd $d$,
            \begin{align}\label{eq:thm_singular_value_transform_d_odd}
                \begin{aligned}
                      U_{\vec\phi} &= \tilde{\Pi}_{\phi_{1}}  U \left[ \prod_{k=1}^{(d-1)/2} \Pi_{\phi_{2k}} U^\dagger \tilde{\Pi}_{\phi_{2k+1}}  U  \right] \\ &= 
              \kbordermatrix{\mbox{} &\Pi &  \\
                \tilde{\Pi} & {\rm Poly}^{({\rm SV})}(A)     & \cdot \\ 
                    & \cdot & \cdot 
              },
                \end{aligned}
            \end{align}
        where $ {\rm Poly}^{({\rm SV})}(A)$ is defined for an odd polynomial as
        \be
            {\rm Poly}^{({\rm SV})}(A) := \sum_{k} {\rm Poly}(\sigma_k) |w_k\>\<v_k|,
        \ee
        which applies a polynomial transform to the singular values of $A$.  The polynomial is of degree at most $d$ and obeys the conditions of $P$ from Theorem~\ref{thm:qsp}.

        Similarly, for $d$ even,
            \begin{align}\label{eq:thm_singular_value_transform}
                \begin{aligned}
                      U_{\vec\phi} &= \left[ \prod_{k=1}^{d/2} \Pi_{\phi_{2k-1}} U^\dagger \tilde{\Pi}_{\phi_{2k}}  U  \right] \\ &= 
              \kbordermatrix{\mbox{} &\Pi &  \\
                \Pi & {\rm Poly}^{({\rm SV})}(A)     & \cdot \\ 
                    & \cdot & \cdot 
              },
                \end{aligned}
            \end{align}
        where $ {\rm Poly}^{({\rm SV})}(A)$ is defined for an even polynomial as
            \be
                {\rm Poly}^{({\rm SV})}(A) := \sum_{k} {\rm Poly}(\sigma_k) |v_k\>\<v_k|.
            \ee
        which is also a polynomial transform of the singular values of $A$, but with the modification that the input and output spaces are both the right singular vector space, spanned by $\{ |v_k \rangle \}$. Analogously, the polynomial is of degree at most $d$ and obeys the conditions of $P$ from Theorem~\ref{thm:qsp}.
        
    \end{theorem}

The main difference between this theorem and the eigenvalue transform of Theorem~\ref{thm:eigenvalue_transform} is that here, the Bloch sphere transformations of $U$ {\em also switch between the $\{|v_k\rangle\}$ and the $\{|w_k\rangle \}$ bases}, similar to how the sequence in Theorem~\ref{thm:aa} (Eq.~(\ref{eq:thm_aa_qsp})) flips between bases. Thus, we should now carefully keep track of which vector space the system is in at each stage of operation. In particular, in each of the terms in square brackets in Eqs.~(\ref{eq:thm_singular_value_transform}) and~(\ref{eq:thm_singular_value_transform_d_odd}), the $U$ on the right moves the system from the $\{|v_k\rangle \}$ basis into the $\{|w_k \rangle\}$ basis, and $\tilde{\Pi}_{\phi}$ then rotates the system around the $z$-axis of each left singular vector space's Bloch sphere. Similarly, the $U^\dagger$ on the left then moves the system back from the $\{| w_k \rangle \}$ basis into the $\{|v_k \rangle \}$ basis, and finally $\tilde{\Pi}_{\phi}$ rotates the system around the $z$-axis of each right singular vector space's Bloch sphere. Therefore, the odd $d$ sequence starts in the right singular vector space and ends in the left singular vector space, such that ${\rm Poly}^{({\rm SV})}(A)$ is accessed with $\Pi$ and $\tilde{\Pi}$, whereas the even $d$ sequence starts and ends in the right singular vector space, such that ${\rm Poly}^{({\rm SV})}(A)$ is accessed with just $\Pi$.

With this crucial modification, Theorem~\ref{thm:singular_value_transform} may be verified by following the same logic as that of Theorem~\ref{thm:eigenvalue_transform}, while the proof for general block encodings is presented in~\cite{Gily_n_2019}. Theorem~\ref{thm:singular_value_transform} is the essence of the quantum singular value transform (QSVT) algorithm, and we elaborate in the next sections upon what can be accomplished with it.

%%%%%%%%%%%%%%%%%%%%%%%%%%%%%%%%%%%%%%%%%%%%%%%%%%%%%%%%%%%%%%%%%%%%%%%%%%%%%
\subsection{Block Encodings}
\label{sec:Block_Encodings}

A key idea behind the generalization of QSP from single-qubit dynamics, to multi-qubit transforms is the use of a block-encoded operator, as we have just seen in the last two subsections (and as is discussed in the following sections).  While the embedding of one matrix inside a larger one is well known in mathematics as a {\em dilation}\cite{horn_johnson_1991}, as used here, block encoded operators are a quantum concept, and their construction comes with some caveats. We elaborate on these constructions in this subsection.

The starting point for many applications of QSP and QSVT is the availability of the desired signal as a linear operator, a matrix $A$, encoded as a block inside a larger unitary matrix $U$, as described in Eq.~(\ref{eq:block_u_A}). What is such a block encoded operator, physically? Some limited intuition comes from the case when $A$ is a unitary matrix, in which case $U$ is simply a controlled-$A$ operator, as depicted in Figure~\ref{fig:block_encoded_A}.

\begin{figure}[htbp]
    \centering
    \includegraphics[width=0.7\columnwidth]{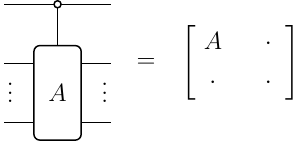}
    \caption{A simple block encoding quantum circuit, showing how such a block encoding of a unitary $A$ is notated. Observe that the empty circle on the control qubit (top line) indicates a $|0\rangle$ control, whereas a filled-in circle would correspond to a $|1\rangle$ control.}
    \label{fig:block_encoded_A}
\end{figure}

\begin{figure}[htbp]
    \centering
    \includegraphics[width=0.88\columnwidth]{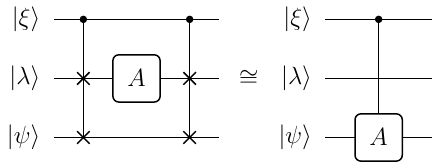}
    \caption{Construction of $\lvert 1\rangle$-controlled-$A$ (for $A$ a single qubit operator) given access to $\lvert \lambda \rangle$ a $(+1)$-eigenvector of $A$, $\lvert \xi \rangle$ the control qubit, and $\lvert \psi \rangle$ the target state. Note that $A$ may in general act on a large Hilbert space, in which case the controlled swap operator is suitably generalized. The swaps can be made $\lvert 0 \rangle$-controlled to emulate Figure~\ref{fig:block_encoded_A}. We note that this is a similar construction to the circuits employed in \cite{Huggins_2020} for matrix element measurement.}
    \label{fig:swap_controlled_A}
\end{figure}

When $A$ is a large operator, it can be challenging to envision how a controlled-$A$ operation may be physically realized. In particular, given $A$, how to construct a controlled-$A$ operation is generally not possible without further information. For example, if a $+1$ eigenstate $|\lambda\>$ of $A$ is provided, then a controlled-$A$ quantum circuit can be constructed using $|\lambda\>$ together with two controlled-SWAP gates; see Figure~\ref{fig:swap_controlled_A}. However, in general, much more work has to be done in achieving a block encoding, generally in ways specific to the relevant physical system \cite{Gily_n_2019, Huggins_2020}.

% Similar circuits can be constructed when provided with eigenstates with eigenvalues different than $+1$.  And of course, if one has a quantum circuit for $A$, then control wires can be added to pieces of that circuit, to realize a controlled-$A$.  But generally, a controlled-$A$ is more complex than $A$, and physically and typically involves more (and possibly higher order) interactions. For example, a CNOT gate in $A$ may need to be replaced with a controlled-CNOT gate, aka a Toffoli gate, to realize a controlled-$A$.  There are also known connections between block-encoded operators and other quantum information constructs, such as a QRAM, as discussed in Section~\ref{sec:Discussion}. 

Related to the issue of the block encoding is accessing the desired polynomial transform of the encoded block.  As described in Section~\ref{sec:qsp}, after performing a QSP/QSVT sequence, a measurement may be done in the signal basis to determine if the desired polynomial transform was realized.  In particular, in many of the algorithms, we wish to apply the transformed operator, ${\rm Poly}^{({\rm SV})}(A)$, to some state $|\psi\rangle$, which may be done using the block encoding, $U_{\vec{\phi}}$. However, this computation will only be successful if we apply to $|\psi\rangle$ the correct block (the one that encodes ${\rm Poly}^{({\rm SV})}(A)$), so we must project the final state into being in this block by performing a projective measurement with the projectors $\Pi$ and $\tilde{\Pi}$, which are assumed accessible. Consequently, the relevant projector for single-auxiliary-qubit QSVT is often $|+\rangle \langle + \rvert \otimes \Pi$, which both (1) isolates the real part of the transforming polynomial $P$ discussed in Section~\ref{sec:qsp}, and (2) determines if we have applied the desired block of the overall unitary. Since this projective measurement is a one time projection that is done at the end of the algorithm, its cost is not large, and moreover, the probability of the projective measurement yielding the desired state can be amplified using amplitude amplification or classical repetition.  Depending on the construction, this probability can also be either the desired output, or an algorithmically useful output.  For more about the signal basis and QSP conventions, see Appendix~\ref{sec:QSPConventions}.

%%%%%%%%%%%%%%%%%%%%%%%%%%%%%%%%%%%%%%%%%%%%%%%%%%%%%%%%%%%%%%%%%%%%%%%%%%%%%%%%%%%%%%%%%%%%%%%%%%%%%%%%%%%%

\section{Search by QSVT} \label{sec:Search}
    
    Now that we have an overall perspective on the basic theorems of quantum signal processing and quantum singular value transformations, let us return to explicitly constructing the three ``primordial'' quantum algorithms using QSVT. Specifically, for each algorithm we indicate (a) the signal rotation operator, and (b) the appropriate QSP polynomial. This is not just a problem of reconstruction, but will in certain cases lead to slight improvements in algorithmic performance. We begin with the search algorithm.
    
    A ubiquitous problem that admits a quantum speedup is unstructured search, the goal of which is to determine a single marked element $m \in \{0, 1, \cdots, N - 1\} =: [N]$ from an unstructured database of size $N$. While the search problem can be solved in $\mathcal{O}(N)$ time classically, Grover's quantum algorithm requires only $\mathcal{O}(\sqrt{N})$ time, providing a provably optimal quadratic speedup. However, Grover's algorithm in its traditionally stated form suffers various shortcomings, including divergence from the proper solution if its underlying iterate is repeated too many times. Fortunately, by the parallels between this problem and amplitude amplification as discussed in Section~\ref{sec:AA_and_search}, it is straightforward to recover an improved version of unstructured search using QSVT.
    
    In quantum search, one is given access to a unitary operation $U$ whose application flags some marked element by a phase flip, i.e.,
        \begin{equation} \label{eq:oracle_action}
            U|j\rangle = (-1)^{\delta_{jm}}|j\rangle = \begin{cases}
                |j\rangle & |j\rangle \neq |m\rangle \\
                -|j\rangle & |j\rangle = |m\rangle
            \end{cases},
        \end{equation}
    where $m \in \{0, 1, \cdots, N - 1\}$ is the (single) marked element. 
    
    As is conventional in quantum search, we will take our initial state to be the uniform superposition over all $N$ states, which may be expressed as
    \begin{equation}\label{eq:initial_state}
            \lvert \psi_0 \rangle = \frac{1}{\sqrt{N}} \sum_j^N |j\rangle = \sqrt{\frac{N - 1}{N}}\,\lvert m^\perp\rangle + \frac{1}{\sqrt{N}}\,\lvert m \rangle,
    \end{equation}
    and is easily prepared by applying $n$ Hadamards to $|0\rangle^{\otimes n}$. With this choice of initialization, our goal is to map $|\psi_0\rangle$ to the marked state $|m\rangle$.
    
    In this context, how might we solve the search problem with QSVT? The standard method, which includes the construction of the Grover iterate, will not be necessary. Instead, we rephrase the problem. In the most general setting, we are given access to some initial state $|\psi_0\rangle$ such that its projection onto the desired final state $|m\rangle$ is nonzero. Denoting this projection by $\Pi$, we may express this condition as
        \begin{equation}
            \Pi\, V |\psi_0 \rangle = a |m\rangle,
        \end{equation}
    for some known unitary $V$ and $|a| > 0$. In our case, as illustrated in Eq.~(\ref{eq:initial_state}), we can simply take $V$ as the identity, and $a = 1/\sqrt{N}$ by our choice of initial state. In general for amplitude amplification problems, $V$ need not be trivial, and is often viewed as a preparation oracle.
    
    The search problem as framed above is precisely the statement of amplitude amplification as in Sec.\ref{sec:AA_and_search}, as we desire a circuit $U^\prime$ (which will depend on $U$) such that
        \begin{equation} \label{eq:amplitude_amplification}
            \Big\lVert\,
            |m\rangle\langle m| U^\prime |\psi_0 \rangle\langle \psi_0| 
            - 
            |m \rangle\langle\psi_0|
            \,\Big\rVert 
            < 
            \epsilon,
        \end{equation}
    for some small $\epsilon > 0$. For notational convenience we will refer to $|\psi_0\rangle\langle \psi_0|$ as $\Pi^\prime$. Note that this induces a block encoding of the scalar $a$, that is
        \begin{equation} \label{eq:search_block_encoding}
            \Pi\, V\, \Pi^\prime = \Pi\, \Pi^\prime = a |m\rangle\langle\psi_0|.
        \end{equation}
    We now have a problem goal stated in Eq.~(\ref{eq:amplitude_amplification}), and a block encoding of a single singular value $a = 1/\sqrt{N}$ stated in Eq.~(\ref{eq:search_block_encoding}). What remains to solve unstructured search with QSVT is a choice of QSVT polynomial, and an analysis of the minimum required QSVT sequence length.
    
    Fortunately, given that there is only one relevant singular value, $a$, and one relevant left (and right) singular vector, the entire problem is reduced to mapping $a$ as close to $1$ in magnitude as possible, as under this condition the circuit will induce, with high probability, the desired block-encoded map $|m\rangle\langle \psi_0|$. Luckily, there already exists a theorem which defines the conditions under which such transformations are possible: the major theorem of QSVT!
    
    The requirements to perform QSVT as desired include the application of $V$, $V^\dag$, $C_{\Pi}\text{NOT}$, $C_{\Pi^\prime}\text{NOT}$, and $e^{-i\phi Z}$ gates. Note that here we have been allowed to choose $V = I$, as all Grover oracle information is encoded in the projector-controlled-NOTs. Moreover, these projector-controlled operators are easy to construct, $C_{\Pi^\prime}\text{NOT}$ because $\Pi^\prime$ depends on a fixed, known state. Similarly, $C_{\Pi}\text{NOT}$ may be crafted by first creating a controlled-$U$, which is easily achieved, and conjugating the control qubit by Hadamard gates, the result of which is a $C_{\Pi}\text{NOT}$, where the control qubit now becomes the target of the projector-controlled-NOT.
    
    Having now achieved a suitable block encoding and the necessary conditions for QSVT, we need only choose an appropriate QSVT polynomial. As our goal is to map $a$ to a value close to $1$, a sensible choice for such a polynomial can be derived if we start with the simple sign function $\Theta(x-c)$:
        \begin{equation} \label{eq:sign_function}
            \Theta(x-c) =
            \begin{cases}
                -1 & x < c\\
                0 & x = c\\
                1 & x > c,
            \end{cases}
        \end{equation}
    As discussed in~\cite{low2017quantum, Gily_n_2019}, one can estimate the sign function with arbitrary precision by finding a polynomial approximation to $\text{erf}(k[x-c])$ for large enough $k$. In particular, one can efficiently compute a degree $\mathcal{O}\left(\frac{1}{\Delta}\log(\frac{1}{\epsilon})\right)$ odd polynomial $P^{\Theta}_{\epsilon,\Delta}(x-c)$, where $\epsilon \in \big( 0, \sqrt{2/e\pi} \big]$, such that
        \begin{enumerate}
            \item $|P^{\Theta}_{\epsilon,\Delta}(x-c)|\leq 1 \text{ for } x \in [-1,1]$.
            \item $|\Theta(x-c) - P^{\Theta}_{\epsilon,\Delta}(x-c)| \leq \epsilon$ \\ for $x \in \big[ -1,1 \big] \backslash \left(c-\frac{\Delta}{2}, \ c+\frac{\Delta}{2}\right)$.
        \end{enumerate}
    That is, $P^{\Theta}_{\epsilon,\Delta}(x-c)$ is bounded in magnitude by $1$ and $\epsilon$-approximates the sign function outside of the region $\left( c - \frac{\Delta}{2}, \ c + \frac{\Delta}{2} \right)$. On the other hand, within the range $\left( c - \frac{\Delta}{2}, \ c + \frac{\Delta}{2} \right)$, where $\Theta(x-c)$ suffers a discontinuity, $P^{\Theta}_{\epsilon,\Delta}(x-c)$ is not in general a good approximation to the sign function. In this region, $P^{\Theta}_{\epsilon,\Delta}(x-c)$ is actually an $\epsilon$-approximation to $\text{erf}\big(k[x-c] \big)$ where $k=\frac{\sqrt{2}}{\Delta} \log^{1/2}\left(2/[\pi \epsilon^2]\right)$ (see~\cite{low2017quantum} for the explicit construction of $P^{\Theta}_{\epsilon,\Delta}(x-c)$). For visual intuition, we illustrate this polynomial approximation below in Figure~\ref{fig:SignFunctionApproximation}.
    
    \begin{figure}[htpb]
        \centering
        \includegraphics[width=1.0\columnwidth]{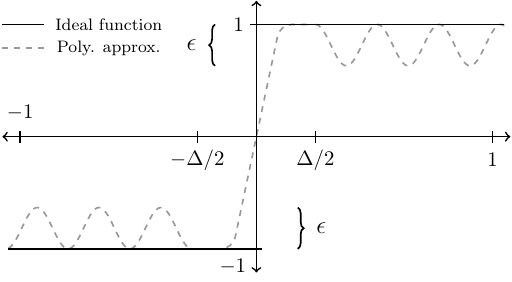}
        \caption{An illustration of $\Theta(x)$ (solid) and its polynomial approximation, $P^{\Theta}_{\epsilon,\Delta}(x)$ (dashed). Note how $P^{\Theta}_{\epsilon,\Delta}(x)$ stays within $\epsilon$ of the sign function outside of the region $\left(\frac{\Delta}{2},\ \frac{\Delta}{2}\right)$, while deviating from $\Theta(x)$ within this region. The shifted version of this function is constructed much the same way.}
        \label{fig:SignFunctionApproximation}
    \end{figure}
    
    Note that the polynomial $P^{\Theta}_{\epsilon,\Delta}(x)$ (i.e., with $c = 0$) has odd parity and is bounded in magnitude by $1$, and so it may be implemented via QSVT (i.e. it obeys the conditions on Poly($a$)$= \langle m | U_{\vec{\phi}} | \psi_0\rangle$ discussed in Section~\ref{sec:qsp} and Appendix \ref{sec:QSPConventions}). We will use this scheme to polynomially approximate the sign function, e.g., $\Theta^{(\text{SV})}(V) \approx (P^{\Theta}_{\epsilon,\Delta})^{(\text{SV})}(V)$, and subsequently apply it to the initial uniform superposition over $N$ quantum states. Then, upon applying the pre-determined QSVT sequence, this yields $|m\rangle$ in the register with probability $p=|P^{\Theta}_{\epsilon,\Delta}(a)|^2$. 

    We can now ask which values of $\Delta$ and $\epsilon$ should we choose for our intended behavior? Because we desire that $a = 1/\sqrt{N}$ be mapped to a value greater than $1 - \epsilon$, we require that $1/\sqrt{N} \geq \Delta/2$, or equivalently $\Delta = \mathcal{O}(1/\sqrt{N})$. Next, note that upon applying $(P^{\Theta}_{\epsilon,\Delta})^{(\text{SV})}(V)$ to the uniform superposition, the amplitude of the desired state, $ |m\rangle$, is at least $1-\epsilon$. Thus, upon measuring these qubits in the computational basis, the probability of the resultant state being $|m\rangle$ is at least $(1-\epsilon)^2 \geq 1 - 2\epsilon$. Therefore, if we want this procedure to succeed with probability at least $1-\delta$, we set $\epsilon = \delta/2$. 
    
    Under these choices, the polynomial $P^{\Theta}_{\epsilon,\Delta}(x)$ is of degree $d = \mathcal{O}\left(\frac{1}{\Delta}\log(1/\epsilon) \right) = \mathcal{O}(\sqrt{N} \log(1/\delta))$, such that the runtime of this algorithm is $\mathcal{O}(\sqrt{N} \log(1/\delta))$. Note that we again have $\mathcal{O}(\sqrt{N})$ scaling and thus maintain the quadratic speedup for the search problem, while circumventing Grover search's convergence issues. We summarize this algorithm for search by QSVT in Algorithm~\ref{alg:UnstructuredSearchQSVT}.
    
    Finally, we remark that this algorithm employs a simple block encoding (i.e. the identity) of $a=1/\sqrt{N}$ with nontrivial singular vectors (i.e. $|\psi_0 \rangle$ and $|m\rangle)$. In this setting, our desired transformation would not have been possible had we used the quantum eigenvalue transformation of Sec.~\ref{sec:eigenvalue_transforms}, as the eigenvalues of $V$ are necessarily $1$ and the input and output spaces identical. Hence, this instance illustrates the advantage of QSVT over the quantum eigenvalue transformation, as QSVT can achieve polynomial transformations between different input and output spaces. 
    
    \begin{algorithm}[] 
        \setstretch{1.05}
        \SetKwInOut{Comp}{Runtime}
        \SetKwInOut{Proc}{Procedure}
        \KwIn{Access to a controlled version of the oracle $U$ which bit-flips an auxiliary qubit when given an unknown target state $\lvert m \rangle \in \{\lvert j \rangle, j \in [N]\}$ and acts trivially otherwise, an error tolerance $\delta$, and a $\Delta \leq 2/\sqrt{N}$.}
        \KwOut{The flagged state $\lvert m \rangle$. }
        \Comp{$\mathcal{O} \left( \sqrt{N} \log{(1/\delta)}\right)$ queries to controlled-$U$, equivalently $\text{C}_{\Pi}\text{NOT}$, $\text{C}_{\Pi^\prime}\text{NOT}$, and $e^{-i\phi Z}$ gates and a single auxiliary qubit. Succeeds with probability at least $1 - \delta$.}
        \Proc{}
        ~Use QSVT to construct the operator $(P^{\Theta}_{\delta/2,\Delta})^{(\text{SV})}(V)$, where $V=I$ block encodes $a=1/\sqrt{N}$ in its $|m\rangle\langle\psi_0|$ element. \\
        ~Apply $(P^{\Theta}_{\delta/2,\Delta})^{(\text{SV})}(V)$ to the uniform superposition. With high probability, $|m\rangle$ remains in the register and can be determined by measuring in the computational basis. 
        \caption{Unstructured Search by QSVT}
        \label{alg:UnstructuredSearchQSVT}
    \end{algorithm}

\section{The Eigenvalue Threshold Problem by QSVT} \label{sec:Threshold}
    Beyond the determination of a desired state, as was the object in unstructured search covered in Algorithm~\ref{alg:UnstructuredSearchQSVT} in Section~\ref{sec:Search}, a separate major intent of many quantum algorithms is determining whether a certain property holds of a given quantum system.  Through the lens of QSVT, we show in this section that these two algorithmic goals are similar: here, rather than wishing to prepare a special state (e.g., the marked state for unstructured search), we now wish to encode some underlying unknown information about a quantum system simply into a state which we can then measure.
        
    This linkage between search and property testing is made apparent by considering the \emph{eigenvalue threshold problem}. In the setup of this problem, we are given access to a Hamiltonian $\mathcal{H}$ and would like to determine if $\mathcal{H}$ has any eigenvalues $\lambda$ that are less than some chosen threshold $\lambda_{\rm th}$. Realistically, this must be determined to within some precision $\Delta_\lambda$, so we are guaranteed that either the Hamiltonian has at least one eigenvalue $\lambda \leq \lambda_{\rm th} - \Delta_\lambda$ or that all of its eigenvalues obey $\lambda \geq \lambda_{\rm th} + \Delta_\lambda$. The eigenvalue threshold problem seeks to distinguish these cases. 
    
    To solve this problem, one is also provided a state $|\psi\rangle$ that has reasonable overlap with the low-energy subspace (spanned by the eigenvectors with eigenvalues $\lambda \leq \lambda_{\rm th} - \Delta_\lambda$) \emph{if it exists}. Mathematically, we represent this condition as
    \begin{equation}
        \| \Pi_{\leq \lambda_{\rm th} - \Delta_\lambda } \ket{\psi} \| \geq \zeta = \Omega(1),
    \end{equation}
    where $\Pi_{\leq \lambda_{\rm th} - \Delta_\lambda}$ is a projector onto the (possibly non-existent) low-eigenvalue subspace mentioned above. 
    
    To solve this problem with QSVT, we will need to construct a block encoding of the Hamiltonian $\mathcal{H}$. However, such a block encoding can only be constructed if $\|\mathcal{H}\| \leq 1$, so we instead must determine an $\alpha \geq \| \mathcal{H}\|$ and construct a unitary block encoding of $\mathcal{H}/\alpha$. In general, doing so may be nontrivial because determining such an $\alpha$ requires prior knowledge about $\mathcal{H}$. We discuss this drawback in Sec. \ref{sec:Discussion}. Fortunately however, for a large class of Hamiltonians, such as sparse Hamiltonians and linear combinations of unitaries, one can determine a sufficient $\alpha$ and construct a block encoding of $\mathcal{H}/\alpha$~\cite{Low_2019, Gily_n_2019}. With this rescaled block encoding, one can equivalently imagine that our goal is to determine if $\mathcal{H}/\alpha$ has any eigenvalues less than $\lambda_{\text{th}}/\alpha$ to within a precision $\Delta_\lambda/\alpha$.
    
    Moreover, as the eigenvalue threshold problem deals with eigenvalues, it is most straightforwardly approached with a quantum eigenvalue transform. However, in accordance with the theme of this paper, we may equivalently solve this problem with QSVT if the eigenvalues of $\mathcal{H}$ are positive, such that the singular values are equal to the eigenvalues. If this is not the case, we may instead use the block encoding of $\mathcal{H}/\alpha$ and the circuit in Figure~\ref{fig:H_PosDef_BlockEncodingCircuit} to construct a block encoding of the positive definite matrix $\frac{1}{2}\left(\mathcal{H}/\alpha+I\right)$. Under this modification, the new goal would be to determine if $\frac{1}{2}\left(\mathcal{H}/\alpha+I\right)$ has eigenvalues less than $\frac{1}{2} \left(\lambda_{\text{th}}/\alpha+1\right)$ to within a precision $\Delta_\lambda/\alpha$, which may be achieved with QSVT. In the remainder of this section, we will assume that this issue has been alleviated and specialize to the case of distinguishing the threshold $\lambda_{\rm th}/\alpha$. 
    
    \begin{figure}[htpb]
        \centering
        \includegraphics[width=0.55\columnwidth]{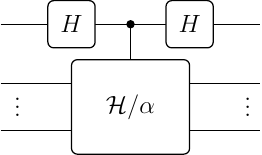}
        \caption{A simple circuit for block encoding $\frac{1}{2} \left(\mathcal{H}/\alpha+I\right)$, the projectors for which are, following QSVT conventions, $|0\rangle\langle0| \otimes \Pi$ and $|0 \rangle\langle 0| \otimes \tilde{\Pi}$.}
        \label{fig:H_PosDef_BlockEncodingCircuit}
    \end{figure}
    
    With those concerns out of the way, how might this problem be solved with QSVT? A fruitful approach, which we will follow, is to apply QSVT to the Hamiltonian with a function that filters out large eigenvalues. While one might suspect that we need a function that locates eigenvalues below the threshold $\lambda_{\rm th}/\alpha$, such as the eigenstate filtering function of~\cite{Lin_2020}, we can actually employ a much simpler construction. All we need is a function that behaves markedly different on eigenvalues greater than and less than $\lambda_{\text{th}}/\alpha$, and such a function may be constructed using the technology from Section~\ref{sec:Search}.
        
    A good choice for such a function is $\Theta(\lambda_{\text{th}}/\alpha-x)$, which maps eigenvalues less than $\lambda_{\text{th}}/\alpha$ to $1$ and larger eigenvalues to $-1$. One could imagine approximating this function by $P^{\Theta}_{\epsilon, \Delta}(\lambda_{\text{th}}/\alpha-x)$, but there is a catch: this polynomial does not have definite parity and thus does not satisfy the constraints of Theorem~\ref{thm:qsp}. Instead, we need to consider a symmetrized version with even parity, the \emph{eigenvalue threshold polynomial}:
        \begin{equation}
        \begin{split}
            P&^{\text{ET}}_{\epsilon,\Delta,\lambda_{\text{th}}/\alpha}(x)  := \\
            &\frac{1}{1+\frac{\epsilon}{4}} \left( -1+\frac{\epsilon}{4} +  P^{\Theta}_{\frac{\epsilon}{2},\Delta}\left(\tfrac{\lambda_{\text{th}}}{\alpha} - x \right) + P^{\Theta}_{\frac{\epsilon}{2},\Delta}\left(\tfrac{\lambda_{\text{th}}}{\alpha} + x \right) \right)
        \end{split}
        \end{equation}
    The factors of $\frac{\epsilon}{4}$ ensure that $P^{\text{ET}}_{\epsilon,\Delta,\lambda_{\text{th}}/\alpha}(x)$ has the following desired properties, which are easily seen via the triangle inequality:
        \begin{enumerate}
            \item $|P^{\text{ET}}_{\epsilon,\Delta,\lambda_{\text{th}}/\alpha}(x)|\leq 1 \text{ for } x \in [0,1]$.
            \item $|\Theta(\lambda_{\text{th}}/\alpha-x) - P^{\text{ET}}_{\epsilon,\Delta,\lambda_{\text{th}}/\alpha}(x)| \leq \epsilon$ \\ for $x \in \big[ 0,1 \big] \backslash \left(\frac{\lambda_{\text{th}}}{\alpha}-\frac{\Delta}{2}, \ \frac{\lambda_{\text{th}}}{\alpha}+\frac{\Delta}{2}\right)$.
        \end{enumerate}
    So, $P^{\text{ET}}_{\epsilon,\Delta,\lambda_{\text{th}}/\alpha}(x)$ behaves as $P^{\Theta}_{\epsilon,\Delta}(\lambda_{\text{th}}/\alpha-x)$ for $x\geq 0$ (the only relevant range for singular values). In addition, $P^{\text{ET}}_{\epsilon,\Delta,\lambda_{\text{th}}/\alpha}(x)$ has definite parity and magnitude bounded by $1$, so it may be implemented through QSVT (again, it obeys the conditions on Poly($a$)$= \langle + | U_{\vec{\phi}} | +\rangle$ discussed in Section~\ref{sec:qsp} and Appendix \ref{sec:QSPConventions}). Thus, this polynomial may be used to distinguish between eigenvalues less than $\lambda_{\text{th}}/\alpha - \Delta/2$ and greater than $\lambda_{\text{th}}/\alpha + \Delta/2$, which naturally indicates a precision $\Delta_\lambda/\alpha = \Delta/2$.
    
    To see how we may employ this technology, note that we may express $|\psi\rangle$ in the eigenbasis as 
    \begin{equation}
        |\psi \rangle = \sum_{\lambda} c_\lambda | \lambda \rangle = \sum_{\lambda \leq \lambda_{\rm th}} c_\lambda | \lambda \rangle + \sum_{\lambda > \lambda_{\rm th}} c_\lambda | \lambda \rangle,
    \end{equation}
    where $\sum_{\lambda \leq \lambda_{\rm th}} |c_\lambda|^2 \geq \zeta^2$ by the guarantee mentioned above. Next, consider using QSVT to develop $(P^{\text{ET}}_{\epsilon,\Delta})^{(\text{SV})}(\mathcal{H}/\alpha)$ and apply it to $|\psi\rangle$ controlled by an ancilla qubit in the state $|+\rangle$, as in the circuit of Figure~\ref{fig:ThresholdMeasurementCircuit}. After applying a Hadamard to this ancilla qubit, the final (unnormalized) state of the system is
    \begin{equation}
    \begin{split}
        \frac{1}{2} \Bigg( |0\rangle \sum_{\lambda}& c_\lambda (1+P^{\text{ET}}_{\epsilon,\Delta}(\lambda)) | \lambda \rangle \\
        & + |1\rangle \sum_{\lambda } c_\lambda (1-P^{\text{ET}}_{\epsilon,\Delta}(\lambda)) | \lambda \rangle \Bigg).
    \end{split}
    \end{equation}

    \begin{figure}[htpb]
        \centering
        \includegraphics[width=.7\columnwidth]{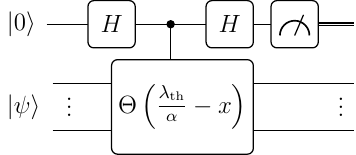}
        \caption{The circuit which, upon measuring the auxiliary qubit, can be used to solve the eigenvalue threshold problem with high probability as in Algorithm \ref{alg:EigenvalueThresholdQSVT}.}
        \label{fig:ThresholdMeasurementCircuit}
    \end{figure}
    
    We can solve the eigenvalue threshold problem by measuring the ancilla qubit in the computational basis. Observe that the probability of measuring the ancilla qubit in $|0\rangle$ is 
    \begin{equation}
    \begin{gathered}
        p(0) = \frac{1}{2} \frac{\sum |c_\lambda|^2 (1+P^{\text{ET}}_{\epsilon,\Delta,\lambda_{\text{th}}/\alpha}(\lambda))^2}{\sum |c_\lambda|^2  (1+(P^{\text{ET}}_{\epsilon,\Delta,\lambda_{\text{th}}/\alpha}(\lambda))^2)}
    \end{gathered}
    \end{equation}
    If all eigenvalues obey $\lambda > \lambda_{\rm th} + \Delta_\lambda$, then $-1 \leq P^{\text{ET}}_{\epsilon,\Delta,\lambda_{\text{th}}/\alpha}(\lambda) \leq -1+\epsilon$, and so
    \begin{equation}
    \begin{gathered}
        p(0 | \  \lambda > \lambda_{\rm th} + \Delta_\lambda) \leq \frac{\epsilon^2}{2(1+(-1+\epsilon)^2)} \leq \frac{1}{2}\epsilon^2.
    \end{gathered}
    \end{equation}
    This implies that it is unlikely to measure $|0\rangle$ if all eigenvalues obey $\lambda > \lambda_{\rm th} + \Delta_\lambda$. On the other hand, if there exists an eigenvalue that obeys $\lambda < \lambda_{\rm th} - \Delta_\lambda$, then $1-\epsilon \leq P^{\text{ET}}_{\epsilon,\Delta,\lambda_{\text{th}}/\alpha}(\lambda < \lambda_{\rm th} - \Delta_\lambda) \leq 1$ and $-1 \leq P^{\text{ET}}_{\epsilon,\Delta,\lambda_{\text{th}}/\alpha}(\lambda > \lambda_{\rm th} + \Delta_\lambda) \leq -1+\epsilon$, which implies that 
    \begin{equation}
    \begin{gathered}
        p(0 | \  \exists \lambda < \lambda_{\rm th} - \Delta_\lambda) \geq \frac{\zeta^2(2-\epsilon)^2}{2(1+1)} \geq \zeta^2(1-\epsilon).
    \end{gathered}
    \end{equation}
    This indicates that, for reasonably sized $\zeta$, we are more likely to measure $|0\rangle$ if there exists an eigenvalue $\lambda < \lambda_{\rm th} - \Delta_\lambda$.
    
    With this construction, the eigenvalue threshold problem is reduced to distinguishing between two Bernoulli distributions: one with mean $\leq \frac{1}{2}\epsilon^2$ and one with mean $\geq \zeta^2(1-\epsilon)$. To ensure that these means are well separated, we desire $\zeta^2(1-\epsilon) > \frac{1}{2}\epsilon^2$. Since $\epsilon \leq \sqrt{2/e\pi} \approx 0.48$ by the construction of $P^{\Theta}_{\epsilon, \Delta}(x)$ in~\cite{low2017quantum}, this may be enforced by choosing an $\epsilon < \zeta$ (i.e. $\epsilon = \mathcal{O}(\zeta)$); for instance, $\epsilon = \frac{1}{4}\zeta$ is a sufficient choice, which we will make use of in the following. We may then distinguish between these distributions with high probability by performing repeated measurements. This is quantified by the following lemma:
    
    \begin{lemma} \label{lemma:multiplicative_chernoff_bound}
        Suppose we have two Bernoulli distributions $X_0, X_1$ which return elements in $\{0, 1\}$ and have expected values $0 \leq a < b \leq 1$ respectively. Given a distribution $X$ that is either $X_0$ or $X_1$, we may determine $X$'s identity with confidence greater than $1 - \delta$ for $\delta > 0$ with $\mathcal{O}\left(\frac{b + a}{(b - a)^2}\log(1/\delta)\right)$ samples of $X$.
        
        \begin{proof}
            To distinguish between these distributions, let us take $n$ samples $X_i$, and guess that $X$ is the distribution whose mean is closest to the empirical mean of the samples. Suppose that $X = X_0$ and thus has mean $a$ (the case $X=X_1$ may be treated analogously). We're interested in computing the probability of mistaking this distribution for $X_1$, namely
            \begin{align}
                &\hphantom{{}={}}P(\text{error}) = P\left(\sum_{i=1}^n X_i \geq n(a+b)/2 \right) \nonumber\\ 
                &= P\left(\sum_{i=1}^n X_i \geq n[a + (b-a)/2] \right).
            \end{align}
            To bound this error probability, we may invoke the well-known multiplicative Chernoff bound, with mean $\mu=\mathbb{E}(\sum_{i=1}^n X_i) = na$ and relative difference from the mean $\varepsilon$,
                \begin{equation}
                    \mu + n(b - a)/2 = \mu(1 + \varepsilon)
                    \;\implies\;
                    \varepsilon = \frac{b - a}{2a}.
                \end{equation}
            Employing the multiplicative Chernoff bound, we may ensure that the error probability be less than some chosen $\delta$ as
                \begin{equation}
                    P\left(\sum_{i=1}^n X_i \geq na(1+\varepsilon) \right) \leq e^{-\varepsilon^2 n a/3} \leq \delta,
                \end{equation}
            which in turn requires $n \geq (3/a\varepsilon^2)\log(1/\delta)$, or equivalently, $n \geq \frac{12a}{(b - a)^2}\log(1/\delta) = \mathcal{O}\big(\frac{a}{(b - a)^2}\log(1/\delta)\big)$. 
            
            Performing a similar calculation for the hypothesis $X=X_1$ yields $n = \mathcal{O}\big(\frac{b}{(b - a)^2}\log(1/\delta)\big)$. Summing these results over equal priors over $X \in \{X_0, X_1\}$ provides the scaling $n \geq \frac{6(a+b)}{(b - a)^2}\log(1/\delta) = \mathcal{O}\big(\frac{a+b}{(b - a)^2}\log(1/\delta)\big)$, as claimed in the lemma statement.
        \end{proof}
    \end{lemma}
    
    Applying this lemma to our scenario in the eigenvalue threshold problem, we have $\frac{1}{b-a} \leq \frac{1}{\zeta^2(1-\epsilon)-\epsilon^2/2}=\mathcal{O}(1/\zeta^2)$. For instance, with our choice $\epsilon = \zeta/4$, we have $\frac{1}{b-a} < \frac{3}{2\zeta^2}$. We then have $\frac{a+b}{(b-a)^2} = \frac{(b-a)+2a}{(b-a)^2} = \mathcal{O}(1/\zeta^2+\zeta^2/\zeta^4) = \mathcal{O}(1/\zeta^2)$. Therefore, if we repeat the aforementioned measurement procedure $\mathcal{O}\left(\frac{1}{\zeta^2} \log(1/\delta)\right)$ times, we can correctly distinguish between the two distributions and solve the eigenvalue threshold problem with probability at least $1-\delta$. As each repetition performs QSVT with a polynomial of degree $\mathcal{O}\left(\frac{1}{\Delta} \log(1/\epsilon)\right) = \mathcal{O}\left(\frac{\alpha}{\Delta_\lambda} \log(1/\zeta)\right)$, this overall algorithm queries the Hamiltonian $\mathcal{O}\left(\frac{\alpha}{\Delta_\lambda \zeta^2}\log(1/\zeta)\log(1/\delta)\right)$ times. For consistency, we note that this $1/\zeta^2$ dependence is suboptimal, and can be improved to $1/\zeta$ using Lemma 7 of \cite{Lin_2020_2}. We summarize the eigenvalue threshold problem by QSVT in Algorithm~\ref{alg:EigenvalueThresholdQSVT}. Here, we note that the upper limit on the index of the for loop comes from the condition $n \geq \frac{6(a+b)}{(b - a)^2}\log(1/\delta)$ according to Lemma~\ref{lemma:multiplicative_chernoff_bound}, with the choice $\epsilon=\zeta/4$.

    \begin{algorithm}[htpb]
    \setstretch{1.05}
     \SetKwInOut{Comp}{Runtime}
     \SetKwInOut{Proc}{Procedure}
     \KwIn{Access to a Hamiltonian, $\mathcal{H}$, and an $\alpha \geq \|\mathcal{H}\|$.}
     \KwOut{Whether $\mathcal{H}$ has at least one eigenvalue $\lambda < \lambda_{\rm th} - \Delta_{\lambda}$, or if all eigenvalues obey $\lambda > \lambda_{\rm th} + \Delta_{\lambda}$.}
     \Comp{$\mathcal{O}\left(\frac{\alpha}{\Delta_\lambda \zeta^2}\log(1/\zeta)\log(1/\delta) \right)$ uses of a unitary block encoding of $\mathcal{H}/\alpha$ and its inverse, and $\text{C}_{\Pi}\text{NOT}$, $\text{C}_{\tilde{\Pi}}\text{NOT}$ and single qubit phase gates. Succeeds with probability at least $1 - \delta$.}
     \Proc{}
     $ $ Prepare a unitary block encoding of $\mathcal{H}/\alpha$. \\
     $ $ \For{$i=1$ to $\lceil \frac{315}{32\zeta^2} \log(1/\delta)\rceil$}{
      Use QSVT to apply $\left(P^{\text{ET}}_{\frac{\zeta}{4},\frac{2\Delta_\lambda}{\alpha},\frac{\lambda_{\text{th}}}{\alpha}} \right)^{(\text{SV})} (\mathcal{H}/\alpha)$ to $|\psi\rangle$, controlled by an ancilla qubit in the state $|+\rangle$ (see Fig. \ref{fig:ThresholdMeasurementCircuit}).\\
      Apply a Hadamard to the ancilla qubit and measure it in the computational basis.
     }
     $ $ If the fraction of $|0\rangle$'s measured is closer to $\zeta^2(1-\zeta/4)$ than to $\frac{\zeta^2}{32}$, then there exists an eigenvalue $\lambda \leq \lambda_{\rm th} - \Delta_\lambda$ with high probability. Otherwise, all eigenvalues obey $\lambda \geq \lambda_{\rm th} + \Delta_{\lambda}$ with high probability.
    \caption{The Eigenvalue Threshold Problem by QSVT}
     \label{alg:EigenvalueThresholdQSVT}
    \end{algorithm}

%%%%%%%%%%%%%%%%%%%%%%%%%%%%%%%%%%%%%%%%%%%%%%%%%%%%%%%%%%%%%%%%%%%%%%%%%%%%%%
\section{Phase Estimation by QSVT}
\label{sec:PhaseEstimation}

    Another salient problem that admits a quantum advantage is \emph{phase estimation}, which is as follows: given access to a unitary oracle $U$ and an eigenvector $|u\rangle$ such that $U|u\rangle = e^{2\pi i \varphi}|u\rangle $, determine $\varphi$. Here, we approach this problem by integrating a feedback procedure inspired by Kitaev's algorithm~\cite{kitaev2002classical, kitaev1995quantum} with the technology developed in solving the eigenvalue threshold problem in Sec. \ref{sec:Threshold}. We first provide a sketch of the algorithm, and then address a few caveats in its construction. Finally, we present a concrete formulation of QSVT-based phase estimation, analyze its capabilities, and demonstrate how the quantum Fourier transform naturally emerges from this construction.

\subsection{Intuition}\label{sec:PhaseEstimation_Motivation}
    To motivate iterative phase estimation by QSVT, recall that Kitaev's algorithm for phase estimation~\cite{kitaev2002classical, kitaev1995quantum} may be reinterpreted as a semi-classical feedback process~\cite{Parker_2000, Monz_2016}. One may view this feedback process as incorporating a classical parameter $\theta \in [0,1]$, initialized at $\theta = 0$, such that through a series of controlled-$U^j$ calls and a single-qubit measurements, each of which updates $\theta$, one obtains a value $\theta \approx \varphi$ at the end of the algorithm.
    
    Our approach to phase estimation by QSVT is heavily inspired by this feedback procedure. Just as in Kitaev's algorithm, we introduce a parameter $\theta \in [0,1]$, initialized at $0$, that is updated throughout the algorithm and by the end of the procedure, the final value of $\theta$ will approximate $\varphi$. We sketch our algorithm in the following section. 
    
    Before proceeding, we note that a very similar QSVT-based phase estimation algorithm was independently discovered in \cite{rall2021faster}. A primary goal of their work is maintaining coherence, and accordingly their paper presents a thorough performance analysis of their algorithm including the derivation of constant factors as well as precise use of the diamond norm. Ref.~\cite{rall2021faster} also extends their QSVT-based phase estimation algorithm to coherent energy and amplitude estimation.

    \subsubsection{Sketch of the Algorithm}\label{sec:PhaseEstimation_Sketch}
    To understand our algorithm, let us use the binary fraction representation of $\varphi$ and assume that $\varphi$ is an $m$-bit number: $\varphi = 0.\varphi_1 \varphi_2 ... \varphi_m$ for some finite $m$ (in this representation, $\varphi = \sum_j 2^{-j} \varphi_j$). We will show how to deal with the $m = \infty$ case later. 
    
    In the QSVT phase estimation algorithm, $\varphi$ is determined bit by bit through a series of $m$ iterations, beginning with the least significant bit, $\varphi_m$, and proceeding down to the most significant bit, $\varphi_1$. In each iteration, we construct a matrix whose singular values encode the least significant bits of $\varphi$ as well as the current value of $\theta$. By applying QSVT to this matrix, conditioned on an ancilla qubit, and subsequently measuring the ancilla qubit, we determine a single bit of $\varphi$ and update $\theta$ accordingly. If each iteration succeeds, then $\theta = \varphi$ at the end of the algorithm.
        
    To see how this may be achieved, consider the matrix $A_j(\theta) := \frac{1}{2} (I+e^{-2\pi i \theta} U^{2^j})$, which has a singular value 
        \begin{equation}\label{eq:singular_value_expression}
            \sigma^j := \big|\cos\big(\pi (2^j \varphi - \theta)\big) \big|,
        \end{equation}
    with corresponding right singular vector $|v^j\rangle = |u\rangle$ and left singular vector $|w^j\rangle = e^{i\pi(2^j \varphi-\theta)}|u\rangle$. Because $\text{exp}({\pi i 2^j \varphi}) = \text{exp}(2\pi i 2^{j-1} 0.\varphi_1 ... \varphi_m) = \text{exp}(\pi i 0.\varphi_{j+1}  ... \varphi_m) $, $\sigma^j$ may be re-expressed as 
        \begin{equation}\label{eq:singular_value_reexpression}
            \sigma^j = \big|\cos\big(\pi (0.\varphi_{j+1} \varphi_{j+2}...\varphi_m - \theta)\big) \big|.
        \end{equation}
        
    This singular value is used to extract the bits of $\varphi$ as follows. Consider the first iteration of the QSVT phase estimation algorithm, where we begin with $j=m-1$ (and will decrement $j$ down to $j=0$). Because $\theta$ is initialized at $0$, the singular value of interest is 
    \begin{equation} \label{eq:s_m}
        \sigma^{m-1} = \big|\cos\big(\pi (0.\varphi_{m})\big) \big| = 
        \begin{cases}
            1 & \varphi_m=0 \\
            0 & \varphi_m=1,
        \end{cases}
    \end{equation}
    from which $\varphi_m$ may be evaluated as $\varphi_m = \frac{1}{2}\left( 1 + \Theta\left(\frac{1}{\sqrt{2}} - \sigma^{m-1} \right) \right)$. This expression provides a pathway to extract $\varphi_m$, akin to the procedure used in the eigenvalue threshold problem: apply QSVT to $A_{m-1}(\theta)$ with the target function $\Theta\left(\frac{1}{\sqrt{2}}  -x \right)$\footnote{The sign function is discontinuous and cannot be implemented exactly through QSVT. We will deal with this caveat shortly, but for now we assume it can be implemented exactly.}, and employ the circuit in Figure~\ref{fig:PhaseEstimationCircuit_SingleIteration}, the measurement result of which is $\varphi_m$. After measurement, we update $\theta$ by setting the first bit of $\theta$ to $\theta_1 = \varphi_m$ (again using the binary fraction representation $\theta = 0.\theta_1\theta_2...$), such that $\theta = 0.\varphi_m$, which completes the $j=m-1$ iteration. 
    
    We then move to the next iteration, wherein we decrement $j=m-1 \leftarrow m-2$, map $\theta \leftarrow \theta /2 = 0.0\varphi_m$, and construct $A_j(\theta) = A_{m-2}(0.0\varphi_m)$, which has singular value
    \begin{equation}
    \begin{split}
        \sigma^{m-2} &= \big|\cos\big(\pi (0.\varphi_{m-1} \varphi_m - 0.0\varphi_m)\big) \big| \\
        &= \big|\cos\big(\pi (0.\varphi_{m-1})\big) \big|.
    \end{split}
    \end{equation}
    Note that the updated value of $\theta$ allowed for a convenient cancellation of $\varphi_m$. As this relation is identical to Eq.~(\ref{eq:s_m}), $\varphi_{m-1}$ may be determined by employing the QSVT procedure mentioned above and again executing the circuit in Fig~\ref{fig:PhaseEstimationCircuit_SingleIteration}. Upon obtaining the measurement result of this circuit, we set $\theta_1 = \varphi_{m-1}$. 
    
    It is clear that this procedure may be repeated to determine each bit of $\varphi$, such that at the end of the $j=0$ iteration, one obtains $\theta = \varphi$, which solves the phase estimation problem. Ultimately, this procedure succeeds because $\theta$ encodes the least significant bits of $\varphi$, from which the next bit of $\varphi$ may be extracted by cleverly transforming the singular values of $A_j(\theta)$. Accordingly, this algorithm may be seen as a binary search through the bits of $\varphi$, a key idea that we return to in Section~\ref{sec:EmergentQFT}.
    
    At this stage however, one may object to this algorithm sketch as it requires $m$ iterations, and it could be the case that $m$ is unknown or prohibitively large. This is not an issue, as we prove in Appendix~\ref{sec:PhaseEstimation_m_n_Proofs} that if we start at $j=n-1$ for some positive integer $n$, then we have (with a minor procedural modification needed for $n < m$):
    \begin{theorem} \label{thm:n_geq_m}
    If $n \geq m$, and one can exactly implement the sign function through QSVT, then at the end of this procedure, $\theta = \varphi$. 
    \end{theorem}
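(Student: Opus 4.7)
The plan is to prove this by downward induction on the iteration index $j$, maintaining the invariant that at the moment we apply QSVT in iteration $j$, the parameter $\theta$ equals $0.0\varphi_{j+2}\varphi_{j+3}\cdots\varphi_m$ in binary fractional notation (with the convention that $\varphi_k = 0$ for $k > m$). Under this invariant, Eq.~(\ref{eq:singular_value_reexpression}) gives
\[
    \sigma^j = \big|\cos\big(\pi(0.\varphi_{j+1}\varphi_{j+2}\cdots\varphi_m - 0.0\varphi_{j+2}\cdots\varphi_m)\big)\big| = \big|\cos(\pi \cdot 0.\varphi_{j+1})\big|,
\]
which equals $1$ when $\varphi_{j+1} = 0$ and $0$ when $\varphi_{j+1} = 1$. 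Applying QSVT with the threshold function $\Theta(1/\sqrt{2}-x)$ (as developed in Section~\ref{sec:Threshold}) thus returns $\varphi_{j+1}$ with certainty, after which setting $\theta_1 \gets \varphi_{j+1}$ produces $\theta = 0.\varphi_{j+1}\varphi_{j+2}\cdots\varphi_m$; the subsequent $\theta \mapsto \theta/2$ step at the start of iteration $j-1$ then restores the invariant with $j$ replaced by $j-1$.

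To bootstrap the induction, I handle the regime $j \geq m$ (which only arises when $n > m$) separately. Here $2^j\varphi$ is an integer, so Eq.~(\ref{eq:singular_value_expression}) reduces to $\sigma^j = |\cos(\pi\theta)|$. The initial value $\theta = 0$ yields $\sigma^j = 1$, the measurement returns $0$, and the update $\theta_1 \gets 0$ combined with the subsequent $\theta/2$ rescalings leaves $\theta = 0$ throughout these preliminary iterations. This matches the invariant, since for $j \geq m$ the string $\varphi_{j+2}\cdots\varphi_m$ is empty and $0.0 \equiv 0$. Thus by the time $j$ first reaches $m-1$, the invariant is satisfied and the main induction takes over.

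Terminating the induction at $j = 0$: after the iteration-$0$ update, $\theta = 0.\varphi_1 \varphi_2 \cdots \varphi_m = \varphi$, which is the desired conclusion. The main obstacle, such as it is, lies in correctly phrasing the invariant so that the binary cancellation $0.\varphi_{j+1}\cdots\varphi_m - 0.0\varphi_{j+2}\cdots\varphi_m = 0.\varphi_{j+1}$ is manifest at each step; one must carefully track the order of the $\theta/2$ rescaling (applied at the start of each new iteration) and the $\theta_1$ prepending (applied after each measurement). Once that bookkeeping is fixed, the argument is elementary. The algorithmic content resides entirely in the exact singular-value identity together with the idealized assumption that each QSVT threshold measurement succeeds, which is immediate in the noiseless limit since the exact values $\sigma^j \in \{0,1\}$ lie far from the classification boundary at $1/\sqrt{2}$.
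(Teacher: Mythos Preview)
Your proof is correct and follows essentially the same approach as the paper's: both argue by (downward) induction on the iteration index $j$, using the exact cancellation $0.\varphi_{j+1}\cdots\varphi_m - 0.0\varphi_{j+2}\cdots\varphi_m = 0.\varphi_{j+1}$ to reduce each step to the base case $\sigma^j\in\{0,1\}$. The only cosmetic difference is that the paper handles $n>m$ by zero-padding $\varphi$ to an $n$-bit string (your convention $\varphi_k=0$ for $k>m$), whereas you treat the $j\ge m$ iterations explicitly; these are equivalent.
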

    and 
    \begin{theorem} \label{thm:n_l_m}
    If $n < m$ (including the case $m=\infty$),  and one can exactly implement the sign function through QSVT, then at the end of this procedure, $|\theta - \varphi| \leq 2^{-n-1}$. 
    \end{theorem}
    \noindent As evidenced by these theorems, which are easily proven by induction, this procedure necessarily outputs an $n$-bit approximation to $\varphi$ and is the essence of the algorithm for phase estimation by QSVT. However, in spite of the simple appearance of this algorithm, a few issues must be addressed before presenting its concrete formulation.
    
    \begin{figure}[htpb]
        \centering
        \includegraphics[width=0.8\columnwidth]{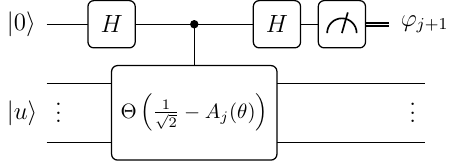}
        \caption{The quantum circuit used to evaluate $\varphi_{j+1}$ at each iteration of the phase estimation by QSVT procedure.}
        \label{fig:PhaseEstimationCircuit_SingleIteration}
    \end{figure}

    \subsubsection{Caveats}\label{sec:PhaseEstimation_Caveats}
    Here we address a few caveats in the above algorithm sketch. First, as described thus far, this procedure applies QSVT to the matrix $A_j(\theta)$, which requires a unitary block encoding of $A_j(\theta)$. Fortunately, this encoding can be easily constructed as the circuit in Figure~\ref{fig:A_j_BlockEncodingCircuit}, which we denote by $W_j(\theta)$. A simple calculation of Figure~\ref{fig:A_j_BlockEncodingCircuit} indicates that $W_j(\theta)$ has the desired form: 
    \begin{equation}\label{eq:A_j_BlockEncoding}
    \begin{split}
        W_j(\theta) :&= \frac{1}{2}
        \begin{pmatrix}
        I+e^{-2\pi i \theta }U^{2^j} & I-e^{-2\pi i \theta }U^{2^j} \\
        I-e^{-2\pi i \theta }U^{2^j} & I+e^{-2\pi i \theta }U^{2^j}
        \end{pmatrix} \\
        &= 
        \begin{pmatrix}
        A_j(\theta) & \cdot \\
        \cdot & \cdot
        \end{pmatrix},
    \end{split}    
    \end{equation}
    with corresponding projection operators $\Pi = \tilde{\Pi} = |0\rangle \langle 0| \otimes I$.
    
    \begin{figure}[htpb]
        \centering
        \includegraphics[width=0.5\columnwidth]{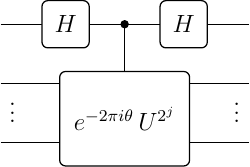}
        \caption{A quantum circuit that achieves a unitary block encoding of $A_j(\theta)$, which we denote by $W_j(\theta)$ and incorporate into Algorithm~\ref{alg:PhaseEstimationByQSVT}.}
        \label{fig:A_j_BlockEncodingCircuit}
    \end{figure}
    
    Another concern with the aforementioned procedure is that the sign function, $\Theta\left(\frac{1}{\sqrt{2}} - x\right)$, may seem over-complicated and unnecessary, as the same results could be attained with a simpler function, such as $1-x$. However, the sign function is crucial when one desires merely an $n$-bit approximation to $\varphi$, which is necessary when $m$ is unknown or prohibitively large, as is often the case in practice. In this situation, the use of the sign function (the exact sign function, rather than an approximation) will ultimately set $\theta$ equal to the $n$-bit number closest to $\varphi$. For example, if $0. \varphi_m \varphi_{m+1} ... = 0.011...$ at the first iteration, then despite the fact that $\varphi_{m} = 0$, the use of the sign function will set $\theta_1 = 1$ because $0.1$ is the best 1-bit approximation to $0.011...$. 
    
    Lastly, the most significant problem with this procedure as it currently stands is that the sign function is not a polynomial and therefore cannot be implemented exactly in QSVT. Instead, $\Theta\big(\frac{1}{\sqrt{2}} - x\big)$ must be approximated by a polynomial, a good choice for which is $P^{\Theta}_{\epsilon, \Delta} \big(\frac{1}{\sqrt{2}} -x\big)$ from Sec. \ref{sec:Search}. However, as in Sec. \ref{sec:Threshold}, in order to ensure that the target polynomial has definite parity, we must use a symmetrized polynomial, the \emph{phase estimation polynomial}:
    \begin{equation}\label{eq:SignFunction_Target}
    \begin{split}
            P&^{\text{PE}}_{\epsilon,\Delta}(x) := \\
            &\frac{1}{1+\frac{\epsilon}{4}} \left( -1+\frac{\epsilon}{4} +  P^{\Theta}_{\frac{\epsilon}{2},\Delta}\left(\tfrac{1}{\sqrt{2}} - x \right) + P^{\Theta}_{\frac{\epsilon}{2},\Delta}\left(\tfrac{1}{\sqrt{2}} + x \right) \right).
    \end{split}
    \end{equation}
    Like the eigenvalue threshold polynomial, $P^{\text{PE}}_{\epsilon,\Delta}(x)$ is even, is bounded in magnitude by $1$, and behaves as $P^{\Theta}_{\epsilon,\Delta}\left(\frac{1}{\sqrt{2}} - x\right)$ for $x \geq 0$, which is the only relevant range for singular values.

    This approximation comes with the potential for error, as an iteration of this procedure may now fail with a nonzero probability, where failure is defined as measuring an incorrect bit of an approximation to $\varphi$ (an approximation that suffers error $\leq 2^{-n-1}$), which will result in an inaccurate $\theta$ such that $|\theta-\varphi|\geq 2^{-n-1}$. Therefore, we reformulate our goal probabilistically as: obtain an $n$-bit approximation to $\varphi$ with probability at least $1-\delta$.
    
    The probability of failure is dictated by the values of $\Delta$ and $\epsilon$. In particular, for a nonzero $\Delta$, note that if $\sigma^j \in \left[\frac{1}{\sqrt{2}} - \frac{\Delta}{2}, \ \frac{1}{\sqrt{2}} + \frac{\Delta}{2}\right]$, where $P^{\text{PE}}_{\epsilon,\Delta}(x)$ is not a good approximation to the sign function, then an error may occur. Fortunately, if $\Delta$ is made sufficiently small, this error mode does not induce significant errors:
    \begin{theorem}\label{thm:Delta_restriction}
    If we choose $\Delta$ such that 
        \begin{equation}
        \begin{gathered}
            \Delta < 2\left(\cos(\frac{3\pi}{16}) - \frac{1}{\sqrt{2}} \right) \approx 0.25,
        \end{gathered}
        \end{equation}
    then $\sigma^j \in \left[\frac{1}{\sqrt{2}} - \frac{\Delta}{2}, \frac{1}{\sqrt{2}} + \frac{\Delta}{2} \right]$ is only possible at the $j=n-1$ iteration, such that an error due to $\Delta$ can only occur at the $j\textsuperscript{th}$ iteration. If an error is made at this iteration, then at the end of the algorithm, then $|\theta - \varphi| < \frac{1}{2^n}$, assuming no errors are made at later iterations.
    \end{theorem}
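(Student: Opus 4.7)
The plan is to convert the claim about $\sigma^j$ into one about $x_j := \{2^j\varphi\} - \tilde{\theta}_j$, for which $\sigma^j = |\cos(\pi x_j)|$, and to track the residual $x'_j := \{2^j\varphi\} - \theta_j$ throughout the iterations. The hypothesis $\Delta < 2(\cos(3\pi/16) - 1/\sqrt{2})$ immediately gives $\sigma_{\max} := 1/\sqrt{2} + \Delta/2 < \cos(3\pi/16)$, and combined with the product-to-sum identity $\cos(3\pi/16) + \cos(5\pi/16) = \sqrt{2}\cos(\pi/16) < \sqrt{2}$, it also forces $\sigma_{\min} := 1/\sqrt{2} - \Delta/2 > \cos(5\pi/16)$. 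Writing $x_{\max} := \arccos(\sigma_{\max})/\pi > 3/16$ and $x_{\min} := \arccos(\sigma_{\min})/\pi < 5/16$, the bad interval $[\sigma_{\min}, \sigma_{\max}]$ for $\sigma^j$ therefore pulls back to an interval for $x_j$ strictly contained in $(3/16, 5/16) \cup (11/16, 13/16)$, i.e., strictly away from $\{0, 1/2\}$ on the circle.

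The key structural step is the exact identity $\{2^j\varphi\} = (\{2^{j+1}\varphi\} + \varphi_{j+1})/2$, which combined with $\tilde{\theta}_j = \theta_{j+1}/2$ yields $x_j = x'_{j+1}/2 + \varphi_{j+1}/2$ and hence $x'_j = x'_{j+1}/2 + (\varphi_{j+1} - b_{j+1})/2$ as real numbers. A short case analysis on whether $|x'_{j+1}| \lessgtr 1/2$ (corresponding to whether $\theta_{j+1}$ approximates $\{2^{j+1}\varphi\}$ without or with circle wrap-around), together with the measurement rule that $b_{j+1}$ encodes the nearer of $\{0, 1/2\}$ to $x_j$ whenever $\sigma^j$ lies outside the bad region, shows that in every case the circle distance contracts by half: $\|x'_j\|_1 = \|x'_{j+1}\|_1/2$, where $\|\cdot\|_1$ denotes distance to the nearest integer. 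The $|x'_{j+1}| > 1/2$ case produces a ``carry'' with $b_{j+1} \neq \varphi_{j+1}$, but the contraction still holds.

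For the first assertion, I induct on decreasing $j$, propagating the invariant $\|x'_{j+1}\|_1 < 1/4$. The base case is iteration $j = n-1$ under the hypothesis of no $\Delta$-error: $\sigma^{n-1} > \sigma_{\max}$ forces $b_n = 0$ and $\|x'_{n-1}\|_1 < x_{\max} < 1/4$, while $\sigma^{n-1} < \sigma_{\min}$ forces $b_n = 1$ and $\|x'_{n-1}\|_1 < 1/2 - x_{\min} < 1/4$. For the inductive step, $\|x'_{j+1}\|_1 < 1/4$ together with $x_j = x'_{j+1}/2 + \varphi_{j+1}/2$ places $x_j$ within $1/8$ of $\{0, 1/2\}$ on the circle, so $\sigma^j \geq \cos(\pi/8) > \cos(3\pi/16) > \sigma_{\max}$ or $\sigma^j \leq \sin(\pi/8) < \cos(5\pi/16) < \sigma_{\min}$, strictly outside the bad region. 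Hence no $\Delta$-error occurs at iteration $j$, the contraction applies, and $\|x'_j\|_1 < 1/8 < 1/4$, closing the induction.

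For the second assertion, suppose $\sigma^{n-1}$ does lie in the bad region, so $|x_{n-1}| \in (x_{\max}, x_{\min})$ (modulo $1$). Even when the measurement returns the ``wrong'' half-integer, a direct computation gives $\|x'_{n-1}\|_1 < \max(x_{\min}, 1/2 - x_{\max}) < 5/16 < 3/8$. The same inductive step above goes through under the weaker invariant $\|x'_{j+1}\|_1 < 3/8$ (now $x_j$ is within $3/16$ of $\{0, 1/2\}$, giving $\sigma^j \geq \cos(3\pi/16) > \sigma_{\max}$ or $\sigma^j \leq \cos(5\pi/16) < \sigma_{\min}$, strictly outside by our choice of $\Delta$), so the contraction $\|x'_j\|_1 = \|x'_{j+1}\|_1/2$ runs through every remaining iteration without further error. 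Iterating from $j = n-1$ down to $j = 0$ yields $|\theta - \varphi| = \|x'_0\|_1 < 5/(16 \cdot 2^{n-1}) = 5/2^{n+3} < 1/2^n$, as claimed. The main technical obstacle is the trigonometric bookkeeping: confirming that the single inequality on $\Delta$ simultaneously delivers strict separations above ($\sigma_{\max} < \cos(3\pi/16)$) and below ($\sigma_{\min} > \cos(5\pi/16)$), which collapses to the product-to-sum identity noted at the outset; once that is in hand, the core algorithmic content is the clean geometric halving of the residual at every step.
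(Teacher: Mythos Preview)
Your argument is correct and takes a genuinely different route from the paper's. The paper proves the result via Lemma~\ref{lemma:Delta_restriction}, which is an explicit case-by-case analysis on the binary digits of $\varphi$: it first assumes the correct bit is obtained at $j=n-1$ and invokes the earlier rounding lemma (Lemma~\ref{lemma:n_l_m}) to bound $|0.\varphi_{[n-1:]} - \theta|$ at the next step, then separately enumerates the error subcases (e.g., $\tfrac14 < 0.\varphi_{[n:]} < \tfrac14 + \tfrac{1}{16}$ with wrong bit $\theta_1 = 0$, etc.), checking in each one that $\sigma^{n-2}$ lands outside the $\Delta$-region. Your proof instead abstracts the bit-level bookkeeping into a single residual $x'_j = \{2^j\varphi\} - \theta_j$ on the circle, proves the clean halving identity $\|x'_j\|_1 = \|x'_{j+1}\|_1/2$ whenever the measurement picks the nearer of $\{0,\tfrac12\}$, and closes both branches by inducting on a numerical invariant ($\|x'_{j+1}\|_1 < \tfrac14$ in the no-error case, $< \tfrac38$ in the error case). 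The paper's enumeration is more concrete and requires no new notation, but your geometric-contraction framing is cleaner, avoids the dependence on Lemma~\ref{lemma:n_l_m}, and makes transparent why the single bound on $\Delta$ suffices: the product-to-sum identity $\cos(3\pi/16) + \cos(5\pi/16) = \sqrt{2}\cos(\pi/16) < \sqrt{2}$ simultaneously yields both strict separations $\sigma_{\max} < \cos(3\pi/16)$ and $\sigma_{\min} > \cos(5\pi/16)$, which is exactly what your $\tfrac{3}{16}$-invariant needs. The final quantitative bound you obtain, $|\theta - \varphi| < 5/2^{n+3}$, is even slightly sharper than the paper's $1/2^n$.
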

    \noindent This Theorem, proven in Appendix~\ref{sec:Delta_mitigation}, indicates that the error caused by $\Delta$ is at most $2^{-n}$ when we obey this bound, which we assume is obeyed in the following.

    Next, a nonzero $\epsilon$ can also induce errors. Because $P^{\text{PE}}_{\epsilon,\Delta}(x)$ $\epsilon$-approximates $\Theta\left(\frac{1}{\sqrt{2}} - x\right)$ outside of $\left[\frac{1}{\sqrt{2}} - \frac{\Delta}{2}, \frac{1}{\sqrt{2}} + \frac{\Delta}{2} \right]$, the failure probability of the measurement in Figure~\ref{fig:PhaseEstimationCircuit_SingleIteration} goes as $\mathcal{O}(\epsilon^2)$, and this type of error may occur at any iteration. To study this quantitatively, consider iteration $j < n-1$, such that $\sigma^j \notin \left[\frac{1}{\sqrt{2}} - \frac{\Delta}{2}, \frac{1}{\sqrt{2}} + \frac{\Delta}{2} \right]$, and focus on the scenario $\sigma^j < \frac{1}{\sqrt{2}} - \frac{\Delta}{2}$, in which case $P^{\text{PE}}_{\epsilon,\Delta}(\sigma^j) \in [-1,-1+\epsilon]$, and $\theta_1 = 0$ is the ideal measurement (the case in which $\sigma^j > \frac{1}{\sqrt{2}} + \frac{\Delta}{2}$ is analogous). By evaluating the circuit in Figure~\ref{fig:PhaseEstimationCircuit_SingleIteration}, with $P^{\text{PE}}_{\epsilon,\Delta}(\sigma^j)$ in place of the sign function, we find that the final (unnormalized) state of the ancilla qubit is
    \begin{equation}
    \begin{gathered}
        \frac{1}{2} \Big( \left(1-P^{\text{PE}}_{\epsilon,\Delta}(\sigma^j)\right)|0\rangle + \left(1+P^{\text{PE}}_{\epsilon,\Delta}(\sigma^j) \right)|1\rangle \Big),
    \end{gathered}
    \end{equation}
    such that the probability of failure (measuring $|1\rangle$) at this iteration is 
    \begin{equation}
    \begin{gathered}
        p_{\text{fail}} =  \frac{\big(1+P^{\text{PE}}_{\epsilon,\Delta}(\sigma^j)\big)^2}{2\Big(1+\big(P^{\text{PE}}_{\epsilon,\Delta}(\sigma^j)\big)^2\Big)}
        \leq \frac{\epsilon^2}{2\big( 1+ (-1+\epsilon)^2 \big)} \leq \frac{1}{2}\epsilon^2
    \end{gathered}
    \end{equation}
    for $\epsilon \leq 1$.
    
    What $\epsilon$ is sufficient for our purposes? As our procedure consists of $n$ iterations, we desire that each iteration fails with probability no greater than $\delta /n$, such that the overall algorithm succeeds with probability at least $ 1-\delta$ by the union bound. From the above inequality, this can be enforced by the choice $\epsilon \leq \sqrt{2\delta/n}$. Therefore, with these conditions on $\Delta$ and $\epsilon$, this algorithm sketch succeeds with probability $\geq 1-\delta$ at determining a $\theta$ such that $|\theta - \varphi| < \frac{1}{2^n}$, as desired.

    \subsection{The Complete Algorithm}

    After supplying intuition and addressing potential issues, we now present the complete algorithm for phase estimation by QSVT in Algorithm~\ref{alg:PhaseEstimationByQSVT}. We depict this algorithm's circuit in Figure~\ref{fig:abstract_qsvt_circuit}, which illustrates is resemblance to the inverse quantum Fourier transform, a connection we elaborate on in Section~\ref{sec:EmergentQFT}. We further unpack the details of the phase estimation by QSVT circuit in Figure~\ref{fig:qsvt_phase_estimation}; here, the operator $R_{\ell}$ is defined as
    \begin{equation}\label{eq:R_l}
    \begin{split}
        R_{\ell} := \begin{bmatrix}
            1 & 0 \\
            0 & e^{2\pi i/2^\ell } \\
        \end{bmatrix},
    \end{split}
    \end{equation}
    which is a $z$-rotation, up to a global phase.
    
    \begin{algorithm}[htbp]
        \setstretch{1.05}
        \SetKwInOut{Comp}{Runtime}
        \SetKwInOut{Proc}{Procedure}
        \KwIn{An oracle that performs a controlled-$U^j$ operation, an eigenstate $|u\rangle$ of $U$ with eigenvalue $e^{2\pi i \varphi}$, and $n+1$ ancilla qubits (or 1 ancilla qubit that is reused $(n+1)$ times). Also, an $\epsilon \leq \sqrt{2\delta/(n+1)}$ and a $\Delta < 2 \left( \cos(\frac{3\pi}{16}) - \frac{1}{\sqrt{2}} \right) \approx 0.25$.}
        \KwOut{A $\theta$ such that $|\theta - \varphi| \leq \frac{1}{2^{n}}$.}
        \Comp{$\mathcal{O}\left( n \log(\frac{n}{\delta})\right)$ queries to the controlled-$U^j$ oracle. Succeeds with  with probability $\geq 1-\delta$.}
        \Proc{}
        $ $ Initialize $\theta = 0$.\\
        $ $ Apply a Hadamard gate to each auxiliary qubit.\\
        $ $ \For{$j=n-1$ down to $0$}{
        $\theta \leftarrow \theta/2$ (equivalently, $\theta = 0.\theta_1 \theta_2 ... \leftarrow 0.0\theta_1 \theta_2 ...$)\\
        Construct a block encoding of $A_j(\theta) := \frac{1}{2}(I+e^{-2\pi i \theta }U^{2^j})$ as per Figure~\ref{fig:A_j_BlockEncodingCircuit}.\\
        Use QSVT to apply the operator $(P^{\text{PE}}_{\epsilon,\Delta})^{(\text{SV})}(A_j(\theta))$ to $|u\rangle$, controlled on an ancilla qubit, where $P^{\text{PE}}_{\epsilon,\Delta}(x)$ is defined in Eq.~(\ref{eq:SignFunction_Target}). \\
        Apply a Hadamard gate to the ancilla qubit, and measure it in the computational basis.\\
        Set the first bit of $\theta$ (i.e. $\theta_1$) equal to the result of this measurement.\\
        }
        $ $ With $j=0$, repeat lines 5-8 of the above loop.\\
        $ $ Set the first bit of $\theta$ (i.e. $\theta_0$) equal to the result of the measurement.\\ 
        \caption{Phase Estimation by QSVT}
        \label{alg:PhaseEstimationByQSVT}
    \end{algorithm}

    \begin{figure*}
        \centering
        \includegraphics[width=\textwidth]{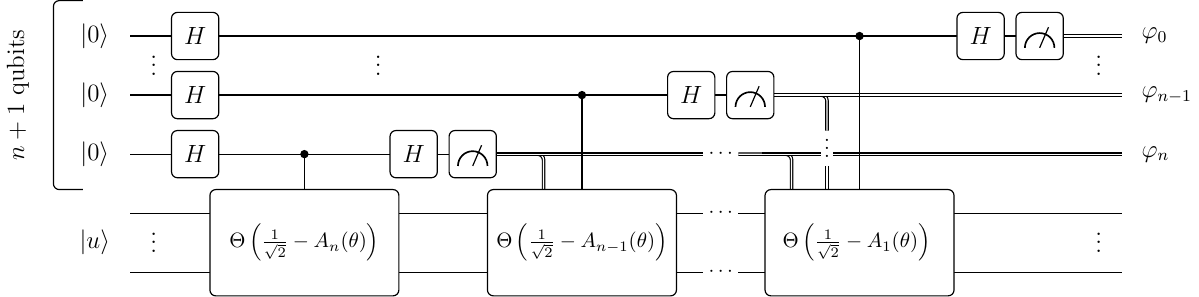}
        \caption{An Abstract overview of the circuit performing phase estimation through QSVT (Algorithm \ref{alg:PhaseEstimationByQSVT}).  The double lines indicate that measurement results are fed forward (via the parameter $\theta$) to all future controlled QSVT operations, where this adaptive process is unpacked in Figure~\ref{fig:qsvt_phase_estimation}. In essence the circuit systematically computes the least significant bit of an unknown quantum phase, adaptively bit shifts this phase, and repeats. Note the similarity of this operation to the inverse quantum Fourier transform, a connection we elaborate on in Section~\ref{sec:EmergentQFT}.}
        \label{fig:abstract_qsvt_circuit}
    \end{figure*}
    
    Note that we have included the additional lines 9-10 in the algorithm to account for the bit in the one's place of $\theta$ (i.e. $\theta_0$), which is needed when $n<m$ such that $\theta$ is an approximation to $\varphi$. Typically, $\theta_0 = 0$, but $\theta_0=1$ is possible in the scenario that $1.0$ is a good approximation to $\varphi$ (for example, rounding $0.11111$ to two binary decimals yields $1.00$). As a result of this additional iteration, we must now require $\epsilon \leq \sqrt{2\delta/(n+1)}$ to ensure that the algorithm succeeds with probability at least $1-\delta$.

    Also observe that Algorithm~\ref{alg:PhaseEstimationByQSVT} consists of $\mathcal{O}(n)$ iterations, each of which performs an instance of QSVT with a degree $\mathcal{O}\left( \frac{1}{\Delta} \log(\frac{1}{\epsilon} ) \right) = \mathcal{O}\left( \log(\frac{n}{\delta}) \right)$ degree polynomial, so phase estimation by QSVT queries the controlled-$U^j$ oracle $\mathcal{O}\big(n \log \big( \frac{n}{\delta} \big) \big)$ times. In addition, the multiplicative factor $1/\Delta$ is not particularly prohibitive because $\Delta = \mathcal{O}(1)$ need only satisfy $\Delta < 2\left(\cos(\frac{3\pi}{16}) - \frac{1}{\sqrt{2}} \right) \approx 0.25$.

    While this $\mathcal{O}(n\log n)$ query complexity may appear to provide a speedup over conventional phase estimation, whose gate count scales as $\mathcal{O}(n^2)$, we note that phase estimation by QSVT requires $\mathcal{O}(n \log n)$ queries to the controlled-$U^j$ oracle, whereas conventional phase estimation requires only $\mathcal{O}(n)$ queries~\cite{nielsen2010quantum}. Nonetheless, we believe that this logarithmic factor could be removed by integrating phase estimation by QSVT with a more sophisticated phase estimation protocol, such as the inference procedures of~\cite{svore2013faster, Wiebe_2016}, which already attain speedups over conventional phase estimation. Indeed, the QSVT-based phase estimation algorithm of \cite{rall2021faster} requires only $\mathcal{O}(n)$ queries to the oracle, which is achieved by using a procedure very similar to Algorithm~\ref{alg:PhaseEstimationByQSVT}, but varying the degree of the QSVT polynomial at each iteration (schematically, by increasing $\Delta$ at each iteration).

    Likewise, phase estimation by QSVT may be modified to be more applicable to specialized scenarios, such as if one is restricted in the polynomials that may be implemented through QSVT, or if one has prior knowledge about $\varphi$. For example, suppose that one cannot implement $\epsilon \leq \sqrt{2 \delta/(n+1)}$ with certainty, and so instead must choose $\epsilon = \mathcal{O}(1)$. To alleviate this difficult, one could repeat the measurement in line 7 $\mathcal{O}(\log( \frac{n}{\delta}))$ times and set $\theta_1$ equal to the majority vote of the measurement results, which will yield an accurate value of $\theta$ with probability $\geq 1-\delta$. Similarly, suppose that the constraint on $\Delta$ cannot be implemented. Then, again using repeated measurements, if $\sigma^{n-1}$ is very close to $\frac{1}{\sqrt{2}}$, the majority of the measurement results may not reflect the correct choice of $\theta_1$. But with high probability, the measurement results will be ambiguous, signaling that $\sigma^{n-1}$ is near $\frac{1}{\sqrt{2}}$. In this event, one could move to an even higher iteration, say some $j>n-1$, where is is likely that $\sigma^j$ is not near $\frac{1}{\sqrt{2}}$ and $\varphi_{j+1}$ is easily determined (and if $\sigma^j$ is again near $\frac{1}{\sqrt{2}}$, one could again move to a higher iteration $j' > j$). With this bit correctly determined, one may proceed through the rest of the iterations and attain a good estimate of $\varphi$.

    \subsection{Applications to Factoring and Beyond}
    Here, we discuss how phase estimation by QSVT may be applied to the factoring problem and used for robust phase estimation. 
        
    \subsubsection{Factoring}
    Phase estimation by QSVT may be straightforwardly applied to the factoring problem. Recall that in the factoring problem, the oracle $U$ behaves as $U|u\rangle = |xu (\text{mod } N)\rangle$ for some $x<N$ and has eigenvalues $e^{2\pi i s/r}$, where $r$ is the order of $U$ (i.e. $x^r = 1 (\text{mod} N)$) and $s \in \{ 0,1, ..., r-1\}$~\cite{nielsen2010quantum}. The goal is to determine some eigenphase $s/r$, from which $r$ may be determined by the continued fractions algorithm and a factor of $N$ may be extracted. In addition, in the factoring problem, one does not have direct access to an eigenstate $|u_s\rangle$, but instead can prepare a uniform superposition of eigenstates $\frac{1}{\sqrt{r}} \sum_s |u_s\rangle$.

    \onecolumngrid
    
    \begin{sidewaysfigure*}        \includegraphics[width=\textwidth]{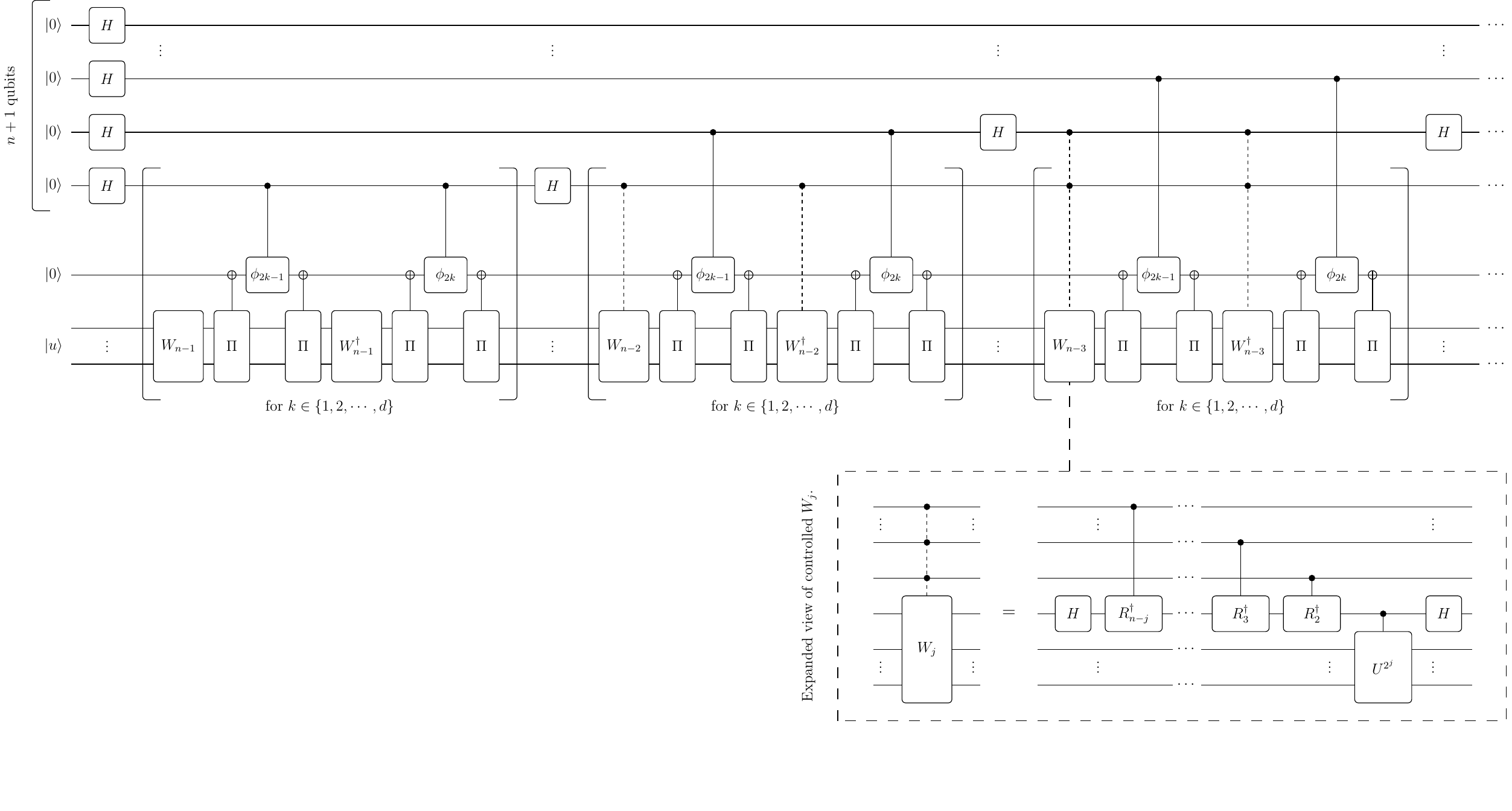}
        \caption{A quantum circuit for performing phase estimation using QSVT, following Algorithm~\ref{alg:PhaseEstimationByQSVT}. The dotted controlled operators, denoted by $W_j$ at step $j$, represent a composite gadget (unwrapped in the inset block) that depends on the states of $(n - j - 2)$ controlling qubits above it. Here controlled-$\phi_j$ gates represent controlled $Z$ rotations by angle $\phi_j$. Additionally, this figure depicts only the even $d$ case of Theorem~\ref{thm:singular_value_transform}. This circuit terminates by measuring all auxiliary qubits in the computational basis.
        Also depicted is an expanded view of a subroutine in the circuit depicted for phase estimation. Note that the $W_j$ operator is the unitary which block-encodes $A_j$ as depicted in Figure~\ref{fig:A_j_BlockEncodingCircuit}. Note also that each $R_{\ell}$ operator is defined as in Eq.~(\ref{eq:R_l}). The controlled $R_{\ell}$ operations serve to adaptively zero out the currently estimated phase to the $(n-j-2)$-th bit.}
        \label{fig:qsvt_phase_estimation}
    \end{sidewaysfigure*}
    
    \twocolumngrid
    \clearpage
    
    \noindent

    First, consider employing the phase estimation by QSVT algorithm, but replacing $|u\rangle$ with the superposition $\frac{1}{\sqrt{r}} \sum_s |u_s\rangle$. With this modification, the phase estimation algorithm begins with the superposition state $\frac{1}{\sqrt{r}} \sum_s |u_s\rangle$ and converges to a single eigenstate at the end of the algorithm. To see this, note that each measurement of an ancilla qubit will restrict the state to be a superposition over eigenstates whose eigenvalues are consistent with the measurement results thus far (i.e. the  eigenphases whose least significant bits agree with the measurement results). Because the eigenvalues are not degenerate, the state at the end of the algorithm will be an eigenstate, say $|u_s\rangle$, whose eigenphase is $\theta \approx s/r$, to which the continued fractions algorithm may be applied to determine $r$.
    
    How accurate must the approximation $\theta \approx s/r$ be? Note that if our estimate $\theta$ obeys
    \begin{equation}
        \left| \frac{s}{r} - \theta \right| \leq \frac{1}{2r^2},
    \end{equation}
    then we can determine the fraction $\frac{s}{r}$ by applying the continued fractions algorithm to $\theta$~\cite{nielsen2010quantum}. We can ensure that this condition is met by selecting $n = 2\log_2(N) + 1$, such that $\left| \frac{s}{r} - \theta \right| \leq 2^{-n} = \frac{1}{2N^2} \leq \frac{1}{2r^2}$, as desired. Thus, subject to these conditions, phase estimation by QSVT may be used to factor $N$ in time $ \mathcal{O}\big(n \log \big( \frac{n}{\delta} \big) \big) = \mathcal{O}\left(\log N \log \left( \frac{\log N}{\delta} \right) \right)$. 
        
    \subsubsection{Robust Phase Estimation}\label{sec:PhaseEstimation_Robust}
    Phase estimation by QSVT comes in handy when $U$ cannot be implemented reliably. For instance, suppose that $U^{2^j}$ can only be implemented approximately, such that the error in $U^{2^j}$ may be interpreted as an additive error in the quantity $0.\varphi_{j+1}\varphi_{j+2} ... \varphi_m$ of Eq.~(\ref{eq:singular_value_reexpression}) (i.e. the erroneous singular value is $\tilde{\sigma}^j = |\cos(\pi(0.\varphi_{j+1} \varphi_{j+2} ... + \varepsilon - \theta))|$, where $\varepsilon$ is the additive error). If $\Delta$ is made sufficiently small, then these errors are not large enough to induce an incorrect measurement result with high probability, and so these errors can be tolerated. 
    
    This robustness to error can be quantified via the analysis in Appendix~\ref{sec:Delta_mitigation}. Formally, if we choose a small enough $\Delta$ such that $\frac{1}{4} - \frac{1}{\pi} \arccos(\frac{1}{\sqrt{2}} +\frac{\Delta}{2}) < \gamma$ for some $\gamma > 0$, then we can tolerate additive errors in $0.\varphi_{j+1}\varphi_{j+2} ...$ of magnitude $< \frac{1}{8} - \frac{3\gamma}{2}$. In this sense, phase estimation by QSVT provides protection against noise of this form, which is a nice improvement over conventional phase estimation.

    \subsection{Emergent Quantum Fourier Transform}\label{sec:EmergentQFT}
    An integral component of Algorithm~\ref{alg:PhaseEstimationByQSVT} is the cascading of multiple QSVT sequences, using the results from one instance to control subsequent instances. In its simplest case, this cascading structure reduces to the celebrated quantum Fourier transform (QFT), and generalizes the QFT in more sophisticated constructions. We elaborate on this key connection in this subsection.

    To spot this connection, observe that circuit of Figure~\ref{fig:abstract_qsvt_circuit} resembles that of the (inverse) quantum Fourier transform, which becomes especially apparent in the limit of a length-one QSVT sequence. From a high level, this arises because Algorithm~\ref{alg:PhaseEstimationByQSVT} may be viewed as a binary search through each bit of $\varphi$, where a bit of $\varphi$ is determined at each iteration. Similarly, the (inverse) quantum Fourier transform may be viewed as a binary search through the bits defining an input state, where each qubit stores a bit of the input state at the end of the computation. Denoting the input state by $|x\rangle = |x_1 x_2 ... x_n\rangle$ where $x=0.x_1x_2 ... x_n$, recall that the action of the inverse quantum Fourier transform is to map 
    \begin{equation}
    \begin{split}
        &\frac{1}{2^{n/2}} \sum_{k=0}^{2^n-1} e^{2\pi i xk}|k\rangle = \frac{1}{2^{n/2}} \left(|0\rangle + e^{2\pi i 0.x_n} |1\rangle\right)\otimes \\ 
        &\left(|0\rangle + e^{2\pi i 0.x_{n-1}x_n} |1\rangle \right) \otimes...\left(|0\rangle + e^{2\pi i 0.x_1 x_2 ...x_n }|1\rangle \right) \ \mapsto \\ & \ \ |x_1 x_2 ... x_n\rangle = |x\rangle. 
    \end{split}
    \end{equation}
    Similar to Algorithm~\ref{alg:PhaseEstimationByQSVT}, this mapping is achieved by applying Hadamards and controlled rotations to extract a bit of $x$ from each qubit, leading to the correspondence seen here.

    Let us explicitly show how the QFT naturally emerges from this construction. For the sake of pedagogical clarity, we will switch to performing QSVT by projecting into the $|0\rangle \langle 0|$ block of the QSVT sequence, as in the $(W_x,S_z,\langle 0| \cdot |0\rangle)$-QSP convention of Appendix~\ref{sec:QSPConventions}, which will make more apparent the connection to the QFT. This change is permissible because the threshold function, an even extension of $\Theta\left( \frac{1}{\sqrt{2}}-x \right)$, can be well approximated by a polynomial in the more restricted class of polynomials of this convention, as presented in Appendix \ref{subsec:phase_est_func}. 
    
    Under these conditions, let us make use of the block encoding of Eq.~(\ref{eq:A_j_BlockEncoding}) (Figure~\ref{fig:A_j_BlockEncodingCircuit}), where $\Pi = \tilde{\Pi} = |0\rangle \langle 0| \otimes I$. Then, a single instance of the signal rotation operator (the block encoding) followed by the signal processing rotation operator (the controlled phase shift), which is iterated in QSVT, may be fully realized as in the circuit on the left side of Figure~\ref{fig:qsp_phaseestonestep}. In this depiction, the top qubit is used to implement the projector controlled phase shift as per Figure~\ref{fig:projector-phase-shift-quantum-circuit}, and the middle qubit is the block encoding qubit used to access the encoding of $A_j(\theta)$ as per Figure~\ref{fig:A_j_BlockEncodingCircuit}. Evaluating this circuit allows simplification to the right side of the figure, where we are now able to ignore the top qubit in favor of just the block encoding qubit, an identity that follows from the simple choice of projector. A block encoding qubit can also double as one of the ancilla qubits used to read out a bit of $\varphi$, as shown below.  Let us refer to this type of qubit as a phase readout qubit.  
    
    \begin{figure}[htbp]
        \centering
        \includegraphics[width=\columnwidth]{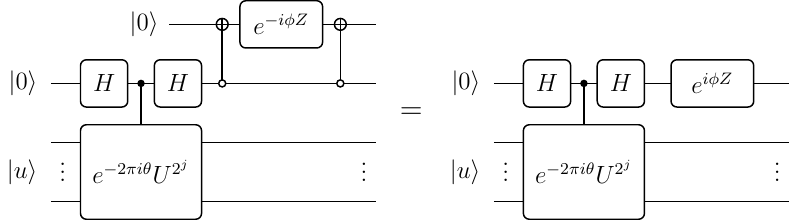}
        \caption{Simplification of QSP phase control for the case when the block-encoded operator has an accessible control qubit.}
        \label{fig:qsp_phaseestonestep}
    \end{figure}
    
    To make this identification explicit and draw a parallel with the QFT, let us also assume that $\varphi$ is an $m=3$ bit number. Then, employing a length $d$ QSVT sequence, the resulting phase estimation circuit may be schematically expressed as in Figure~\ref{fig:qsp_phaseest3}. Here, each phase readout qubit doubles as a block encoding qubit\footnote{Note that we do not include an additional $(m+1)^{\text{th}}$ phase readout qubit for the bit in the ones place of $\varphi$ because no approximation is needed if we know that $m=3$.}, which is crucial to the emergence of the QFT, and is indeed a valid restructuring. In particular, instead of measuring each phase readout qubit, updating $\theta$, and then applying a controlled $\theta$-rotation at the next iteration, as in Algorithm~\ref{alg:PhaseEstimationByQSVT}, here we directly apply rotations controlled by the phase readout qubit, making intermediate QSP measurement steps unnecessary.
    
    We implement this with the rotation operations $R_{\ell}$ defined in Eq.~(\ref{eq:R_l}). Similar to Figure~\ref{fig:qsvt_phase_estimation}, our construction employs controlled $R_{\ell}$'s to implement the following behavior: if a phase readout qubit is $|0\rangle$, then it does not contribute to $\theta$ and no phase shift is applied; alternatively, if the phase readout qubit is $|1\rangle$, then it contributes to $\theta$ and the corresponding phase shift is applied. Moreover, we arrange the $R_{\ell}$'s in a pattern that implicitly performs the $\theta \leftarrow \theta/2$ operation after each iteration; specifically, if ancilla qubit $k$ controls an $R_{\ell}$ at one iteration, then at the next iteration ancilla qubit $k$ controls an $R_{\ell+1}$. It may be easily verified that this procedure correctly encodes $A_j(\theta) = \frac{1}{2}(I+e^{-2\pi i\theta}U^{2^j})$ at each iteration and is entirely equivalent to the formalism of Algorithm \ref{alg:PhaseEstimationByQSVT}, thus justifying the recycling of phase readout qubits as block encoding qubits.

    \begin{figure}[htbp]
        \centering
        \includegraphics[width=\columnwidth]{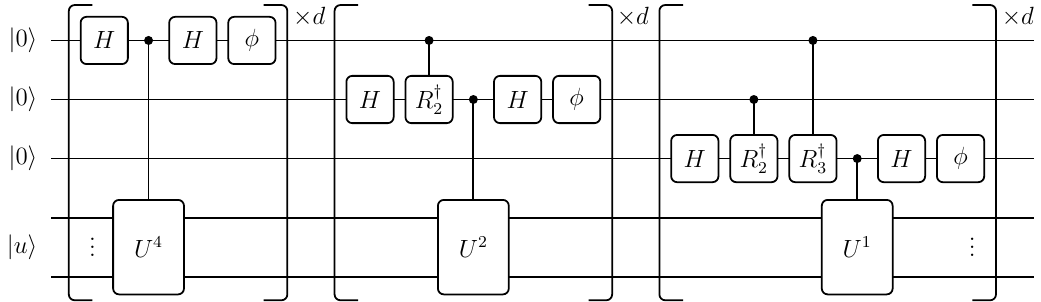}
        \caption{Illustration of three-qubit quantum phase estimation using QSP.}
        \label{fig:qsp_phaseest3}
    \end{figure}

    Further, let the QSVT sequence length be simply $d=1$, giving the quantum circuit depicted in Figure~\ref{fig:qsp_phaseest3_L1}. This allows us to further simplify the circuit, without changing any circuit elements, by observing that the controlled-$U^{2^j}$ operations commute with the $R_{\ell}$ operations, which are $z$-rotations.  Slide the first three Hadamard gates over to the far left, and gather all the remaining gates on the control qubits on the far right, dropping the signal rotation operations $\phi$, which are inconsequential at this level of simplification.
    
    \begin{figure}[htbp]
        \centering
        \includegraphics[width=\columnwidth]{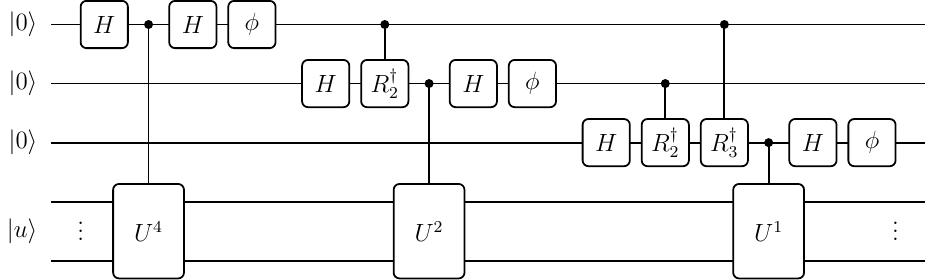}
        \caption{Illustration of three-qubit quantum phase estimation using QSP, in the simplified case when $d=1$.}
        \label{fig:qsp_phaseest3_L1}
    \end{figure}
    
  This systematic simplification results in the quantum circuit shown in Figure~\ref{fig:qsp_phaseest3_L1QFT}, where the inverse QFT circuit emerges naturally as the set of gates following the controlled-$U^{2^j}$ gates.  This is the standard quantum phase estimation circuit.
    
    \begin{figure}[htbp]
        \centering
        \includegraphics[width=\columnwidth]{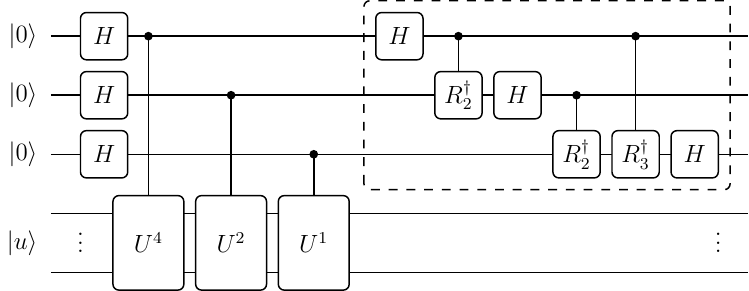}
        \caption{Illustration of three-qubit quantum phase estimation using QSP, in the simplified case when $d=1$, with gates slid along wires to highlight the three-qubit inverse QFT circuit which emerges (gates in dotted rectangular box).}
        \label{fig:qsp_phaseest3_L1QFT}
    \end{figure}
    
    As this pattern persists to higher orders, we see that quantum phase estimation emerges from cascaded QSVT (or QSP) employed to iteratively determine bits of the phase of an eigenphase. The QSVT construction of this transform allows for a richer variety of transformations, however, including a trade-off between the accuracy of each iterative step (by increasing $d$), the number of exponentiated powers $U$ to employ, the number of qubits to use in the transform, and more.

%%%%%%%%%%%%%%%%%%%%%%%%%%%%%%%%%%%%%%%%%%%%%%%%
\section{Function Evaluation Problems by QSVT} \label{sec:FuncEval}
    Another useful application of QSVT is to evaluate a function of a matrix, which we term \emph{Function Evaluation Problems}. Schematically, if $f(x)$ is the function of interest, such that we wish to evaluate $f(A)$, then we could imagine solving this problem by employing QSVT with a polynomial $P(x)$ that approximates $f(x)$. While these problems are generally more approachable with a quantum eigenvalue transform, they are still straightforward to solve with QSVT. 
    
    Here we summarize prominent function evaluation problems, most notably Hamiltonian simulation and matrix inversion. Our discussion summarizes results from~\cite{Gily_n_2019}, wherein the full details of these procedures can be found.

\subsection{Hamiltonian Simulation by QSVT}\label{sec:HamiltonianSimulation}
    
    A motivating goal of quantum computation is to simulate the time evolution of a state under a Hamiltonian, a problem known as Hamiltonian simulation. That is, for a Hamiltonian $\mathcal{H}$ and some time $t$, we would like to approximate the time evolution operator $e^{-i\mathcal{H}t}$, which is evidently a function evaluation problem with the function $f(x)=e^{-ixt}$.
    
    In the setup of this problem, we assume access to $\mathcal{H}$, of which we desire a unitary block encoding such that we may solve this problem with QSVT. However, as we discussed in Sec. \ref{sec:Threshold}, such a unitary block encoding is only realizable if $\|\mathcal{H}\| \leq 1$. In general then, we instead determine an $\alpha \geq \| \mathcal{H}\|$ and construct a unitary block encoding of $\mathcal{H}/\alpha$. Again, this requires some prior knowledge about $\mathcal{H}$, a drawback we elaborate on in Sec. \ref{sec:Discussion}, but fortunately such a block encoding can be achieved for a large class of Hamiltonians~\cite{Low_2019, Gily_n_2019}. 
    
    With this rescaled block encoding, one can equivalently imagine that our goal is to simulate the time evolution of a system under the rescaled Hamiltonian $H/\alpha$ for a time $t \alpha$. This equivalence holds because the corresponding time evolution operators are identical: $e^{-i(\mathcal{H}/\alpha) (\alpha t)} = e^{-i\mathcal{H}t}$.
    
    How might this problem be solved with QSVT? Naively, one may try to employ QSVT with a polynomial approximation to $e^{-ixt}$ (here, we view $t$ as a parameter, not a variable). However, because the exponential function does not have definite parity, this function does not satisfy the constraints on Poly($a$)$= \langle + | U_{\vec{\phi}} | +\rangle$ discussed in Section~\ref{sec:qsp} and Appendix \ref{sec:QSPConventions}). To circumvent this issue, one can instead apply QSVT twice - once with an even polynomial approximation to $\cos(xt)$, and once with an odd polynomial approximation to $\sin(xt)$, both of which have definite parities. Then, using the circuit illustrated in Figure~\ref{fig:ComplexExponentialCircuit}, one can sum together the results of these two QSVT executions to obtain $\cos^{(\text{SV})}(\mathcal{H}t) - i\sin^{(\text{SV})}(\mathcal{H}t) = e^{-i\mathcal{H}t}$, as desired. 
    
    However, note that the above relation only holds if the eigenvalues of $\mathcal{H}$ are positive, such that the singular values are equal to the eigenvalues. As we discussed in \ref{sec:Threshold}, if this is not the case, we may instead use the block encoding of $\frac{\mathcal{H}}{\alpha}$ and a circuit analogous to Figure~\ref{fig:A_j_BlockEncodingCircuit} to construct a block encoding of the positive definite matrix $\frac{1}{2}\left(\frac{\mathcal{H}}{\alpha}+I\right)$. Applying the aforementioned QSVT procedure to this matrix for a time $2\alpha t$ produces a block encoding of $e^{-i\mathcal{H}t}$ up to a global phase. In the remainder of this section, we assume that this issue has been alleviated. 
    
    \begin{figure}[htpb]
        \centering
        \includegraphics[width=0.9\columnwidth]{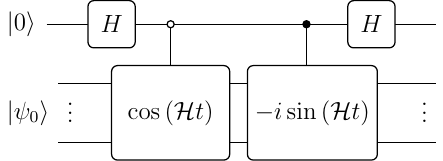}
        \caption{One quantum circuit that can be used to construct the time evolution operator $\cos(\mathcal{H}t) - i\sin(\mathcal{H}t) = e^{-i\mathcal{H}t}$ and apply it to $|\psi_0\rangle$, for use in Hamiltonian simulation as described in Algorithm ~\ref{alg:HamiltonianSimulationByQSVT}. Note that the correct evolution of the input state $| \psi_0 \rangle$ is achieved only upon post-selection of the auxiliary qubit in the $| 0\rangle$ state. One can achieve this with either repetition or fixed-point amplitude amplification, similar to projecting into the desired block of the QSVT sequence operator as discussed at the end of Section~\ref{sec:qsvt}.}
        \label{fig:ComplexExponentialCircuit}
    \end{figure}

    Returning back to the QSVT scheme, we note that in~\cite{Gily_n_2019}, Gily\'en et al. approximate the functions $\cos(xt)$ and $\sin(xt)$ by polynomials using the Jacobi-Anger expansion:
    \begin{align} \label{eq:Jacobi-Anger1}
        \cos(xt) &= J_0(t) + 2\sum_{k=1}^\infty (-1)^k J_{2k}(t) T_{2k}(x) \\
        \label{eq:Jacobi-Anger2}
        \sin(xt) &= 2\sum_{k=0}^\infty (-1)^k J_{2k+1}(t) T_{2k+1}(x)
    \end{align}
    where $J_i(x)$ is a Bessel function of order $i$, and $T_i(x)$ is a Chebyshev polynomial of order $i$. One can attain $\epsilon$-approximations to $\cos(xt)$ and $\sin(xt)$ by truncating these expressions at a sufficiently large index $k'$. The necessary truncation index $k'$ may be determined by a function $r(t,\epsilon)$, which is defined implicitly as
        \begin{equation}
            \epsilon = \left( \frac{|t|}{r(t, \epsilon)} \right)^{r(t, \epsilon)} \text{ s.t. }  r(t, \epsilon) \in (t, \infty),
        \end{equation}
    and scales asymptotically as 
        \begin{equation}
            r(t, \epsilon) = \Theta \Bigg( |t| + \frac{\log(1/\epsilon)}{\log\Big( e+\frac{\log(1/\epsilon)}{|t|} \Big)}\Bigg). 
        \end{equation}
    In particular, truncating Eqs.~(\ref{eq:Jacobi-Anger1}) and~(\ref{eq:Jacobi-Anger2}) at $k'=\big\lfloor \frac{1}{2} r\left(\frac{e}{2} |t|, \frac{5}{4} \epsilon\right) \big\rfloor$ yields $\epsilon$-approximations to $\cos(xt)$ and $\sin(xt)$, respectively, where $0 < \epsilon < 1/e$. Because $T_i(x)$ is a polynomial of degree $i$ with definite parity, these approximations are polynomials of degree $2k'$ and $2k'+1$, respectively, with the correct even/odd parity. Let us denote these polynomials by $P^{\text{cos}}_{\epsilon}(x;t)$ and $P^{\text{sin}}_{\epsilon}(x;t)$.

    Lastly, because cosine and sine are bounded in magnitude by $1$, these $\epsilon$-approximations only obey $|P^{\text{cos}}_{\epsilon}(x;t)|, |P^{\text{sin}}_{\epsilon}(x;t)| \leq 1+\epsilon$. However, we need the target polynomials to necessarily be bounded in magnitude by $1$ in order to be implemented through QSVT. As in~\cite{Low_2017}, we can fix this by rescaling the polynomials by a factor of $\frac{1}{1+\epsilon}$, at the expense of increasing the error of these approximations to $2 \epsilon$. This can be seen with the triangle inequality as $ \left|\frac{1}{1+\epsilon} P^{\text{cos}}_{\epsilon}(x;t) - \cos(xt) \right| \leq \frac{1}{1+\epsilon} \left( \left|P^{\text{cos}}_{\epsilon}(x;t) - \cos(xt)\right| + \left|\epsilon \cos(xt) \right| \right) \leq \frac{1}{1+\epsilon}(\epsilon + \epsilon) \leq 2\epsilon $, and similarly for $P^{\text{sin}}_{\epsilon}(x;t)$. 
    
    To determine the complexity of this Hamiltonian simulation algorithm, first recall that our effective goal is to simulate the rescaled Hamiltonian $\mathcal{H}/\alpha$ for time $\alpha t$. In addition, note that the truncations of the Jacobi-Anger expansion used in this procedure should be $\epsilon/4$-approximate such that, when rescaled by $\frac{1}{1+\epsilon/4}$, they are $\epsilon/2$-approximations to $\cos(xt)$ and $\sin(xt)$. With this choice, the sum of these approximations, which is the approximation to $e^{-ixt}$, is $\epsilon$-approximate by the triangle inequality. Incorporating these conditions, we see that this QSVT-based Hamiltonian simulation algorithm prepares an $\epsilon$-approximate block encoding of $e^{-i\mathcal{H}t}$ and queries $U$ a total number of times
    \begin{equation}
        \begin{split}
            2k' + &2k'+1 = 4 \cdot \Bigg\lfloor \frac{1}{2} r\left(\frac{e}{2} \alpha |t|, \frac{5}{4} \frac{\epsilon}{4} \right) \Bigg\rfloor +1 \\ 
            &= \Theta \Bigg( \alpha|t| + \frac{\log(1/\epsilon)}{\log\Big( e+\frac{\log(1/\epsilon)}{\alpha|t|} \Big)}\Bigg).
        \end{split}
    \end{equation}
    In comparing this query complexity with results quoted in the literature, $\alpha$ may be replaced with $\| \mathcal{H}\|$.
    
    This complexity has state-of-the-art scaling in $t$ and $\epsilon$ for Hamiltonian simulation: it is linear in $t$, logarithmic in $\epsilon$, and additive in these two terms. As such, it provides a significant improvement over other algorithms~\cite{Low_2017, Low_2019, Gily_n_2019}. We summarize Hamiltonian simulation by QSVT in Algorithm~\ref{alg:HamiltonianSimulationByQSVT}.

    $ $ \\
    \begin{algorithm}[htpb] 
    \setstretch{1.05}
     \SetKwInOut{Comp}{Runtime}
     \SetKwInOut{Proc}{Procedure}
     \KwIn{Access to a Hamiltonian $\mathcal{H}$, a desired time $t$, an error tolerance $\epsilon$, and an $\alpha \geq \| \mathcal{H}\|$.}
     \KwOut{A block encoded $\epsilon$-approximation of $e^{-i\mathcal{H}t}$.}
     \Comp{$\Theta \left( \alpha|t| + \frac{\log(1/\epsilon)}{\log\left( e+\frac{\log(1/\epsilon)}{\alpha|t|} \right)}\right)$ queries to (the encoding of) $\mathcal{H}/\alpha$.}
     \Proc{}
     $ $ Prepare a unitary block encoding of $\mathcal{H}/\alpha$.\\
     $ $ Apply QSVT to this encoding twice, using the polynomials $\frac{1}{1+\epsilon/4}P_{\epsilon/4}^{\text{cos}}(x;t)$ and $\frac{1}{1+\epsilon/4}P_{\epsilon/4}^{\text{sin}}(x;t)$, where $P_{\epsilon/4}^{\text{cos}}(x;t)$ and $P_{\epsilon/4}^{\text{sin}}(x;t)$ are obtained by truncating the series in Eqs.~(\ref{eq:Jacobi-Anger1}) and~(\ref{eq:Jacobi-Anger2}), respectively, at index $k'=\big\lfloor \frac{1}{2} r(\frac{e}{2} \alpha |t|, \frac{5}{4} \frac{\epsilon}{4}) \big\rfloor$.\\
     $ $ With the results of the above QSVT executions, which approximate $\cos^{(\text{SV})}(\mathcal{H}t)$ and $\sin^{(\text{SV})}(\mathcal{H}t)$, respectively, run the circuit in Figure~\ref{fig:ComplexExponentialCircuit}. \\  
     \caption{Hamiltonian Simulation by QSVT}
     \label{alg:HamiltonianSimulationByQSVT}
    \end{algorithm}
    $ $ \\

\subsection{Matrix Inversion by QSVT}
\label{sec:matrix_inversion}
    
    Another straightforward, yet widely applicable function evaluation problem is that of \emph{matrix inversion}. That is, given access to a square matrix $A$, one wishes to construct an approximation to $A^{-1}$. Harrow, Hassidim, and Lloyd presented a quantum algorithm for this problem in the case that $A$ is Hermitian~\cite{Harrow_2009}. In their eponymous algorithm, they prepare the state $A^{-1}| b \rangle$, which provides a quantum solution to the linear system $A|x\rangle = | b \rangle$.
    
    Let's now look at this problem through the lens of QSVT. Suppose that we have an $N \times N$ matrix $A$ with singular value decomposition $A = W_\Sigma \Sigma V^\dag_\Sigma$, where $\Sigma$ contains the singular values along its diagonal. As per the setup of HHL algorithm, we also assume that the singular values of $A$ obey $\sigma_i \in [\frac{1}{\kappa}, 1]$ for some finite condition number $\kappa \geq 1$ (if not, $A$ may be rescaled to obey this condition). As the singular values are nonzero, the inverse of $A$ exists and may be obtained as $A^{-1} = V_\Sigma \Sigma^{-1} W^\dag_\Sigma$, where $\Sigma^{-1}$ contains the reciprocals of the singular values along its diagonal. Because $A^\dag = V_\Sigma \Sigma W^\dag_\Sigma$, this can be re-expressed as $A^{-1} = f^{(\text{SV})}(A^\dag)$, where $f(x)=1/x$, indicating that matrix inversion is a function evaluation problem with $f(x) = 1/x$. 
    
    Upon this realization, it is clear how to apply QSVT to matrix inversion: find an odd polynomial $P(x)$ that approximates $f(x) = 1/x$ over the range of singular values of $A$, and employ QSVT to construct $P^{(\text{SV})}(A^\dag)$, which approximates $A^{-1}$. Finding a good polynomial $P(x)$ is tricky because of the discontinuity in $1/x$, but it indeed can be done. In addition, this procedure requires that one can construct a unitary block encoding of $A$, which is feasible because $\|A\| \leq 1$ as per the assumption that $\sigma_i \leq 1$. Such a block encoding can indeed be achieved for a variety of matrices relevant to physics~\cite{Gily_n_2019, Low_2019} (but again, we discuss some caveats of doing so in Sec. \ref{sec:Discussion}).
    
    Moreover, because we require that the polynomial $P(x)$ be bounded in magnitude by $1$ such that it can be implemented through QSVT, we cannot necessarily use $P(x)\approx 1/x$ as our target function since $1/\sigma_i \geq 1$ in general. To overcome this challenge, let us instead seek a polynomial approximation to a function that behaves as $\frac{1}{2\kappa} \frac{1}{x}$ over the range $[-1,1] \setminus [\frac{-1}{\kappa}, \frac{1}{\kappa}]$, which will invert each singular value and is bounded in magnitude by $\frac{1}{2}$ over this range (we use the multiplicative factor $\frac{1}{2\kappa}$ instead of $\frac{1}{\kappa}$ to avoid the need to rescale by $\frac{1}{1+\epsilon}$ when we $\epsilon$-approximate this function, which was done in Section~\ref{sec:HamiltonianSimulation}). This procedure will output an approximation of $\frac{1}{2 \kappa} A^{-1}$, and because $\kappa$ is \emph{a priori} known, this multiplicative factor is not prohibitive to calculations. However, due to this multiplicative factor, we now desire a $\frac{\epsilon}{2 \kappa}$-approximation to $\frac{1}{2 \kappa} A^{-1}$, from which we effectively attain an $\epsilon$-approximation to $A^{-1}$.  
    
    The appropriate polynomial for matrix inversion is thus an $\frac{\epsilon}{2 \kappa}$-approximation to $\frac{1}{2 \kappa} \frac{1}{x}$. In Appendix~\ref{sec:matrix_inversion_polynomial}, we demonstrate how to construct such a polynomial. While the construction is a bit involved, in essence it is a product of a polynomial approximation to $\frac{1}{x}$ over a restricted range (constructed as a sum of Chebyshev polynomials) and a polynomial approximation to a rectangular function (constructed as a sum of the sign function approximations, $P^{\Theta}_{\epsilon, \Delta}(x)$). 
    
    We term this polynomial the \textit{matrix inversion polynomial}, denoted by $P^{\text{MI}}_{\epsilon,\kappa}(x)$, and defer its rigorous definition to in Appendix~\ref{sec:matrix_inversion_polynomial}. In addition to being an $\frac{\epsilon}{2 \kappa}$-approximation to $\frac{1}{2 \kappa} \frac{1}{x}$, $P^{\text{MI}}_{\epsilon,\kappa}(x)$ has odd parity and is bounded in magnitude by $1$ for $x \in [-1,1]$. Hence, the matrix inversion polynomial may be implemented through QSVT.
    
    Moreover, we also show in Appendix~\ref{sec:matrix_inversion_polynomial} that $P^{\text{MI}}_{\epsilon,\kappa}(x)$  has degree 
    \begin{equation}
        d = \mathcal{O}(\kappa \log(\kappa/\epsilon)).
    \end{equation}
    As QSVT requires $\mathcal{O}(d)$ calls to the block encoding, we see that matrix inversion by QSVT has complexity $\mathcal{O}(\kappa \log(\kappa/\epsilon))$. This is an improvement over the conventional HHL algorithm, which has runtime $\mathcal{O}(\kappa^2 \log(N) / \epsilon)$. It is quite impressive that the QSVT algorithm provides an large improvement in the scaling with $\kappa/\epsilon$, although similar results have been achieved with non-QSVT methods \cite{Childs_2017}. In addition, the HHL algorithm uses a sparse Hamiltonian simulation subroutine with target Hamiltonian $A$, resulting in the $\log(N)$ term in its complexity, whereas the QSVT algorithm does not use such a subroutine and thus $N$ dependence is absent from its complexity (however, constructing the necessary block encoding of $A$ may scale with $N$). We summarize matrix inversion by QSVT in Algorithm~\ref{alg:MatrixInversionByQSVT}.
    
    Moreover, like the HHL algorithm, matrix inversion by QSVT may be used to solve the linear system of equations $A|x\rangle = |b\rangle$, by applying the block encoding of $\frac{1}{2\kappa} A^{-1}$ to $|b\rangle$, which yields an $\epsilon$-approximation to $A^{-1} |b\rangle$ upon rescaling by $2\kappa$. As discussed at the end of Section~\ref{sec:qsvt}, this procedure requires that we project into the desired block of the QSVT sequence operator, which may be performed with little overhead. 
    
    Lastly, we note that this result can be further extended. With some minor adjustments, this QSVT-based algorithm can be adapted to prepare the pseudo-inverse of a rectangular matrix, which is useful in various machine learning contexts~\cite{Gily_n_2019}.
    
    \begin{algorithm}[htpb] 
    \setstretch{1.05}
     \SetKwInOut{Comp}{Runtime}
     \SetKwInOut{Proc}{Procedure}
     \KwIn{Access to $A$, an error tolerance $\epsilon$, and a condition number $\kappa \geq 1/(\min_i \sigma_i ) $}
     \KwOut{A block encoded $\frac{\epsilon}{2\kappa}$-approximation of $\frac{\kappa}{2} A^{-1}$, which is effectively equivalent to an $\epsilon$-approximation to $A^{-1}$.}
     \Comp{$\mathcal{O}\left(\kappa \log(\frac{\kappa}{\epsilon}) \right)$ queries to (a block encoding of) $A^\dag$.}
     \Proc{}
     
    $ $ Prepare a unitary block encoding of $A^\dag$.\\
    $ $ Apply QSVT to this block encoding to compute $(P^{\text{MI}}_{\epsilon,\kappa})^{(\text{SV})}(A^\dag)$, where the polynomial $P^{\text{MI}}_{\epsilon,\kappa}(x)$ is defined in Eq.~(\ref{eq:1OverxApprox}) of Appendix~\ref{sec:matrix_inversion_polynomial}.\\
    \caption{Matrix Inversion by QSVT}
    \label{alg:MatrixInversionByQSVT}
    \end{algorithm}

\section{Discussion}\label{sec:Discussion}

    In this paper, we have presented how the quantum singular value transformation encapsulates the three major quantum algorithms. Paralleling~\cite{Gily_n_2019}, we have constructed QSVT-based algorithms for search, Hamiltonian simulation, and the quantum linear systems problem. Toward this end we have also derived QSVT-based algorithms for the eigenvalue threshold problem and phase estimation. Moreover, the utility of QSVT is not entirely enumerated here — further applications of QSVT to the quantum OR lemma, quantum machine learning, quantum walks, fractional query implementation, and Gibbs state preparation appear in the literature~\cite{Gily_n_2019}. 
    
    It is insightful that QSVT encompasses such a broad spectrum of problems. Effectively, QSVT provides a series of dials (i.e., a well-defined parameterization) that can be turned to transform from one algorithm to another. In addition, when there is sufficient structure inherent to the problem of interest, the resulting algorithm often becomes more efficient. Consequently, QSVT provides one lens through which to analyze the source of quantum advantage, and make concrete the somewhat vague tradeoff conjectured between problem structure and quantum algorithmic efficiency (while maintaining a significant gap between optimal classical and quantum performance). In aggregate, these constructions suggest that it is wise to continue to search for new quantum algorithms in the setting of QSVT.
    
    There is ample room for future research in this area. Notably, various quantum algorithms have not yet been constructed from QSVT-based subroutines, such as variational algorithms like the variational quantum eigensolver \cite{peruzzo2014variational} or the quantum approximate optimization algorithm \cite{farhi2014quantum}. It would be fascinating to see if QSVT can encompass, or perhaps even enhance, these hybrid quantum algorithms as well. This work also begets the question of how else one might tweak the parameters of QSVT to create novel algorithms, or extend previously known algorithms to novel noise settings. As QSVT is intuitive and flexible, there is likely a large class of problems that are amenable to analysis by QSVT and admit a significant quantum advantage. 
    
    Moreover, note that there is a significant caveat in the use of QSVT, arising from the requirement of block encodings. In a typical implementation of QSVT on quantum computer, we may imagine that the matrix to which we would like to apply a transform, say $A$, is provided in a quantum random-access memory (``QRAM''), from which we may straightforwardly construct a block encoding of $A/\|A\|_F$~\cite{Gily_n_2019}. One could then apply QSVT to this encoding with a runtime linear in $\|A\|_F$, similar to how the runtimes of the eigenvalue threshold algorithm and the Hamiltonian simulation algorithm of this paper are linear in $\alpha$. However, assuming that one has a classical computer with sampling and query access to $A$, as a classical analog of having $A$ loaded into QRAM, then one can acquire access to a singular value transformation of $A$ by executing a classical (not quantum!) algorithm~\cite{Chia_2020}; this classical algorithm impressively has a runtime polynomial in $\|A\|_F$. This polynomial runtime suggests that QRAM-based QSVT cannot always achieve an exponential speedup over classical algorithms, but can still attain a significant polynomial speedup. Although this challenge could be an Achilles' heel for application of QSVT to general and unstructured problems, clearly QSVT is still of interest for speeding up solution of problems with structure, as illustrated by quantum factoring.  Also, while every QRAM essentially provides a block-encoded matrix, there are ways to block-encode matrices which do not require a QRAM; this is good, given the fact that QRAM implementations generally face a number of practical challenges in their realization, e.g. requiring a number of ancillary qubits which grows linearly with the number of items stored (see, e.g. \cite{HannLeeGirvinJiang2021a} and references therein).
    
    It is also interesting to note that while physics has developed significant insight into the role of eigenvalues, appreciation of singular values has lagged.  For example, eigenvalues are the bread and butter of quantum systems, as energies for eigenstates of the Hamiltonian, and as stability points for stochastic systems.  In contrast, singular values apparently play few starring roles in physics.  One of the few examples arises in the study of entanglement, where the singular value decomposition is the underlying construct behind the Schmidt decomposition of a bipartite quantum state.  
    
    Why are there so few prominent roles for singular values in physics?  Maybe it is because physics is drawn to closed, Hamiltonian systems (think: square matrices), whereas singular value decompositions arise mostly in studies of subsystems (think: non-square matrices), where the input and output dimensions may be different.  As discussed in the introduction, such disparate dimensions also arises naturally in computation.  And indeed, singular value decompositions play a prominent role in modern computation, especially in machine learning, where it is the basis for principal component analysis, model reduction, collaborative filtering, and more.  As quantum information and computer science continue to grow into a unified field, perhaps it is not surprising that singular value decompositions -- and singular value transformations -- are emerging as a unifying bridge.
    
    \textit{Acknowledgements}: The numerical phase computation and algorithms work by AKT was supported by the U.S. Department of Energy, Office of Science, National Quantum Information Science Research Centers, Co-Design Center for Quantum Advantage under contract DE-SC0012704 and the Natural Sciences and Engineering Research Council of Canada (NSERC) [PGSD3-545841-2020].  The algorithm analysis work by ZMR was supported in part by the NSF EPIQC program.
    ILC was supported in part by the NSF Center for Ultracold Atoms.  ZMR and ILC were also supported in part by ARO contract W911NF-17-1-0433. 

%%%%%%%%%%%%%%%%%%%%%%%%%%%%%%%%%%%%%%%%%%%%%%%%%%%%%%%%%%%%%%%%%
% Appendices

\appendix
\section{QSP Conventions} \label{sec:QSPConventions}

A quantum signal processing construction may be entirely determined by four constituents:
\begin{enumerate}
\item The {\em signal operator} $W$ (sometimes also called the signal unitary)
\item The {\em phase angles} $\vec{\phi} = (\phi_0, \phi_1, \ldots, \phi_d)$
\item The {\em signal processing} operator $S(\phi)$, constructing using $\phi\in \vec{\phi}$
\item The {\em signal basis} $M$, in which the desired polynomial is obtained
\end{enumerate}
In its most basic form, QSP is performed by interleaving $W$ with $S(\phi)$ operations followed by a projective measurement in the basis $M$. 

The signal operator $W$ is signal-dependent and constant throughout the sequence;
the signal processing operator $S$ is parameterized by a sequence of phases $\vec\phi \in \mathbb{R}^{d+1}$, which are chosen based on the desired output function.
The exact form of $W$ and $S$, along with the choice of measurement basis $M$, determine the family of achievable output functions.  More specifically, $M$ may be not a basis for a complete vector space, but just the basis for a subspace identifying a specific desired polynomial output function.  For example, we may specify $M=\<+|\cdot |+\>$, when the measurement is to select the $|+\>$ outcome, with the QSP sequence starting off with the control qubit in the $|+\>$ state.  Thus, we may refer to any particular QSP construction as being a ($W$, $S$, $M$)-QSP convention.

In this section, we discuss common ($W$, $S$, $M$)-QSP conventions in the literature for $W$, $S$, and $M$, and present relationships between the QSP phase angles $\vec{\phi}$.  Specifically, we elaborate on the ``Wx'', reflection, and ``Wz'' conventions.

\subsection{Wx convention for QSP}

One common convention is to choose the signal operator $W$ to be an $x$-rotation in the Bloch sphere,
\be
W_x(\theta) := e^{i \frac{\theta}{2} X} = \mattwocb{a}{i \sqrt{1-a^2}}{i \sqrt{1-a^2}}{a}
\,,
\label{eq:wa_qsp_x}
\ee
where, compared with Equation~\ref{eq:wa_qsp}, we introduce the additional $x$ subscript for clarity. Under this convention, one also chooses the signal processing rotation to be a $z$-rotation, $S_z(\phi) = e^{i \phi Z}$.
Theorem~\ref{thm:qsp} describes the family of unitaries achievable in this convention.

Typically, we are not concerned with the achievable unitaries, but rather the achievable functions that can be computed in a subsystem.
If we choose to project out the $\braket{0 | \cdot | 0}$ element, the output is ${\rm Poly}(a) = \braket{0 | U_{\vec\phi} | 0} = P(a)$.
Here, the choice of ${\rm Poly}$ is restricted not only by the conditions on $P$ of Theorem~\ref{thm:qsp}, but also the additional constraints below which ensure that a polynomial $Q$ exists satisfying the conditions of Theorem~\ref{thm:qsp}.

\begin{theorem}[$(W_x, S_z, \braket{0| \cdot | 0})$-QSP]
A QSP phase sequence $\vec\phi \in \mathbb{R}^{d+1}$ exists
\begin{equation}\label{eq:qsp_thm_Wx-Sz-00}
    {\rm Poly}(a) = \braket{0 | e^{i \phi_0 Z} \prod_{k=1}^d  W_x(a)  e^{i\phi_k Z} | 0}
\end{equation}
for $a\in[-1, 1]$, and for any polynomial ${\rm Poly} \in \mathbb{C}[a]$ if and only if the following conditions hold:
\begin{enumerate}[label=(\roman*)]
  \item ${\rm deg}({\rm Poly}) \leq d$
  \item ${\rm Poly}$ has parity $d~ {\rm mod}~ 2$
  \item $\forall a \in [-1, 1]$, $|{\rm Poly}(a)| \le 1$
  \item $\forall a \in (-\infty, -1] \cup [1, \infty)$, $|{\rm Poly}(a)| \ge 1$
  \item if $d$ is even, then $\forall a \in \mathbb{R}$, ${\rm Poly}(ia){\rm Poly}^*(ia) \ge 1$
\end{enumerate}
\label{thm:qsp_Wx-Sz-00}
\end{theorem}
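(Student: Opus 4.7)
The plan is to reduce this theorem to Theorem~\ref{thm:qsp}: given $P$ satisfying (i)--(v), I build a complementary polynomial $Q$ with $\deg Q \le d-1$ and parity $(d-1)\bmod 2$ satisfying the polynomial identity $PP^* + (1-a^2)QQ^* = 1$, and then invoke Theorem~\ref{thm:qsp} to supply the phase angles $\vec\phi$. The forward direction is a direct unpacking of Theorem~\ref{thm:qsp}: it hands me $P$ and $Q$ with the right degrees and parities and with $|P|^2 + (1-a^2)|Q|^2 = 1$ on $[-1,1]$. Conditions (i), (ii), and (iii) are immediate, while (iv) follows by extending the same polynomial identity off $[-1,1]$, where $(1-a^2)\le 0$ forces $|P|^2 \ge 1$. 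For (v) with $d$ even, $P$ is even and $Q$ is odd, and a direct computation of the relation between coefficient-conjugation and complex-conjugation gives $P(ib)P^*(ib) = |P(ib)|^2$ together with $Q(ib)Q^*(ib) = -|Q(ib)|^2$ for real $b$; substituting $a = ib$ into the polynomial identity then yields $|P(ib)|^2 = 1 + (1+b^2)|Q(ib)|^2 \ge 1$.

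For the reverse direction I define $R := 1 - PP^*$, an even real-coefficient polynomial of degree at most $2d$. Conditions (iii) and (iv) at $a = \pm 1$ force $|P(\pm 1)| = 1$, so $R(\pm 1) = 0$ and $R = (1-a^2)S$ with $S$ even, of degree at most $2d-2$, and non-negative on all of $\mathbb{R}$ (positivity on $[-1,1]$ follows from (iii); on the complement from (iv)). Writing $S(a) = \tilde S(a^2)$, I construct $Q$ case by case. When $d$ is odd I want $Q$ even, so I look for $Q(a) = \tilde Q(a^2)$ with $\tilde Q \tilde Q^* = \tilde S$; the required non-negativity of $\tilde S$ on the negative reals, $\tilde S(-b^2) = S(ib) = R(ib)/(1+b^2) \ge 0$, is automatic because the odd parity of $P$ forces $P(ib)P^*(ib) = -|P(ib)|^2 \le 0$. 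When $d$ is even I want $Q$ odd, so I look for $Q(a) = a\tilde Q(a^2)$, which requires $y \mid \tilde S$, equivalently $|P(0)| = 1$; this is exactly what (v) at $a = 0$ combined with (iii) provides. Writing $\tilde S = y \tilde S_1$, I then need $\tilde S_1 \ge 0$ on $\mathbb{R}$, and the negative-real case $\tilde S_1(-b^2) = -S(ib)/b^2 \ge 0$ is precisely where (v) at generic imaginary $a$ gives $S(ib) \le 0$. In both cases the standard factorization of a real polynomial non-negative on $\mathbb{R}$ (real roots with even multiplicity, non-real roots in conjugate pairs, one copy of each pulled out into $\tilde Q$) supplies the required $\tilde Q$, and degrees and parities track through to give $Q$ with $\deg Q \le d-1$ and parity $(d-1)\bmod 2$, completing the construction.

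The main obstacle is the parity-positivity bookkeeping in the construction of $Q$: condition (v) plays an essential and nonredundant role for even $d$, used both at $a = 0$ to force $S(0) = 0$ and at generic imaginary $a$ to extend non-negativity of the ``square-root'' polynomial to the negative reals, whereas both ingredients are free from the odd parity of $P$ when $d$ is odd. Once the correct $Q$ is in hand, Theorem~\ref{thm:qsp} closes the argument as a black box, so no further analysis on the phase-finding side is needed.
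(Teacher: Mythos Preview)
The paper does not actually supply its own proof of this theorem: it states the result and then says ``The proofs of Theorem~\ref{thm:qsp_Wx-Sz-00} and Theorem~\ref{thm:qsp_Wx-Sz-++} are given in~\cite{Gily_n_2019}.'' So there is nothing in the paper to compare your argument against beyond the black-box statement of Theorem~\ref{thm:qsp}.

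That said, your proposal is correct and self-contained. The forward direction unpacks the polynomial identity $PP^* + (1-a^2)QQ^* = 1$ correctly, and your parity calculation showing $P(ib)P^*(ib) = |P(ib)|^2$ and $Q(ib)Q^*(ib) = -|Q(ib)|^2$ for even $P$, odd $Q$ is right and is exactly what makes (v) fall out. In the reverse direction, the key step is the Fej\'er--Riesz style factorization of $\tilde S$ (respectively $\tilde S_1$) as $\tilde Q\tilde Q^*$, and you have correctly identified precisely where each hypothesis is consumed: (iii) and (iv) force $R(\pm 1)=0$ and give non-negativity of $S$ on $\mathbb{R}$; for odd $d$ the non-negativity on the negative real axis of the substituted variable is free from the odd parity of $P$; for even $d$, (v) is used twice, once at $a=0$ to force $y\mid \tilde S$ and once for general $a$ to give $\tilde S_1\ge 0$ on the negative axis. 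The degree and parity bookkeeping tracks through, and the factorization of a real polynomial non-negative on $\mathbb{R}$ into $\tilde Q\tilde Q^*$ via pairing conjugate roots is the standard ingredient. This is essentially the same reduction strategy used in the cited reference, so you have reconstructed the intended proof.
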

The family of achievable polynomials in this case is significantly limited since the projective measurement is performed in the same basis $M = \{|0\>, |1\>\}$ as the signal processing operations.
For example, this immediately limits us to polynomial functions such that $|{\rm Poly}(\pm 1)|~=~1$.

Properties $(iv)$ and $(v)$, are quite restrictive especially when approximating real functions.
In fact, none of polynomial approximations described in this paper satisfy these last two constraints.
Allowing a small but non-zero imaginary part, it is possible to find QSP phases in the $(W_x, S_z, \braket{0| \cdot | 0})$ convention approximating real functions that satisfy $|f(\pm 1)| = 1$;
an example of this is the phase estimation function plotted in Section~\ref{subsec:phase_est_func}.

Since we are often interested in constructing real polynomials, choosing a different signal basis ends up being much more useful.   In particular, when $M=\{|+\>, |-\>\}$, then we may employ this theorem for real polynomials:

\begin{theorem}[$(W_x, S_z, \braket{+| \cdot | +})$-QSP]
A QSP phase sequence $\vec\phi \in \mathbb{R}^{d+1}$ exists such that
\begin{equation}\label{eq:qsp_thm_Wx-Sz-++}
    {\rm Poly}(a) = \braket{+ | e^{i \phi_0 Z} \prod_{k=1}^d  W_x(a)  e^{i\phi_k Z} | +}
\end{equation}
for $a\in[-1, 1]$, and for any real polynomial ${\rm Poly} \in \mathbb{R}[a]$ if and only if the following conditions hold:
\begin{enumerate}[label=(\roman*)]
  \item ${\rm deg}({\rm Poly}) \leq d$
  \item ${\rm Poly}$ has parity $d~ {\rm mod}~ 2$
  \item $\forall a \in [-1, 1]$, $|{\rm Poly}(a)| \le 1$
\end{enumerate}
\label{thm:qsp_Wx-Sz-++}
\end{theorem}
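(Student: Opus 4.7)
The plan is to reduce Theorem~\ref{thm:qsp_Wx-Sz-++} to Theorem~\ref{thm:qsp} in two steps: first, compute $\braket{+|U_{\vec\phi}|+}$ explicitly in terms of the $P, Q$ produced by Theorem~\ref{thm:qsp}, and second, show that any real polynomial satisfying (i)--(iii) extends to a valid $(P, Q)$ pair. For the forward direction I would substitute the matrix form from Theorem~\ref{thm:qsp}, expand $|+\rangle = \tfrac{1}{\sqrt{2}}(|0\rangle + |1\rangle)$, and obtain
\begin{equation}
    \braket{+ | U_{\vec\phi} | +} = \text{Re}(P(a)) + i\sqrt{1-a^2}\,\text{Re}(Q(a)).
\end{equation}
Demanding that this lie in $\mathbb{R}[a]$ forces $\text{Re}(Q) \equiv 0$, so ${\rm Poly}(a) = \text{Re}(P(a))$. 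Conditions (i) and (ii) then inherit from the degree and parity of $P$, since taking real parts preserves both. Condition (iii) follows from $|P|^2 + (1-a^2)|Q|^2 = 1$, which yields $|{\rm Poly}(a)| = |\text{Re}(P(a))| \leq |P(a)| \leq 1$ on $[-1,1]$.

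For the reverse direction, given a real polynomial ${\rm Poly}$ obeying (i)--(iii), the aim is to produce $P, Q$ satisfying Theorem~\ref{thm:qsp} with $\text{Re}(P) = {\rm Poly}$ and $\text{Re}(Q) \equiv 0$. Writing $P = {\rm Poly} + iC$ and $Q = iD$ for real polynomials $C, D$, the normalization $|P|^2 + (1-a^2)|Q|^2 = 1$ collapses to the decomposition problem
\begin{equation}
    1 - {\rm Poly}(a)^2 = C(a)^2 + (1 - a^2)\,D(a)^2,
\end{equation}
with $\deg C \leq d$ of parity $d \bmod 2$ and $\deg D \leq d-1$ of parity $(d-1)\bmod 2$. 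The left-hand side is a real polynomial of even parity and degree at most $2d$ that is nonnegative on $[-1,1]$ by condition (iii). I would establish the decomposition via a Fej\'er--Riesz style argument in the Laurent variable $z$ under the substitution $a = (z+z^{-1})/2$: recast $1 - {\rm Poly}(a)^2$ as $|F(z)|^2$ for some Laurent polynomial $F$, and then split $F$ according to its symmetry under $z \mapsto z^{-1}$ to read off $C$ and $D$ with the prescribed parities and degrees. Once $(P, Q)$ are in hand, Theorem~\ref{thm:qsp} supplies a phase sequence $\vec\phi$ realizing them, and the forward computation yields $\braket{+|U_{\vec\phi}|+} = {\rm Poly}(a)$.

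The main obstacle is the Fej\'er--Riesz step with simultaneous parity and degree control: nonnegativity alone gives a factorization $1 - {\rm Poly}^2 = |F(z)|^2$, but arranging its real/imaginary (equivalently, symmetric/antisymmetric under $z \mapsto z^{-1}$) parts to land in the specific degree and parity classes required for $C$ and $D$ is the delicate part. This is essentially the same completion step that powers the reverse direction of Theorem~\ref{thm:qsp}, so I would adapt that argument, following the Laurent polynomial algebra framework of~\cite{Low_2016, Haah_2019, chao2020finding}, and track carefully how the parity of ${\rm Poly}$ propagates through the substitution. All other steps --- bookkeeping on degrees, the final assembly via Theorem~\ref{thm:qsp}, and verifying the evaluation in the $|+\rangle$ basis --- are routine consequences of the construction.
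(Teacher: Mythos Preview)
Your proposal is correct and matches the approach the paper takes (and cites). The paper itself does not give a self-contained proof of this theorem: in Section~\ref{sec:qsp} it records exactly your forward computation $\braket{+|U_{\vec\phi}|+} = \text{Re}(P(a)) + i\sqrt{1-a^2}\,\text{Re}(Q(a))$, and for the existence direction it defers to~\cite{Gily_n_2019}, noting in Appendix~\ref{sec:QSPConventions} that the Laurent-polynomial method of~\cite{chao2020finding} furnishes a constructive proof --- precisely the Fej\'er--Riesz style completion you outline.
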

This QSP convention is expressive enough for all of the polynomials considered in this paper and is used in~\cite{Low_2016, Gily_n_2019}.
The proofs of Theorem~\ref{thm:qsp_Wx-Sz-00} and Theorem~\ref{thm:qsp_Wx-Sz-++} are given in~\cite{Gily_n_2019}.

\subsection{Reflection convention for QSP}

Another common convention is to choose the signal operator $W$ to be a reflection (as in Eq.~(\ref{eq:ampamp_rot})).
\be
R(a) := \mattwocb{a}{\sqrt{1-a^2}}{\sqrt{1-a^2}}{-a}
\,.
\label{eq:ra_qsp_x}
\ee
The reflection operator is preferred in some cases as it has the added benefit of being Hermitian, which can simplify proof constructions, and in particular equations such as Eq.~(\ref{eq:sigproc_tran_b}). Given a phase sequence $\vec{\phi} \in \mathbb{R}^{d+1}$, we can find a $\vec{\phi}' \in \mathbb{R}^{d+1}$ such that
\be
e^{i \phi_0 Z} \prod_{k=1}^d R(a) e^{i \phi_k} = e^{i \phi_0' Z} \prod_{k=1}^d W_x(a) e^{i \phi_k'}
\,.
\ee
Using the relationship of Eq.~(\ref{eq:wa_to_ra}), this can be accomplished by choosing $\phi_0 = \phi_0' + (2d - 1)\frac{\pi}{4}$, $\phi_d = \phi_d' - \frac{\pi}{4}$, and $\phi_k = \phi_k' - \frac{\pi}{2}$ for $k \in {1, \ldots, d-1}$.
Therefore, these two conventions are equivalent regardless of the final measurement basis. 

\subsection{Wz convention for QSP}

Theorem~\ref{thm:qsp_Wx-Sz-++} can also be understood through its relationship to the convention used in~\cite{chao2020finding}.
Here the authors define QSP with the signal operator $W$ being a $z$-rotation,
\bea
W_z(\theta) = e^{i \frac{\theta}{2} Z} = \mattwocb{w}{0}{0}{w^{-1}}
\,,
\label{eq:wa_qsp_z}
\eea
where $w := e^{i\theta / 2}$, and the signal processing operator is an $x$-rotation, $S_x(\phi) = e^{i \phi X}$.  Furthermore, in this convention, it is typical to choose the signal basis as being $M=\{|0\>, |1\>\}$.

In our notation, this convention is written as $(W_z, S_x, \braket{0| \cdot |0})$-QSP.
This convention is equivalent to $(W_x, S_z, \braket{+| \cdot |+})$-QSP and is equally expressive which can easily be seen since
\begin{equation}\label{eq:Wx-Wz-equivalence}
    \begin{gathered}
        \braket{+| e^{i \phi_0 Z} \prod_{k=1}^d  W_x(\theta)  e^{i\phi_k Z} |+} = \\
        \braket{0| e^{i \phi_0 X} \prod_{k=1}^d  W_z(\theta)  e^{i\phi_k X} |0} 
    \end{gathered}
\end{equation}

The Laurent polynomial formulation of $(W_z, S_x, \braket{0| \cdot |0})$-QSP of~\cite{chao2020finding} can be related to our formulation as follows
\begin{eqnarray}
    &&e^{i \phi_0 X} \prod_{k=1}^d  W_z(\theta)  e^{i\phi_k X} \\
    \label{eq:fg_matrix}
    &=&: \mattwocb{F(w)}{i G(w)}{i G(w^{-1})}{F(w^{-1})} \\
    &=& H e^{i \phi_0 Z} \prod_{k=1}^d  W_x(\theta)  e^{i\phi_k Z} H \\
    &=& H \mattwocb{P(a)}{i Q(a) \sqrt{1 - a^2}}{i Q^*(a) \sqrt{1 - a^2}}{P^*(a)} H
\end{eqnarray}
for complex polynomials $P, Q \in \mathbb{C}[a]$ and real Laurent polynomials $F, G \in \mathbb{R}[w, w^{-1}]$ with parity $d \mod 2$.

Explicitly, 
\begin{eqnarray}
    f_0 = \Re[p_0],
    &\quad&
    g_0 = \Im[p_0]
\end{eqnarray}
and for $k > 0$,
\begin{eqnarray}
    f_k = \frac{1}{2} \Re[p_k + q_k],
    &\qquad&
    f_{-k} = \frac{1}{2} \Re[p_k - q_k]
    \\
    g_k = \frac{1}{2} \Im[p_k - q_k],
    &\qquad&
    g_{-k} = \frac{1}{2} \Im[p_k + q_k]
\end{eqnarray}
where the coefficients for $P$ and $Q$ are given in the Chebyshev bases
\begin{eqnarray}
    P(a) &\equiv& \sum_{k=0}^d p_k T_k(a) \\
    Q(a) &\equiv& \sum_{k=0}^d q_k U_{k-1}(a)
\end{eqnarray}
and the coefficients for the $F$ and $G$ are given in the standard basis
\begin{eqnarray}
    F(w) \equiv \sum_{k=-d}^d f_k w^{k},
    &\qquad&
    G(w) \equiv \sum_{k=-d}^d g_k w^{k}
\end{eqnarray}

Requiring unitarity in Eq.~(\ref{eq:fg_matrix}) is equivalent to 
\be
F(w)F(w^{-1}) + G(w)G(w^{-1}) = 1
\,.
\ee
A numerically stable method for computing phases for a given $F(w)$ in the $(W_z, S_x, \braket{0| \cdot |0})$ convention is discussed further in~\cite{chao2020finding} and can also be seen as a constructive proof of Theorem~\ref{thm:qsp_Wx-Sz-++}.

%%%%%%%%%%%%%%%%%%%%%%%%%%%%%%%%%%%%%%%%%

\section{Proofs about Phase Estimation by QSVT} \label{sec:PhaseEstimationProofs}
In this appendix, we prove the theorems used in the development of phase estimation by QSVT in Section~\ref{sec:PhaseEstimation}.

\subsection{Theorems~\ref{thm:n_geq_m} and~\ref{thm:n_l_m}} \label{sec:PhaseEstimation_m_n_Proofs}

Here, we prove the Theorems~\ref{thm:n_geq_m} and~\ref{thm:n_l_m} from Section~\ref{sec:PhaseEstimation_Sketch}:
\begingroup
\def\thetheorem{\ref{thm:n_geq_m}}
    \begin{theorem}
    If $n \geq m$, and one has the ability to implement the sign function exactly through QSVT, then at the end of the algorithm sketch of Section~\ref{sec:PhaseEstimation_Sketch}, $\theta = \varphi$. 
    \end{theorem}
\addtocounter{theorem}{-1}
\endgroup
and
\begingroup
\def\thetheorem{\ref{thm:n_l_m}}
    \begin{theorem}
    If $n < m$ (including the case $m=\infty$), and one has the ability to implement the sign function exactly through QSVT, then at the end of the algorithm sketch of Section~\ref{sec:PhaseEstimation_Sketch} (with a minor modification discussed below), $|\theta - \varphi| \leq 2^{-n-1}$. 
    \end{theorem}
\addtocounter{theorem}{-1}
\endgroup
\noindent The minor modification needed for $n<m$ is an additional $j=0$ iteration used in the complete phase estimation by QSVT algorithm.

In the setup of these proofs, we suppose that $\varphi$ is an $m$-bit number and use the notation $0.\varphi_{[j:]} := 0.\varphi_{j}\varphi_{j+1}...\varphi_{m}$ to denote a contiguous string of binary digits (this string being infinite in the case $m=\infty$). We also assume that we can implement the sign function exactly in QSVT (which is unrealistic, but was addressed in Section~\ref{sec:PhaseEstimation_Caveats}). Hence, the procedure that we discuss in these proofs is effectively Algorithm $\ref{alg:PhaseEstimationByQSVT}$ modulo the error analysis.

\subsubsection{$n \geq m$}\label{sec:n_geq_m}
If $n \geq m$, then append $\varphi$ with $m-n$ 0's after its $m^{\text{th}}$ binary decimal, such that $\varphi = 0.\varphi_1 \varphi_2 ... \varphi_m 0 0... 0$. This effectively makes $\varphi$ an $n$-bit number without changing its numerical value. We now invoke the following lemma, which is proven by induction:

\begin{lemma} \label{lemma:n_geq_m}
If $\varphi = 0.\varphi_1\varphi_2...\varphi_n$ is an n-bit number, then at the end of the iteration $j\geq 0$, $\theta = 0.\varphi_{j+1}\varphi_{j+2}...\varphi_{n} = 0.\varphi_{[j+1:]}$. 
\end{lemma}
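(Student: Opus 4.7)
The plan is to prove Lemma~\ref{lemma:n_geq_m} by (downward) induction on the iteration index $j$, running from $j=n-1$ down to $j=0$. The inductive hypothesis at step $j$ is exactly the statement of the lemma: after iteration $j$ completes, $\theta = 0.\varphi_{[j+1:]}$. The argument will be driven entirely by the key algebraic identity for the singular value in Eq.~(\ref{eq:singular_value_reexpression}), together with the bit-shift update $\theta \mapsto \theta/2$ performed at the start of each iteration.

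For the base case $j=n-1$, I would note that $\theta$ is initialized to $0$, so the shift step leaves $\theta = 0$. Since $\varphi = 0.\varphi_1\cdots\varphi_n$, we have $2^{n-1}\varphi = (\text{integer}) + 0.\varphi_n$, and the absolute value in Eq.~(\ref{eq:singular_value_expression}) absorbs the integer part, yielding
\begin{equation}
\sigma^{n-1} = \bigl|\cos(\pi\, 0.\varphi_n)\bigr| = \bigl|\cos(\pi\varphi_n/2)\bigr| \in \{0,1\}.
\end{equation}
Thus $\sigma^{n-1}=1$ iff $\varphi_n=0$ and $\sigma^{n-1}=0$ iff $\varphi_n=1$. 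The threshold circuit of Fig.~\ref{fig:PhaseEstimationCircuit_SingleIteration} applied with the target $\Theta(1/\sqrt 2 - x)$ then returns exactly $\varphi_n$, so the assignment $\theta_1 := \varphi_n$ gives $\theta = 0.\varphi_n = 0.\varphi_{[n:]}$, matching the claim.

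For the inductive step, assume that at the end of iteration $j+1$ we have $\theta = 0.\varphi_{[j+2:]} = 0.\varphi_{j+2}\varphi_{j+3}\cdots\varphi_n$. At the start of iteration $j$, the update $\theta \mapsto \theta/2$ produces $\theta = 0.0\varphi_{j+2}\cdots\varphi_n$. Using Eq.~(\ref{eq:singular_value_reexpression}),
\begin{equation}
\sigma^{j} = \bigl|\cos\bigl(\pi(0.\varphi_{j+1}\varphi_{j+2}\cdots\varphi_n - 0.0\varphi_{j+2}\cdots\varphi_n)\bigr)\bigr|.
\end{equation}
The crucial observation is the binary identity
\begin{equation}
0.\varphi_{j+1}\varphi_{j+2}\cdots\varphi_n - 0.0\varphi_{j+2}\cdots\varphi_n = \tfrac{\varphi_{j+1}}{2},
\end{equation}
which reduces $\sigma^j$ to $|\cos(\pi\varphi_{j+1}/2)|$, again in $\{0,1\}$ and equal to $1-\varphi_{j+1}$. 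Running the same sign-function circuit, the measurement deterministically yields $\varphi_{j+1}$, so setting $\theta_1:=\varphi_{j+1}$ produces $\theta = 0.\varphi_{j+1}\varphi_{j+2}\cdots\varphi_n = 0.\varphi_{[j+1:]}$, completing the induction.

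This proof involves no real obstacle; the only subtlety is to double-check that the periodicity of cosine and the absolute value together dispose of the integer part of $2^j\varphi$ cleanly, and that the $\theta\mapsto\theta/2$ update is precisely designed so that the previously-extracted bits cancel out of the cosine argument, leaving only the single unknown bit $\varphi_{j+1}$. Once Lemma~\ref{lemma:n_geq_m} is in hand, Theorem~\ref{thm:n_geq_m} follows by setting $j=0$ (giving $\theta=0.\varphi_1\cdots\varphi_n=\varphi$), and Theorem~\ref{thm:n_l_m} follows by the same inductive template applied to the truncation $\tilde\varphi := 0.\varphi_1\cdots\varphi_n$, after observing that the additional $j=0$ iteration of the complete algorithm handles the rounding of the discarded tail $|\varphi-\tilde\varphi|< 2^{-n}$ into a proper $n$-bit approximation with error at most $2^{-n-1}$.
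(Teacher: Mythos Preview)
Your proof is correct and follows essentially the same approach as the paper: downward induction on $j$ from $n-1$ to $0$, using the cancellation identity $0.\varphi_{j+1}\varphi_{j+2}\cdots\varphi_n - 0.0\varphi_{j+2}\cdots\varphi_n = \varphi_{j+1}/2$ to reduce $\sigma^j$ to $\lvert\cos(\pi\varphi_{j+1}/2)\rvert \in \{0,1\}$, from which the sign-threshold circuit deterministically extracts $\varphi_{j+1}$. One minor note: your closing remark that Theorem~\ref{thm:n_l_m} follows by the same template applied to a truncation is too quick; the paper's proof of that theorem (Lemma~\ref{lemma:n_l_m}) requires a more delicate case analysis on $\varphi_n,\varphi_{n+1}$ and on whether rounding carries occurred at prior iterations, so you should not expect it to be an immediate corollary.
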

\begin{proof}
The proof proceeds by induction. 

\textit{Base Case}: The base case is the $j=n-1$ iteration, at which $\theta = 0$. Using Eq.~(\ref{eq:singular_value_reexpression}), we see that $A_j(\theta) = A_{n-1}(0)$ has singular value
\begin{equation}
\begin{gathered}
    \sigma^{n-1} = \big|\cos\big(\pi (0.\varphi_{n}-0) \big) \big| =
    \begin{cases}
        1 & \text{if } \varphi_n = 0 \\
        0 & \text{if } \varphi_n = 1,
    \end{cases}
\end{gathered}
\end{equation}
and so $\Theta\Big(\frac{1}{\sqrt{2}}-\sigma^{n-1}\Big) = 1-2\varphi_n$. Therefore, after we apply $\Theta^{(\text{SV})}\Big(\frac{1}{\sqrt{2}}-A_{n-1}(0)\Big)$ to $|u\rangle$, controlled on the ancilla qubit, the state of the ancilla qubit is $\frac{1}{\sqrt{2}}\Big(|0\rangle + (1-2\varphi_n)|1\rangle \Big)$. The final Hadamard gate brings the ancilla qubit to the state
\begin{equation}
\begin{gathered}
    (1-\varphi_n)|0\rangle + \varphi_n |1\rangle = 
    \begin{cases}
        |0\rangle & \text{if } \varphi_n = 0 \\
        |1\rangle & \text{if } \varphi_n = 1
    \end{cases},
\end{gathered}
\end{equation}
and therefore the measurement of the ancilla qubit will yield $\varphi_n$. The final step of the iteration will set $\theta_1 = \varphi_n$, such that $\theta = 0.\varphi_n$, as claimed.

\textit{General Case:} Proceeding with induction, assume that this claim is true at the end of iteration $j+1$. That is, $\theta = 0.\varphi_{j+2} \varphi_{j+3} ... \varphi_n = 0.\varphi_{[j+2:]}$ at the end of iteration $j+1$. Then, at the start of iteration $j$, $\theta \leftarrow \theta/2 = 0.0\varphi_{j+2} \varphi_{j+3} ... \varphi_n$. Again using Eq.~(\ref{eq:singular_value_reexpression}), we see that $A_j(\theta)$ has singular value
\begin{equation}
\begin{split}
    \sigma^{j} &= \big|\cos\big(\pi (0.\varphi_{j+1} \varphi_{j+2} ... \varphi_n-0.0\varphi_{j+2} \varphi_{j+3} ... \varphi_n) \big) \big|\\ &=
    \Big| \cos\Big( \frac{\pi}{2}\varphi_{j+1} \Big) \Big|= 
    \begin{cases}
        1 & \text{if } \varphi_{j+1} = 0 \\
        0 & \text{if } \varphi_{j+1} = 1
    \end{cases},
\end{split}
\end{equation}
and so $\Theta\Big(\frac{1}{\sqrt{2}}-\sigma^{j}\Big) = 1-2\varphi_{j+1}$. This is identical to the base case, and we ultimately find that the measurement of the auxiliary qubit will yield $\varphi_{j+1}$, such that we set $\theta_1 = \varphi_{j+1}$. So at the end of this iteration, we have $\theta = 0.\varphi_{j+1}\varphi_{j+2}...\varphi_{n} = \varphi_{[j+1:]}$, as desired. 
\end{proof}

Considering the $j=0$ iteration, we see that the final output of the algorithm is $\theta = 0.\varphi_1 \varphi_2 ... \varphi_n = \varphi$, which completes the proof of Theorem~\ref{thm:n_geq_m}.

\subsubsection{$n<m$}\label{sec:n_l_m}
If $n<m$, this algorithm will produce an $n$-bit approximation to $\varphi$ that suffers error at most $\frac{1}{2^{n+1}}$. To show this, we first prove the following lemma.

\begin{lemma}\label{lemma:n_l_m}
If $\varphi$ is an $m>n$ bit number (including the case $m=\infty$), then at the end of iteration $j\geq 0$, $\theta = 0.\tilde{\varphi}_{j+1}\tilde{\varphi}_{j+2}...\tilde{\varphi}_{n} = 0.\tilde{\varphi}_{[j+1:]}$, such that $|\tilde{r}_j.\tilde{\varphi}_{[j+1:]} - 0.\varphi_{[j+1:]}| \leq 2^{j-n-1}$, where $\tilde{r}_j. \tilde{\varphi}_{[j+1:]}$ is attained by rounding $0.\varphi_{[j+1:]}$ to $n-j$ binary decimals, and $\tilde{r}^j$ is an additional bit carried over by this rounding. 
\end{lemma}
\begin{proof}
As before, the proof proceeds by induction, albeit longer than the $n\geq m$ proof, but no more complicated. One minor difference is that we will still write expressions like $\varphi = 0.\varphi_1 \varphi_2 ... \varphi_m$ to indicate the binary expansion of $\varphi$, where it is to be understood that this will really be an infinite sequence in the case $m=\infty$. 

\textit{Base Case:} The base case is the $j=n-1$ iteration, at which, $\theta = 0$. Using Eq.~(\ref{eq:singular_value_reexpression}), we see that $A_j(\theta) = A_{n-1}(0)$ has singular value
\begin{equation}\label{eq:s_n1_expression}
\begin{split}
    \sigma^{n-1} &= \big|\cos\big(\pi (0.\varphi_{n}\varphi_{n+1}...\varphi_{m}-0) \big) \big|\\ &=
    \Big| \cos\Big( \frac{\pi}{2}\big(\varphi_{n} + \frac{1}{2}\varphi_{n+1} + \sum_{i=2}^{m-n} \frac{1}{2^{i}} \varphi_{n+i} \big) \Big) \Big|.
\end{split}
\end{equation}
We are not concerned with this exact value, but instead with the quantity $\Theta\big(\frac{1}{\sqrt{2}} - \sigma^{n-1} \big)$, which is dictated by the values of $\varphi_n$ and $\varphi_{n+1}$. In particular, the sum in Eq.~(\ref{eq:s_n1_expression}) is bounded above by $\frac{1}{2}$, so it cannot affect $\Theta\big( \frac{1}{\sqrt{2}} - \sigma^{n-1}\big)$ by itself. Let's look at each possible value of $\varphi_n$ and $\varphi_{n+1}$. 

First, if $\varphi_{n+1} = 0$ (regardless of the value of $\varphi_n$), then this scenario is similar to the proof of Theorem~\ref{thm:n_geq_m}, and we ultimately set $\theta_1 = \varphi_{n}$. In addition, rounding $0.\varphi_{n}\varphi_{n+1}...\varphi_{m} = 0.\varphi_{n}0\varphi_{n+2}...\varphi_{m}$ to $n-j = 1$ decimal places yields $\tilde{r}_{n-1}.\tilde{\varphi}_n = 0.\varphi_{n}$, so indeed $\theta = 0.\varphi_{n} = 0.\tilde{\varphi}_n$. We then have that 
\begin{equation}
\begin{split}
    |\tilde{r}_{j-1}.\tilde{\varphi}_j - 0.\varphi_{[j+1:]}| = \sum_{i=2}^{m-n}&\frac{1}{2^{i+1}}\varphi_{n+i} \\ &\leq \frac{1}{4} = 2^{j-n-1},
\end{split}
\end{equation}
as desired.

On the other hand, if $\varphi_{n+1} = 1$ and $\varphi_n=0$, then $\Theta\big(\frac{1}{\sqrt{2}}-\sigma^{n-1}\big) = -1 $, which results in setting $\theta_1 = 1$. Likewise, rounding $0.\varphi_{n}\varphi_{n+1}...\varphi_{m} = 0.01\varphi_{n+2}...\varphi_{m}$ to $n-j = 1$ decimal places yields $\tilde{r}_{n-1}.\tilde{\varphi}_n = 0.1 $, so indeed $\theta = 0.1 = 0.\tilde{\varphi}_n$. We then have that 
\begin{equation}
\begin{split}
    &|\tilde{r}_{j}.\tilde{\varphi}_{j+1} - 0.\varphi_{[j+1:]}| = \Big|\frac{1}{2} - \frac{1}{4} - \sum_{i=2}^{m-n}\frac{1}{2^{i+1}}\varphi_{n+i} \Big|= \\ &\Big|\frac{1}{4} - \sum_{i=2}^{m-n}\frac{1}{2^{i+1}}\varphi_{n+i} \Big| \leq \frac{1}{4} = 2^{j-n-1},
\end{split}
\end{equation}
as desired.

Lastly, if $\varphi_{n+1} = 1$ and $\varphi_n=1$, then $\Theta\big(\frac{1}{\sqrt{2}}-\sigma^{n-1}\big) = 1 $, which results in setting $\theta_1 = 0$. In addition, rounding $0.\varphi_{n}\varphi_{n+1}...\varphi_{m} = 0.11\varphi_{n+2}...\varphi_{m}$ to $n-j = 1$ decimals yields $\tilde{r}_{n-1}.\tilde{\varphi}_n = 1.0$, so indeed $\theta = 0.0 = 0.\tilde{\varphi}_n$. We then have that 
\begin{equation}
\begin{split}
    &|\tilde{r}_{j}.\tilde{\varphi}_{j+1} - 0.\varphi_{[j+1:]}| = \Big|1 - \frac{1}{2} - \frac{1}{4} - \sum_{i=2}^{m-n}\frac{1}{2^{i+1}}\varphi_{n+i} \Big| \\ &\Big|\frac{1}{4} - \sum_{i=2}^{m-n}\frac{1}{2^{i+1}}\varphi_{n+i} \Big| \leq \frac{1}{4} = 2^{j-n-1},
\end{split}
\end{equation}
as desired.

\textit{General Case:} Proceeding now to the general case, assume that this claim is true at the end of iteration $j+1$ and all prior iterations. That is, $\theta = 0.\tilde{\varphi}_{j+2}\tilde{\varphi}_{j+3}...\tilde{\varphi}_n$ at the end of iteration $j+1$. Then, at the start of iteration $j$, $\theta \leftarrow \theta/2 = 0.0\tilde{\varphi}_{j+2} \tilde{\varphi}_{j+3} ... \tilde{\varphi}_n$. Again using Eq.~(\ref{eq:singular_value_reexpression}), we see that $A_j(\theta)$ has singular value
\begin{equation}
\begin{split}
    \sigma^{j} =& \big|\cos\big(\pi (0.\varphi_{j+1}\varphi_{j+2}...\varphi_{m}-\theta) \big) \big|= \\ 
    &\Bigg| \cos\Bigg( \frac{\pi}{2}\Big(\varphi_{j+1} + \frac{1}{2}(\varphi_{j+2}-\tilde{\varphi}_{j+2})+ \\ 
    \sum_{i=3}^{n-j} \frac{1}{2^{i-1}}& (\varphi_{j+i}-\tilde{\varphi}_{j+i}) + \sum_{i=n-j+1}^{m-j} \frac{1}{2^{i-1}} \varphi_{n+i} \Big) \Bigg) \Bigg|.
\end{split}
\end{equation}
As before, we are interested in $\Theta \big(\frac{1}{\sqrt{2}} - \sigma^j \big)$, which, as we will see below, is dictated by the values of $\varphi_{j+1}, \ \varphi_{j+2}$, and $\tilde{\varphi}_{j+2}$. Let's now look at the possible cases of these values.

First, if $\tilde{\varphi}_{j+2} =\varphi_{j+2}$, then $\Theta\big(\frac{1}{\sqrt{2}}- \sigma^j \big)$ is dictated entirely by $\varphi_{j+1}$, and so we ultimately set $\theta_1 = \varphi_{j+1}$. In addition, the condition $\tilde{\varphi}_{j+2} =\varphi_{j+2}$ implies that no rounding occurred in the $j+1$ iteration, so $\tilde{r}_{j+1} = 0$, and $\tilde{r}_j.\tilde{\varphi}_{[j+1:]} = 0.\varphi_{j+1}\varphi_{j+2}\tilde{\varphi}_{j+3} ... \tilde{\varphi}_n$ is the rounded value of $0.\varphi_{[j+1:]}$. Therefore, we indeed have $\theta = 0.\tilde{\varphi}_{j+1} \tilde{\varphi}_{j+2} ... \tilde{\varphi}_{n}$, and also 
\begin{equation}
\begin{split}
    |\tilde{r}_j . \tilde{\varphi}_{j+1} - 0.\varphi_{j+1}| &= \frac{1}{2}|\tilde{r}_{j+1}.\tilde{\varphi}_{j+2} - 0.\varphi_{j+2}| \\ &\leq \frac{1}{2}2^{(j+1)-n-1} = 2^{j-n-1},
\end{split}
\end{equation}
where we have used the inductive hypothesis.

On the other hand, consider $\tilde{\varphi}_{j+2} = 1$ and $\varphi_{j+2} = 0$. This condition implies that, at the $j+1$ iteration, $0.\varphi_{[j+2:]} = 0.0\varphi_{j+3} ... \varphi_m$ was rounded to $\tilde{r}_{j+1}.\tilde{\varphi}_{[j+2:]} = 0.1\tilde{\varphi}_{j+3} ... \tilde{\varphi}_{n} $, which requires that $\varphi_{j+3} = 1 = \varphi_{j+4} = ... \varphi_{n+1}$ and $\tilde{\varphi}_{j+3} = 0 = \tilde{\varphi}_{j+4} = ... = \tilde{\varphi}_{n}$. Using these values, we see that $\tilde{r}_j.\tilde{\varphi}_{[j+1:]} = 0.\varphi_{j+1}\tilde{\varphi}_{j+2} ... \tilde{\varphi}_n = 0.\varphi_{j+1} 100...0$ is the rounded value of $\varphi^j$. In addition, inputting all of these values into $\Theta\big(\frac{1}{\sqrt{2}} - \sigma^j \big)$, we find that we will ultimately set $\theta_1 = \varphi_{j+1}$. Therefore, we indeed have $\theta = 0.\tilde{\varphi}_{j+1} \tilde{\varphi}_{j+2} ... \tilde{\varphi}_{n}$, and again 
\begin{equation}
\begin{split}
    |\tilde{r}_j . \tilde{\varphi}_{j+1} - 0.\varphi_{j+1}| &= \frac{1}{2}|\tilde{r}_{j+1}.\tilde{\varphi}_{j+2} - 0.\varphi_{j+2}| \\ &\leq \frac{1}{2}2^{(j+1)-n-1} = 2^{j-n-1},
\end{split}
\end{equation}
where we again used the inductive hypothesis.

Finally, consider $\tilde{\varphi}_{j+2} = 0$ and $\varphi_{j+2} = 1$. This condition implies that, at the $j+1$ iteration, $0.\varphi_{[j+2:]} = 0.1\varphi_{j+3} ... \varphi_m$ was rounded to $\tilde{r}_{j+1}.\tilde{\varphi}_{[j+2:]} = 1.0\tilde{\varphi}_{j+3} ... \tilde{\varphi}_{n} $, which requires that $\varphi_{j+3} = 1 = \varphi_{j+4} = ... \varphi_{n+1}$ and $\tilde{\varphi}_{j+3} = 0 = \tilde{\varphi}_{j+4} = ... \tilde{\varphi}_{n}$. Thus, if $\varphi_{j+1} = 0$, then $\tilde{r}_j. \tilde{\varphi}_{[j+1:]} = 0.100....0$ is the rounded value of $0.\varphi_{[j+1:]} = 0.011...1\varphi_{n+2}...\varphi_m$. Inputting these values into $\Theta\big(\frac{1}{\sqrt{2}} - \sigma^j \big)$, we find that we will ultimately set $\theta_1 = 1 =\tilde{\varphi}_{j+1} =  1-\varphi_{j+1}$. Therefore, we indeed have $\theta = 0.\tilde{\varphi}_{j+1} \tilde{\varphi}_{j+2} ... \tilde{\varphi}_{n}$, and also 
\begin{equation}
\begin{split}
    \big|\tilde{r}_j . \tilde{\varphi}_{j+1} &- 0.\varphi_{j+1}\big| = \big| 0.100...0 - 0.011...1\varphi_{n+2} ... \varphi_{m} \big| \\ 
    &= \Big|\frac{1}{2} - \sum_{i=2}^{n-j+1}\frac{1}{2^i} - \sum_{i=n-j+2}^{m-j}\frac{1}{2^{i}}\varphi_{j+i} \Big| \\
    &= \Big|\frac{1}{2^{n-j+1}} - \sum_{i=n-j+2}^{m-j}\frac{1}{2^{i}}\varphi_{j+i} \Big| \leq 2^{j-n-1},
\end{split}
\end{equation}
Similarly, if $\varphi_{j+1} = 1$, then $\tilde{r}_j . \tilde{\varphi}_{[j+1:]} = 1.00...0$ is the rounded value of $0.\varphi_{[j+1:]} = 0.11...1\varphi_{n+2}...\varphi_m$. Inputting these values into $\Theta\big(\frac{1}{\sqrt{2}} - \sigma^j \big)$, we will ultimately set $\theta_1 = 0 = \tilde{\varphi}_{j+1} = 1-\varphi_{j+1}$. Therefore, we indeed have $\theta = 0.\tilde{\varphi}_{j+1} \tilde{\varphi}_{j+2} ... \tilde{\varphi}_{n}$, and also 
\begin{equation}
\begin{split}
    \big|\tilde{r}_j . \tilde{\varphi}_{j+1} &- 0.\varphi_{j+1}\big| = \big| 1.00...0 - 0.11...1\varphi_{n+2} ... \varphi_{m} \big| \\
    &= \Big|1 - \sum_{i=1}^{n-j+1}\frac{1}{2^i} - \sum_{i=n-j+2}^{m-j}\frac{1}{2^{i}}\varphi_{j+i} \Big| \\
    &= \Big|\frac{1}{2^{n-j+1}} - \sum_{i=n-j+2}^{m-j}\frac{1}{2^{i}}\varphi_{j+i} \Big| \leq 2^{j-n-1},
\end{split}
\end{equation}

This proves the general case, and the entire proof is complete.

\end{proof} 

Furthermore, we must discuss the additional $j=0$ iteration, which is just like that of the phase estimation by QSVT algorithm. The analysis of this iteration is identical to the case above, just with the modification that $\varphi_0 = 0$ because $\varphi <1$. Ultimately, we find that this step usually will output $\theta_0 = 0$, such that $\theta = 0.\tilde{\varphi}_{[1:]}$. However, if $\varphi_1 = 1 = \varphi_2 = ... = \varphi_{n+1}$ and $\tilde{\varphi}_1 = 0 = \tilde{\varphi}_2 = ... = \tilde{\varphi}_{n}$, then this step will output $\theta_0 = 1$, such that $\theta = 1.00...0$. This accounts for the possibility that $1.0$ is the best approximation to $\varphi$. For instance, the best 2-decimal approximation to $\varphi = 0.1110101$ is $\theta = 1.00$. In either case, $\theta$ still satisfies $|\theta - \varphi| \leq 2^{-n-1}$, as Lemma~\ref{lemma:n_l_m} dictates at $j=0$.  

Finally, we mention one last caveat. It is possible that $\sigma^j = \frac{1}{\sqrt{2}}$, in which case the sign function is $0$, and we are equally likely to measure $0$ or $1$. However, this is not a problem, as the condition $\sigma^j = \frac{1}{\sqrt{2}}$ implies that $|0.\varphi_{[j+1:]} - 0.0\tilde{\varphi}_{[j+2:]}| = 0.1 = \frac{1}{2}$, such that both $\varphi_{j+1} = 0$ and $\varphi_{j+1} = 1$ are equally accurate approximations, and so the inequality in Theorem~\ref{thm:n_l_m} is still obeyed. This is analogous to conventional rounding, wherein one could round 0.5 to either 0 or 1 without changing the accuracy of the rounding. With these concerns alleviated, the proof of Theorem~\ref{thm:n_l_m} is complete.

\subsection{Theorem~\ref{thm:Delta_restriction}} \label{sec:Delta_mitigation}
Here, we prove the Theorem~\ref{thm:Delta_restriction} from Section~\ref{sec:PhaseEstimation_Caveats}:
\begingroup
\def\thetheorem{\ref{thm:Delta_restriction}}
    \begin{theorem}
    If we choose $\Delta$ such that 
        \begin{equation}
        \begin{gathered}
            \Delta < 2\left(\cos(\frac{3\pi}{16}) - \frac{1}{\sqrt{2}} \right) \approx 0.25.
        \end{gathered}
        \end{equation}
    then an error due to $\Delta$ can only occur at the $j=n-1$ iteration. If an error is made at this iteration, then at the end of the algorithm, $|\theta - \varphi| < \frac{1}{2^n}$, assuming no errors are made at later iterations.
    \end{theorem}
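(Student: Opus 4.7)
The plan is to split the proof into two halves. First, I translate the hypothesis on $\Delta$ into a geometric statement: since $\sigma^j = |\cos(\pi \delta^j)|$ for $\delta^j \in [0,1)$, the ambiguous region $\sigma^j \in (1/\sqrt{2} - \Delta/2, 1/\sqrt{2} + \Delta/2)$ corresponds to $\delta^j$ lying in $(d_1, d_2) \cup (1-d_2, 1-d_1)$, where $d_1 := \pi^{-1}\cos^{-1}(1/\sqrt{2} + \Delta/2)$ and $d_2 := \pi^{-1}\cos^{-1}(1/\sqrt{2} - \Delta/2)$. The stated bound on $\Delta$ gives $d_1 > 3/16$ immediately, and a short sum-to-product computation -- using $\cos(6\pi/32) + \cos(11\pi/32) = 2\cos(17\pi/64)\cos(5\pi/64) < 2\cos(16\pi/64) = \sqrt{2}$ -- forces $d_2 < 11/32$ as well.

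For the first conclusion I argue by downward induction on $j$. Supposing no errors have occurred at iterations $k \in \{n-2,\dots,j+1\}$ (while the measurement at $n-1$ is allowed to be erroneous), I show that $\sigma^j$ lies strictly outside the ambiguous region. The base step bounds the error $|E_{n-1}| := |0.\tilde{\varphi}_n - 0.\varphi_{[n:]}|$ introduced by an erroneous measurement at $n-1$. Using the fact that $\sigma$ depends only on $|\cos|$ (hence on $\delta^{n-1}$ modulo $1$) to identify $\tilde{\varphi}_n$ with its carry partner, the modular distance satisfies $|E_{n-1}| \leq \max(1/2 - d_1,\ d_2) < 5/16$, a bound that must be verified in both the Case A scenario $\delta^{n-1} \approx 1/4$ and the symmetric Case B scenario $\delta^{n-1} \approx 3/4$. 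The inductive step uses the identity $\delta^j = \varphi_{j+1}/2 - E_{j+1}/2$ together with the halving $|E_j| = |E_{j+1}|/2$, which is valid whenever the greedy measurement picks $\tilde{\varphi}_{j+1}$ so that $\delta^j$ lies in $[0,1/4) \cup (3/4,1)$ modulo $1$. The binding inequality is at $j = n-2$, where $|\delta^{n-2} - \varphi_{n-1}/2| \leq 5/32$; the bounds $5/32 < 3/16 < d_1$ (handling $\varphi_{n-1} = 0$) and $d_2 < 11/32$ (handling $\varphi_{n-1} = 1$) then place $\sigma^{n-2}$ strictly outside the ambiguous region, and for $j < n-2$ the halving makes the constraint even slacker.

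For the second conclusion I observe that once $\tilde{\varphi}_n$ is fixed, the correct measurements at iterations $n-2,\dots,0$ together with the $\tilde{\varphi}_0$ step perform a greedy binary search for the nearest element of the constrained set $\{k \cdot 2^{-(n-1)} + \tilde{\varphi}_n \cdot 2^{-n} : k \in \mathbb{Z}\} \cap [0,2)$ to $\varphi$. The spacing $2^{-(n-1)}$ of this set yields the generic bound $|\theta - \varphi| \leq 2^{-n}$, and the explicit error propagation $|E_0| \leq |E_{n-1}| \cdot 2^{-(n-1)} \leq 5 \cdot 2^{-(n+3)}$ sharpens this to the strict inequality $|\theta - \varphi| < 2^{-n}$. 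The principal obstacle throughout is the careful bookkeeping of the carry bit $\tilde{r}_{n-1}$ and the modular nature of $|\cos|$: one must identify $\tilde{\varphi}_n$ consistently with $\tilde{\varphi}_n + \tilde{r}_{n-1}$ modulo $2$, verify the halving and the ambiguous-region exclusion in both the Case A and Case B branches, and track the rounded values through the $\tilde{\varphi}_0$ step in order to rule out any spurious contribution to the final bound.
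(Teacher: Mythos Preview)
Your approach is essentially the same as the paper's: translate the $\Delta$ hypothesis into interval bounds on $d_1 = \pi^{-1}\arccos(1/\sqrt{2}+\Delta/2)$ and $d_2 = \pi^{-1}\arccos(1/\sqrt{2}-\Delta/2)$, then track the residual $\delta^j$ through the iterations by a halving argument. Your inductive packaging via $|E_j| = |E_{j+1}|/2$ is somewhat cleaner than the paper's explicit case-by-case analysis, but the underlying mechanism is identical.

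There is, however, a genuine numerical gap. You derive $d_2 < 11/32$ via the sum-to-product identity $\cos(6\pi/32)+\cos(11\pi/32) = 2\cos(17\pi/64)\cos(5\pi/64) < \sqrt{2}$, and then assert
\[
|E_{n-1}| \leq \max\bigl(\tfrac12 - d_1,\ d_2\bigr) < \tfrac{5}{16}.
\]
But $11/32 > 5/16 = 10/32$, so $d_2 < 11/32$ does \emph{not} yield $\max(1/2-d_1,d_2) < 5/16$. Nor can you fall back on the weaker bound $|E_{n-1}| < 11/32$: that would give $|\delta^{n-2} - \varphi_{n-1}/2| \leq 11/64$, and for $\varphi_{n-1}=1$ this places $\delta^{n-2}$ in $[21/64,43/64]$; the inequality $d_2 < 22/64$ then fails to exclude $\delta^{n-2}$ from $(d_1,d_2)$, breaking the inductive step.

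The repair is the one the paper uses (and phrases as ``the $+$ condition is more stringent''): prove the sharper bound $d_2 < 5/16$ directly. Since
\[
\cos\tfrac{3\pi}{16} + \cos\tfrac{5\pi}{16} = 2\cos\tfrac{\pi}{4}\cos\tfrac{\pi}{16} = \sqrt{2}\,\cos\tfrac{\pi}{16} < \sqrt{2},
\]
the hypothesis $\Delta/2 < \cos(3\pi/16) - 1/\sqrt{2}$ implies $\Delta/2 < 1/\sqrt{2} - \cos(5\pi/16)$, hence $d_2 < 5/16$. With this correction your bound $|E_{n-1}| < 5/16$ holds, the $\varphi_{n-1}=1$ case at $j=n-2$ goes through via $11/32 > d_2$ (now using $d_2 < 5/16 < 11/32$), and the rest of your argument is correct.
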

\addtocounter{theorem}{-1}
\endgroup

As discussed in Section~\ref{sec:PhaseEstimation_Caveats}, we assume that we perform QSVT with a function that behaves as $P^{\Theta}_{\epsilon, \Delta}\big(\frac{1}{\sqrt{2}}-x\big)$ (for $x\geq 0$), which approximates the sign function. Recall that this approximation fails in the region $[\frac{1}{\sqrt{2}} - \frac{\Delta}{2}, \frac{1}{\sqrt{2}}+\frac{\Delta}{2}]$, which we dub ``the $\Delta$-region". We show that the adverse effects of a finite sized $\Delta$-region can be mitigated by choosing a sufficiently small $\Delta$.

Throughout this section, we again use the notation $0.\varphi_{[j:]} := 0.\varphi_{j}\varphi_{j+1}...$ to denote a string of contiguous binary digits. We also use the definition of $\tilde{r}_j$ from Section~\ref{sec:n_l_m}. 

To demonstrate our claim, recall that $\sigma^j = |\cos(\pi (0.\varphi_{[j+1:]} - \theta)  )|$ as per Eq.~(\ref{eq:singular_value_reexpression}). Whether or not $\sigma^j$ is inside of the $\Delta$-region is dictated by the value of $|0.\varphi_{[j+1:]} - \theta|$. In particular, in order for $\sigma^j$ to be inside of the $\Delta$-region, we require $|\cos(\pi (0.\varphi_{[j+1:]} - \theta)  )| \in \left[\frac{1}{\sqrt{2}} -\frac{\Delta}{2}, \frac{1}{\sqrt{2}} +\frac{\Delta}{2}\right]$, or equivalently, 
\begin{equation}
    \begin{split}
        \pi|0.&\varphi_{[j+1:]} - \theta| \in \\
        &\left[\arccos(\frac{1}{\sqrt{2}} +\frac{\Delta}{2}), \arccos(\frac{1}{\sqrt{2}} -\frac{\Delta}{2})\right] \\
        \cup &\left[\arccos(-\frac{1}{\sqrt{2}} +\frac{\Delta}{2}), \arccos(-\frac{1}{\sqrt{2}} -\frac{\Delta}{2})\right].
    \end{split}
\end{equation} 
We depict this condition graphically in Figure~\ref{fig:DeltaRegion}, which is a useful illustration for understanding the proof of this theorem.

\begin{figure}[htpb]
    \centering
    \includegraphics[width=\columnwidth]{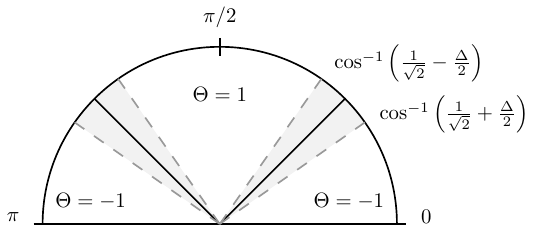}
    \caption{A depiction of a finite sized $\Delta$-region and the resulting values of $\Theta$. Shaded regions indicate where the application of the QSVT sequence returns an indeterminate measurement result (equivalently where the sign function is poorly approximated). }
    \label{fig:DeltaRegion}
\end{figure}

As we show below, if we choose $\Delta$ such that $\sigma^j$ is within the $\Delta$-region only if $|0.\varphi_{[j+1:]} - \theta| \in (\frac{1}{4} - \frac{1}{16},\ \frac{1}{4} + \frac{1}{16}) \cup (\frac{3}{4} - \frac{1}{16},\ \frac{3}{4} + \frac{1}{16})$, then it is only possible for $\sigma^j$ to be within the $\Delta$-region at iteration $j=n-1$. In order to enforce this constraint on the $\Delta$-region, we require that 
\begin{equation}
    \Big| \frac{1}{\pi}\arccos\left( \frac{1}{\sqrt{2}} \pm \frac{\Delta}{2} \right) - \frac{1}{4} \Big| < \frac{1}{16}.
\end{equation}
As it turns out, the $+$ condition is actually more stringent, so we have the following lemma:
\begin{lemma}\label{lemma:Delta_restriction}
If we choose $\Delta$ such that 
\begin{equation}
    \left( \frac{1}{4} - \frac{1}{\pi}\arccos( \frac{1}{\sqrt{2}} + \frac{\Delta}{2}) \right) < \frac{1}{16},
\end{equation}
then $\sigma^j$ can only be inside of the $\Delta$-region at iteration $j=n-1$. 
\end{lemma}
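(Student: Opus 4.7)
The plan is to translate the stated bound on $\Delta$ into an explicit ``bad set'' of values of $|0.\varphi_{[j+1:]} - \theta|$, and then to use Lemma~\ref{lemma:n_l_m} to confine this quantity to a safe region at every iteration $j \leq n-2$. At $j = n-1$, $\theta = 0$, so $0.\varphi_{[j+1:]} - \theta$ simply equals $0.\varphi_{[n:]}$ and is entirely unconstrained in $[0,1)$; this is the genuine reason the lemma permits $\sigma^{n-1}$ to lie in the $\Delta$-region, not a defect of the argument.

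For the first step, set $x := 0.\varphi_{[j+1:]}$ and $y := \theta$. Pulling the $\Delta$-region $[\tfrac{1}{\sqrt{2}} - \tfrac{\Delta}{2},\, \tfrac{1}{\sqrt{2}} + \tfrac{\Delta}{2}]$ back through $\sigma^j = |\cos(\pi(x-y))|$ yields two intervals centered at $|x-y| = 1/4$ and $3/4$, with half-widths $r_+ := \tfrac{1}{4} - \tfrac{1}{\pi}\arccos(\tfrac{1}{\sqrt{2}} + \tfrac{\Delta}{2})$ and $r_- := \tfrac{1}{\pi}\arccos(\tfrac{1}{\sqrt{2}} - \tfrac{\Delta}{2}) - \tfrac{1}{4}$. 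The hypothesis directly yields $r_+ < 1/16$, and a short trigonometric computation using $\arccos(-z) = \pi - \arccos(z)$ together with the product-to-sum identity $\cos(3\pi/16) + \cos(5\pi/16) = \sqrt{2}\cos(\pi/16) < \sqrt{2}$ shows that $r_+ < 1/16$ automatically forces $r_- < 1/16$. Consequently the bad set for $|x-y|$ is contained in $(3/16,\, 5/16) \cup (11/16,\, 13/16)$.

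For the second step, I would invoke Lemma~\ref{lemma:n_l_m} applied to the completed iteration $j+1$, which gives $|\tilde r_{j+1}.\tilde\varphi_{[j+2:]} - 0.\varphi_{[j+2:]}| \leq 2^{j-n}$. Expanding $x = \tfrac{\varphi_{j+1}}{2} + \tfrac{1}{2} \cdot 0.\varphi_{[j+2:]}$ and $y = \tfrac{1}{2}\cdot 0.\tilde\varphi_{[j+2:]}$ (using the $\theta \mapsto \theta/2$ update at the start of iteration $j$) and rearranging gives
\[
  x - y \;=\; \frac{\varphi_{j+1} + \tilde r_{j+1}}{2} \;+\; \delta, \qquad |\delta| \leq 2^{j-n-1} \leq \tfrac{1}{8} \text{ for } j \leq n-2.
\]
Splitting on $c := \varphi_{j+1} + \tilde r_{j+1} \in \{0,1,2\}$ places $(x-y) \bmod 1$ inside $[0,\,1/8] \cup [3/8,\,5/8] \cup [7/8,\,1)$, each component of which is disjoint from the bad set $(3/16,\,5/16) \cup (11/16,\,13/16)$. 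This closes the argument.

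The main obstacle is the case $c = 2$, which corresponds to the rounding-up regime in the proof of Lemma~\ref{lemma:n_l_m} (where $\varphi_{j+1} = \tilde r_{j+1} = 1$). Naively one might worry that $x - y$ exceeds $1$, in which case the mod-$1$ reduction would land near $1/8$ rather than near $1$. A short direct check, using that the rounding condition $1 + 0.\tilde\varphi_{[j+2:]} \approx 0.\varphi_{[j+2:]}$ combined with $0.\varphi_{[j+2:]} < 1$ forces $0.\tilde\varphi_{[j+2:]}$ to be smaller than $2^{j-n}$, shows that in fact $x - y = 1 - \eta/2$ for some $\eta \in (0,\,2^{j-n}]$; hence $x - y \in [1 - 2^{j-n-1},\,1)$ and $(x-y)\bmod 1$ sits safely in $[7/8,\,1)$, as required.
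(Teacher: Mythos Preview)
Your translation of the $\Delta$-region into a ``bad set'' $(3/16,5/16)\cup(11/16,13/16)$ for $|x-y|$ is clean, and the trigonometric verification that $r_-<1/16$ follows from $r_+<1/16$ is correct. The overall framework matches the paper's.

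There is, however, a genuine gap in the second step. You invoke Lemma~\ref{lemma:n_l_m} at the completed iteration $j+1$ to obtain $|\tilde r_{j+1}.\tilde\varphi_{[j+2:]}-0.\varphi_{[j+2:]}|\le 2^{j-n}$. But Lemma~\ref{lemma:n_l_m} is proved for the \emph{exact} sign function; its conclusion at iteration $n-1$ holds only if the measurement there returns the correctly rounded bit. The very point of the present lemma is that $\sigma^{n-1}$ \emph{may} lie in the $\Delta$-region, in which case the measurement can return the wrong bit and Lemma~\ref{lemma:n_l_m} fails at $n-1$. Your induction then has no base case at $j=n-2$: you are assuming precisely the ``no-$\Delta$-error'' behaviour you are trying to establish. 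The paper's proof treats this explicitly: after showing the ``correct bit at $n-1$'' branch propagates (as you do), it separately checks each sub-case where $\sigma^{n-1}$ is in the $\Delta$-region and the wrong bit is recorded, verifying by direct computation that the resulting $\theta$ still places $\sigma^{n-2}$ outside the $\Delta$-region.

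The fix fits naturally into your framework. If an error occurs at $n-1$, then $0.\varphi_{[n:]}$ lies in $(1/4-1/16,\,1/4+1/16)\cup(3/4-1/16,\,3/4+1/16)$, and $\theta\in\{0.0,0.1\}$ is arbitrary. At $j=n-2$ one gets $x-y=\tfrac{\varphi_{n-1}}{2}+\tfrac{1}{2}\cdot 0.\varphi_{[n:]}-\tfrac{b}{4}$ for $b\in\{0,1\}$; running through the four $(\varphi_{n-1},b)$ combinations places $|x-y|\bmod 1$ in intervals of half-width $5/32$ centered at multiples of $1/8$, all of which miss $(6/32,10/32)\cup(22/32,26/32)$. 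Once $\sigma^{n-2}$ is shown safe and the correct bit is obtained there, an inductive version of your $c\in\{0,1,2\}$ argument (now anchored at $n-2$ rather than $n-1$) carries the rest. Without this branch, the proof is incomplete.
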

\begin{proof}
To prove this, we will begin by analyzing the possible values of $\sigma^j$. We assume that, if $\sigma^j$ is outside of the $\Delta$-region, then we can correctly determine $\theta_1$ with high probability using an appropriate value of $\epsilon$ as in Section~\ref{sec:PhaseEstimation_Caveats}. In addition, note that our restriction on $\Delta$ implies that, in order for $\sigma^j$ to be inside of the $\Delta$-region, we must have $|0.\varphi_{[j+1:]} - \theta| \in (\frac{1}{4} - \frac{1}{16},\ \frac{1}{4} + \frac{1}{16}) \cup (\frac{3}{4} - \frac{1}{16},\ \frac{3}{4} + \frac{1}{16})$

First, let $j=n-1$ and suppose that we can correctly determine $\theta_1$ (either $\sigma^j$ is outside of the $\Delta$-region, or $\sigma^j$ is inside of the $\Delta$-region and we get lucky). Next, proceed to iteration $j=n-2$. If $\tilde{r}_{n-1} = 0$ at the previous iteration, then Lemma~\ref{lemma:n_l_m} implies that, 
\begin{equation}
\begin{split}
   |0.\varphi_{[n-1:]}& - \theta| = |0.\varphi_{[n-1:]} - 0.\tilde{r}_{n-1} \tilde{\varphi}_{n:}| \\
    \in &\left[0.\varphi_{n-1} - \frac{1}{2}2^{n-1-n-1},\ 0.\varphi_{n-1} + \frac{1}{2}2^{n-1-n-1} \right] \\
    = &\left[0.\varphi_{n-1} - \frac{1}{8},\ 0.\varphi_{n-1} + \frac{1}{8} \right].
\end{split}
\end{equation}
For either possible value of $\varphi_{n-1}$, the restriction on $\Delta$ implies that the corresponding value of $\sigma^{n-2}$ is not within the $\Delta$-region, and so we can correctly determine $\theta_1$ with high probability at this iteration. 

On the other hand, if $\tilde{r}_{n-1} = 1$, then $0.\varphi_{[n:]} > 0.11$ and $\theta = 0$. At iteration $j=n-2$, we have
\begin{equation}
\begin{gathered}
    |0.\varphi_{[n-1:]} - \theta| = 0.\varphi_{[n-1:]} > 0.\varphi_{n-1}11.
\end{gathered}
\end{equation}
Again, because $0.011 = \frac{1}{4} + \frac{1}{8} > \frac{1}{4} + \frac{1}{16}$, the corresponding value of $\sigma^{n-2}$ is not within the $\Delta$-region for either possible value of $\varphi_{n-1}$, and so we can correctly determine $\theta_1$ with high probability at this iteration.

By inductively following this logic to further iterations, we see that, if we can correctly determine $\theta_1$ at iteration $j=n-1$, then the subsequent values of $\sigma^j$ will not be in the $\Delta$-region, and the corresponding values of $\theta_1$ can be correctly determined with high probability.

Next, again let $j=n-1$ and now suppose that $\sigma^j$ is inside of the $\Delta$-region, such that we choose an incorrect value for $\theta_1$. First, consider the case in which $\frac{1}{4} < 0.\varphi_{[n:]} <\frac{1}{4} + \frac{1}{16}$, and we make an error by setting $\theta_1 = 0$. Then at iteration $j=n-2$, 
\begin{equation}
\begin{gathered}
    |0.\varphi_{[n-1:]} - \theta| = 0.\varphi_{n-1}\varphi_{[n:]} 
\end{gathered}
\end{equation}
This is bounded below by $0.\varphi_{n-1} + \frac{1}{8}$ and above by $0.\varphi_{n-1} + \frac{1}{8} +  \frac{1}{32}$, so $\sigma^{n-2}$ does not fall within the $\Delta$-region. Hence, despite our initial error, the correct value for $\theta_1$ may be determined with high probability at the next iteration. As per the result of the previous paragraphs, this indicates that, with high probability, $\sigma^j$ will not be inside the $\Delta$-region at later iterations. 

Next, consider the case in which $\frac{1}{4}-\frac{1}{16} < 0.\varphi_{[n:]} <\frac{1}{4}$, and we make an error by setting $\theta_1 = 1$. Then at iteration $j=n-2$, 
\begin{equation}
\begin{gathered}
    0.\varphi_{[n-1:]} - 0.01 = 0.\varphi_{n-1}\varphi_{[n:]} - 0.01.
\end{gathered}
\end{equation}
This is bounded below by $0.\varphi_{n-1} - \frac{1}{8} - \frac{1}{32}$ and above by $0.\varphi_{n-1} - \frac{1}{8}$, so again $\sigma^{n-2}$ does not fall within the $\Delta$-region. 

Finally, we note that the cases in which $\frac{3}{4} < 0.\varphi_{[n:]} <\frac{3}{4} + \frac{1}{16}$ and $\frac{3}{4}-\frac{1}{16} < 0.\varphi_{[n:]} <\frac{3}{4}$ are analogous to the two cases illustrated above, and we ultimately find that $\sigma^{n-2}$ is not within the $\Delta$-region. 

\end{proof}

This theorem tells us that, if $\Delta$ is made sufficiently small, then we are only plagued by the $\Delta$-region at iteration $j=n-1$. At all other iterations, $\sigma^j$ will not be within the $\Delta$-region, and we can correctly determine $\theta_1$ with high probability. Thus, if at iteration $j=n-1$, $\sigma^{n-1}$ is within the $\Delta$-region and we do make an error, then our approximation of $0.\varphi_{[n:]}$ is incorrect by some amount $ < \frac{1}{4} +  \frac{1}{16} < \frac{1}{2}$, so the overall error in our estimate of $\varphi$ will be $< \frac{1}{2^n}$ (with high probability). 

Lastly, if we rearrange the inequality in Lemma~\ref{lemma:Delta_restriction}, then we see that the necessary restriction on $\Delta$ must be 
\begin{equation}
\begin{gathered}
    \Delta < 2\left(\cos(\frac{3\pi}{16}) - \frac{1}{\sqrt{2}} \right) \approx 0.25.
\end{gathered}
\end{equation}
So if we satisfy this constraint, then we can we can guarantee that the algorithm will succeed with high probability and suffer error $< \frac{1}{2^n}$. This proves Theorem~\ref{thm:Delta_restriction}.

As a corollary, suppose more generally that we choose $\Delta$ such that $\left( \frac{1}{4} - \frac{1}{\pi}\arccos( \frac{1}{\sqrt{2}} + \frac{\Delta}{2}) \right) < \gamma$ for some $\gamma>0$. In order for $\sigma^j$ to be inside of the $\Delta$-region under this constraint, we must have $|0.\varphi_{[j+1:]} - \theta| \in (\frac{1}{4} - \gamma,\ \frac{1}{4} + \gamma ) \cup (\frac{3}{4} - \gamma,\ \frac{3}{4} + \gamma )$. Then, suppose that $\sigma^{n-1}$ is inside the $\Delta$-region such that $\frac{1}{4} < 0.\varphi_{[n:]} <\frac{1}{4} + \gamma$, and we make an error by setting $\theta_1 = 0$. Then at iteration $j=n-2$, 
\begin{equation}
\begin{gathered}
    |0.\varphi_{[n-1:]} - \theta| = 0.\varphi_{n-1}\varphi_{[n:]}, 
\end{gathered}
\end{equation}
which is bounded below by $0.\varphi_{n-1} + \frac{1}{8}$ and above by $0.\varphi_{[n-1:]} + \frac{1}{8} +  \frac{\gamma}{2}$. Hence this quantity is necessarily a distance $\frac{1}{4} - \gamma - (\frac{1}{8} +\frac{\gamma}{2}) = \frac{1}{8} - \frac{3\gamma}{2}$ from the $\Delta$-region, and this distance increases at later iterations. Identical bounds hold for the other cases that $\sigma^{n-1}$ is in the $\Delta$-region. In this sense, the quantity $0.\varphi_{[j+1:]}$ can suffer an additive error of magnitude $<\frac{1}{8} - \frac{3\gamma}{2}$, and the phase estimation algorithm will still output a $\theta$ such that $|\varphi - \theta| < \frac{1}{2^n}$ with high probability, as claimed in Section~\ref{sec:PhaseEstimation_Robust}.

\section{Construction of the Matrix Inversion Polynomial}\label{sec:matrix_inversion_polynomial}

As we described in Sec.~\ref{sec:matrix_inversion}, in order to invert a matrix with QSVT, we desire an $\frac{\epsilon}{2 \kappa}$-approximation to $\frac{1}{2 \kappa} \frac{1}{x}$, where $\kappa$ is the condition number of the matrix to be inverted. Gily\'en et al. design such a polynomial by first noting that the function
    \begin{equation}
        g_{\epsilon, \kappa}(x) = \frac{1-(1-x^2)^b}{x}
    \end{equation}
provides a good approximation to $\frac{1}{x}$ over the range $x \in [-1, 1] \setminus [\frac{-1}{\kappa}, \frac{1}{\kappa}]$ for large $b$. In particular, for $0< \epsilon < \frac{1}{2}$, $g_{\epsilon,\kappa}(x)$ $\epsilon$-approximates $\frac{1}{x}$ over the range $x \in [-1, 1] \setminus [\frac{-1}{\kappa}, \frac{1}{\kappa}]$ for $b(\epsilon,\kappa) = \lceil \kappa^2 \log(\kappa/\epsilon) \rceil$~\cite{Gily_n_2019, Childs_2017}.

Next, although $g_{\epsilon, \kappa}(x)$ is not a polynomial, it can be $\epsilon$-approximated over the range $x \in [-1, 1] $ by the polynomial
    \begin{equation}\label{eq:1OverxPoly}
        P^{1/x}_{2\epsilon, \kappa}(x) = 4\sum_{j=0}^D (-1)^j \left[2^{-2b} \sum_{i=j+1}^b {2b \choose b+i}\right] T_{2j+1}(x),
    \end{equation}
where $T_i(x)$ is the Chebyshev polynomial of order $i$ and $D(\epsilon,\kappa) = \Big\lceil \sqrt{b(\epsilon,\kappa) \log(4b(\epsilon,\kappa)/\epsilon)} \Big\rceil = \mathcal{O}(\kappa \log(\kappa/\epsilon))$ is the degree of this polynomial~\cite{Gily_n_2019}. In addition, by the triangle inequality, $P^{1/x}_{2\epsilon, \kappa}(x)$ is a $2\epsilon$-approximation to $\frac{1}{x}$ for $x \in [-1, 1] \setminus [\frac{-1}{\kappa}, \frac{1}{\kappa}]$, hence the subscript $2\epsilon$. To provide intuition, we illustrate this polynomial in Figure~\ref{fig:MatrixInversionPolynomial}.

At this stage, we may suspect that our candidate polynomial is $\frac{1}{2 \kappa} P^{1/x}_{\frac{\epsilon}{2}, 2 \kappa}(x)$, which $\epsilon$-approximates $\frac{1}{2 \kappa} \frac{1}{x}$ for $x \in [-1,1] \setminus [\frac{-1}{2\kappa}, \frac{1}{2\kappa}]$. In addition, for $\epsilon < 1/2$, this polynomial is necessarily bounded in magnitude by $1$ for $x \in [-1,1] \setminus [\frac{-1}{\kappa}, \frac{1}{\kappa}]$, which is easily seen via the triangle inequality. Unfortunately however, this candidate polynomial is not necessarily bounded for $x \in [\frac{-1}{2 \kappa}, \frac{1}{2 \kappa}]$. In particular, the approximation of Eq.~(\ref{eq:1OverxPoly}) obeys~\cite{Childs_2017, Gily_n_2019}
\begin{equation}
    \max_{x\in[-1,1]} |P^{1/x}_{\epsilon, \kappa}(x)| \leq 4D(\epsilon,\kappa) = \mathcal{O}(\kappa \log(\kappa/\epsilon)),
\end{equation}
so our candidate polynomial is only bounded in magnitude by $\frac{1}{2\kappa} 4 D(\frac{\epsilon}{2}, 2\kappa) = \mathcal{O}(\log(\kappa/ \epsilon))$, which is not necessarily $1$. 

Therefore, to enforce that the magnitude of the candidate polynomial be bounded, we may multiply it by an even function that is close to $1$ for $x \in [-1, 1] \setminus [\frac{-1}{\kappa}, \frac{1}{\kappa}]$, and close to $0$ for $x \in [\frac{-1}{2\kappa}, \frac{1}{2\kappa}]$, letting the range $x \in [\frac{-1}{\kappa}, \frac{-1}{2\kappa}] \cup [\frac{1}{2\kappa}, \frac{1}{\kappa}]$ be a transition region between these two values. Such a rectangular function may be polynomially approximated by a linear combination of the step function approximations of Section~\ref{sec:Search}:
\begin{equation}
\begin{split}
    P^{\text{rect}}_{\epsilon, \kappa}(x)& := \\
    \frac{1}{1+\frac{\epsilon}{2}}&\left(1 + \frac{1}{2}\left(  P^{\Theta}_{\epsilon, \frac{1}{4 \kappa}} \left( x - \tfrac{3}{4 \kappa} \right) + P^{\Theta}_{\epsilon, \frac{1}{4 \kappa}} \left(-x - \tfrac{3}{4 \kappa} \right) \right) \right),
\end{split}
\end{equation}
which is easily seen to obey
\begin{equation}
\begin{gathered}
        P^{\text{rect}}_{\epsilon, \kappa}(x) \in [1-\epsilon,1] \ \ \forall x \in [-1, 1] \setminus \Big[\frac{-1}{\kappa}, \frac{1}{\kappa} \Big] \\
        P^{\text{rect}}_{\epsilon, \kappa}(x) \in [0,\epsilon] \ \ \forall x \in \Big[\frac{-1}{2\kappa}, \frac{1}{2\kappa} \Big]
\end{gathered}
\end{equation}
and has even degree $\mathcal{O}(\kappa \log(1/\epsilon))$. In the in-between region $x \in [\frac{-1}{\kappa}, \frac{-1}{2\kappa}] \cup [\frac{1}{2\kappa}, \frac{1}{\kappa}]$, $P^{\text{rect}}_{\epsilon, \kappa}(x)$ transitions between values close to 0 and close to 1, remaining bounded in magnitude by $1$ throughout. We illustrate the behavior of this polynomial in Fig~\ref{fig:MatrixInversionPolynomial}.

Therefore, our target polynomial is the \emph{matrix inversion polynomial}: 
\begin{equation}\label{eq:1OverxApprox}
\begin{gathered}
     P^{\text{MI}}_{\epsilon,\kappa}(x) := \frac{1}{2\kappa} P^{1/x}_{\frac{1}{2} \epsilon, 2 \kappa}(x) P^{\text{rect}}_{\epsilon', \kappa}(x)   
\end{gathered}
\end{equation}
where $\epsilon' = \text{min}\left( \frac{2\epsilon}{5\kappa}, \frac{\kappa}{2D(\epsilon/4, 2\kappa)} \right) = \mathcal{O}\left(\frac{\epsilon}{\kappa}\right)$. We illustrate this polynomial in Fig~\ref{fig:MatrixInversionPolynomial}. In defining $\epsilon'$, the term $\frac{2\epsilon}{5\kappa}$ ensures that this polynomial is an $\frac{\epsilon}{2\kappa}$-approximation to $\frac{1}{2 \kappa}\frac{1}{x}$ over the range of possible singular values, and the term $\frac{\kappa}{2D(\epsilon/4, 2\kappa)}$ ensures that $P^{\text{MI}}_{\epsilon,\kappa}(x)$ is bounded in magnitude by 1 for $x\in [\frac{-1}{\kappa}, \frac{1}{\kappa}]$. Indeed, over the range $x \in [-1, 1] \setminus [\frac{-1}{\kappa}, \frac{1}{\kappa}]$, $P^{\text{MI}}_{\epsilon,\kappa}(x)$ is an $\frac{\epsilon}{2\kappa}$-approximation to $\frac{1}{2 \kappa}\frac{1}{x}$:
\begin{equation}
    \begin{split}
        &\left|P^{\text{MI}}_{\epsilon,\kappa}(x) - \frac{1}{2\kappa}\frac{1}{x}\right| 
        \leq \frac{1}{2\kappa} \left|P^{1/x}_{\frac{1}{2} \epsilon, 2 \kappa}(x)(1-\epsilon') - \frac{1}{x} \right| \\
        &\leq \frac{1}{2\kappa}\left( \left|P^{1/x}_{\frac{1}{2} \epsilon , 2 \kappa}(x) - \frac{1}{x} \right|  + \epsilon' \left| P^{1/x}_{\frac{1}{2} \epsilon, 2 \kappa}(x)\right| \right) \\
        &\leq \frac{1}{2\kappa}\left( \frac{\epsilon}{2} + \frac{2\epsilon}{5\kappa} \left(\kappa + \frac{\epsilon}{2} \right) \right) \\ 
        &\leq \frac{1}{2\kappa}\left( \frac{\epsilon}{2} + \frac{2\epsilon}{5} + \frac{\epsilon}{10} \right) 
        = \frac{\epsilon}{2\kappa}
    \end{split}
\end{equation}
where we have used the (loose) bound $\epsilon/\kappa < \frac{1}{2}$. Likewise, $P^{\text{MI}}_{\epsilon,\kappa}(x)$ is bounded above by $1$ for $x \in [\frac{-1}{2\kappa}, \frac{1}{2\kappa}]$:
\begin{equation}
    \begin{gathered}
        \left|P^{\text{MI}}_{\epsilon,\kappa}(x)\right| \leq \frac{1}{2\kappa} 4D\left(\epsilon/4, 2\kappa \right) \epsilon' \leq  1.
    \end{gathered}
\end{equation}
Similarly, for the in-between region $x \in [\frac{-1}{\kappa}, \frac{-1}{2\kappa}] \cup [\frac{1}{2\kappa}, \frac{1}{\kappa}]$, both components of $P^{\text{MI}}_{\epsilon,\kappa}(x)$ are bounded in magnitude by $1$, so $|P^{\text{MI}}_{\epsilon,\kappa}(x)| \leq 1$.

\begin{figure}[htpb]
    \centering
    \includegraphics[width=\columnwidth]{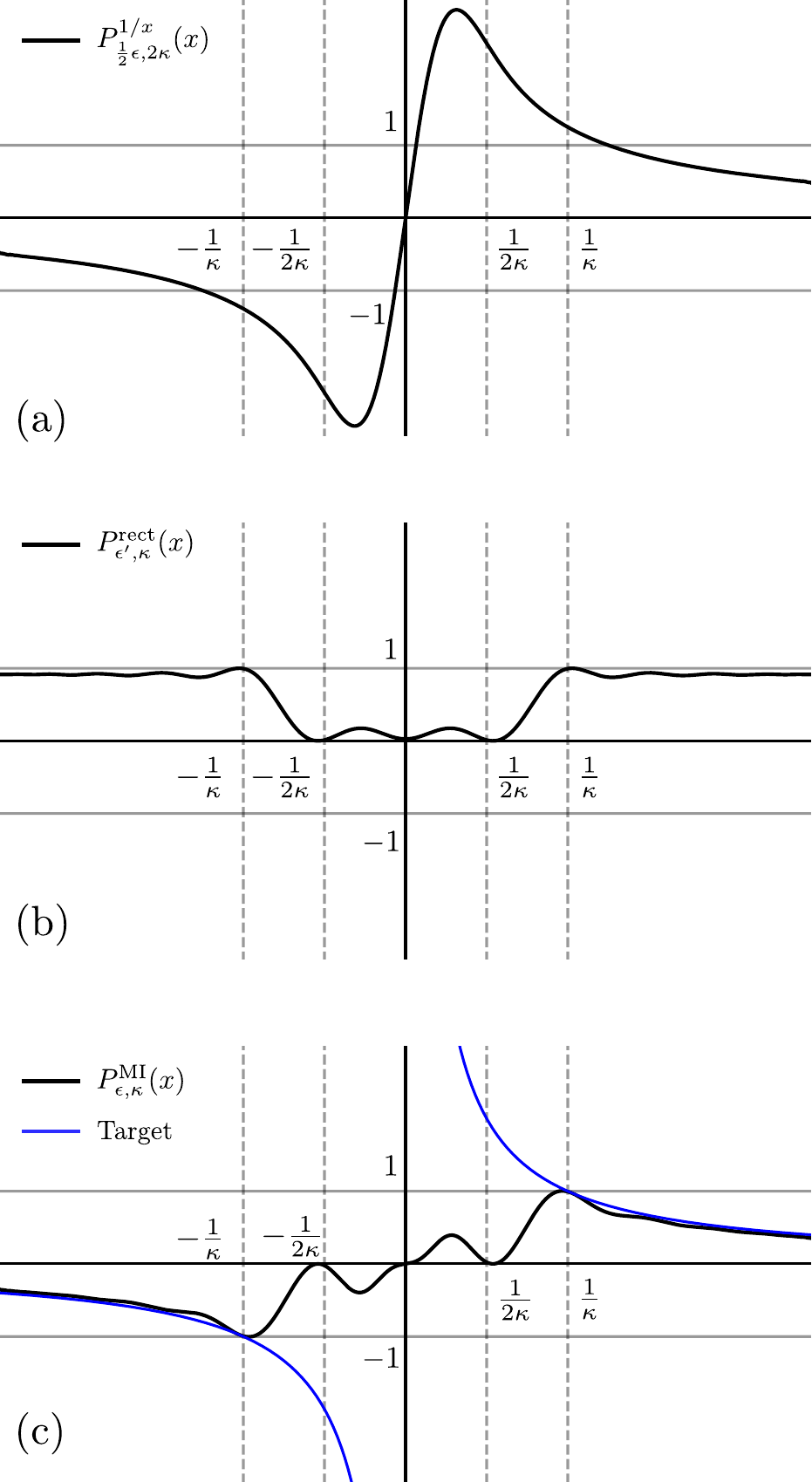}
    \caption{A sketch of (a) $P^{1/x}_{\frac{1}{2} \epsilon, 2\kappa}(x)$, (b) $P^{\text{rect}}_{\epsilon',\kappa}(x)$, and (c) $P^{\text{MI}}_{\epsilon,\kappa}(x)$ for $\kappa=2.5$. The resulting polynomial, $P^{\text{MI}}_{\epsilon,\kappa}(x)$, is a degree $77$ approximation to the inverse function.}
    \label{fig:MatrixInversionPolynomial}
\end{figure}

Finally, it is easy to compute the degree of $P^{\text{MI}}_{\epsilon,\kappa}(x)$, which is the sum of the degrees of $P^{1/x}_{\frac{1}{2} \epsilon, 2 \kappa}(x)$ and $P^{\text{rect}}_{\epsilon', \kappa}(x)$:
\begin{equation}
    d = \mathcal{O}\big(\kappa \log(\kappa/\epsilon) + \kappa \log(1/\epsilon') \big)  = \mathcal{O}\left(\kappa \log(\kappa/\epsilon) \right).
\end{equation}

%%%%%%%%%%%%%%%%%%%%%%%%%%%%%%%%%%%%%%%%
\section{QSP Phase Angle Sequence Examples}
\label{sec:qsp_angle_examples}

Presented in this appendix are some explicit polynomials and
corresponding QSP phase angles, for functions which are useful in
quantum signal processing applications.  These are not necessarily the
optimal polynomials, nor the best QSP phase angles, but they are
pedagogically clear starting points. Unless otherwise specified, all QSP 
phases are given in the $(W_x, S_z, \braket{+| \cdot | +})$-QSP convention.
All the code for generating these phase angles is available in the {\tt pyqsp} 
open source repository on GitHub\footnote{{\tt \url{https://github.com/ichuang/pyqsp}}}.

%%%%%%%%%%%%%%%%%%%%%%%%%%%%%%%%%%%%%%%%
\subsection{Oblivious amplitude amplification}
\label{sec:appx_fpsearch}

For fixed point search or oblivious amplitude amplification, it is
desired to make a polynomial which maps $a$ as close as possible to
$1$, for a wide range of small values of $a$, starting as close to
$a=0$ as possible.

One sequence of phases which accomplishes this optimally, with error $1-\delta^2$, is given for $k = 0, 1, \ldots, d-1$
by
\bea
	\phi_{2k} &=& \alpha_{d-k-1}
\\
%	\phi_{2k+1} &=& -\beta_{d-k-1} = -\beta_{d-(k+1)} = -\alpha_{k+1}
	\phi_{2k+1} &=& \alpha_{k+1}
\eea
where
\bea
%	\alpha_k = -\beta_{d-k} = \cot^{-1}\lpL( \tan(2\pi (k+1)/L) \sqrt{1-\gamma^2} \right)
	\alpha_k = -\cot^{-1}\lpL \sqrt{1-\gamma^2} \tan \frac{2\pi (k+1)}{L} \right)
\eea
and $L=2d+1$ and $\gamma^{-1} = T_{1/L}(1/\delta) = \cosh^{-1}(1/\delta) / \cosh(1/L)$ \cite{PhysRevLett.113.210501}.

For example, for $d=10$ and $\delta=0.5$, the QSP phase angles are:
\begin{verbatim}
pyqsp --plot-positive-only --plot-probability 
--plot-tight-y --plot-npts=400 
--seqargs=10,0.5 --plot fpsearch

[-1.58023603 -1.55147987 -1.6009483  -1.52812171 
 -1.62884337 -1.49242141 -1.67885248 -1.41255145 
 -1.8386054  -0.87463828 -0.87463828 -1.8386054
 -1.41255145 -1.67885248 -1.49242141 -1.62884337 
 -1.52812171 -1.6009483  -1.55147987 -1.58023603]

\end{verbatim}
The corresponding response function is shown in Figure~\ref{fig:example_amplification}.
\begin{figure}[htpb]
    \centering
    \includegraphics[width=\columnwidth]{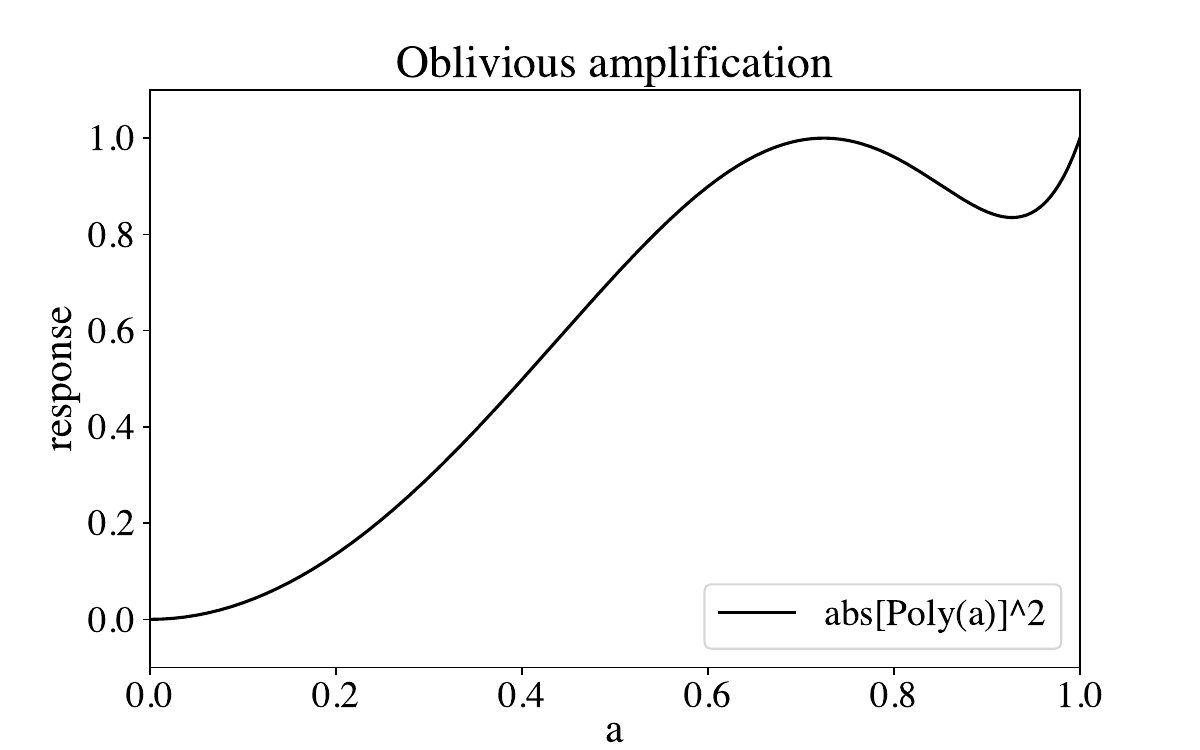}
    \caption{Transition probability using the polynomial for oblivious amplification for $d=10$ and $\delta=0.5$.}
    \label{fig:example_amplification}
\end{figure}

%%%%%%%%%%%%%%%%%%%%%%%%%%%%%%%%%%%%%%%%
\subsection{Sign function}
\label{sec:appx_sign_func}

The sign function $\Theta(a)$ has a number of applications for quantum signal processing.
As discussed in Section~\ref{sec:Search}, a robust polynomial approximation of it can be
obtained using the error function
\be
   \Theta(a) \approx {\rm erf}(k a)
\,,
\ee
where $k$ is a (large) scaling factor.  

For example, the QSP phase angles obtained using an $d=19$ order approximation with $k=10$ are
\begin{verbatim}
pyqsp --plot-real-only --plot-npts=400 
--seqargs=19,10 --plot poly_sign

[0.01558127 -0.01805798  0.05705643 -0.01661832  
 0.16163773  0.09379074 -2.62342885  0.49168481  
 0.92403822 -0.09696846 -0.09696846  0.92403822
 0.49168481 -2.62342885  0.09379074  0.16163773 
-0.01661832  0.05705643 -0.01805798  1.5863776]

\end{verbatim}
The corresponding response function is shown in Figure~\ref{fig:example_sign}. For $a>0$, this sign function approximation may be employed as a step function, e.g. for oblivious amplitude amplification.
\begin{figure}[htpb]
    \centering
    \includegraphics[width=\columnwidth]{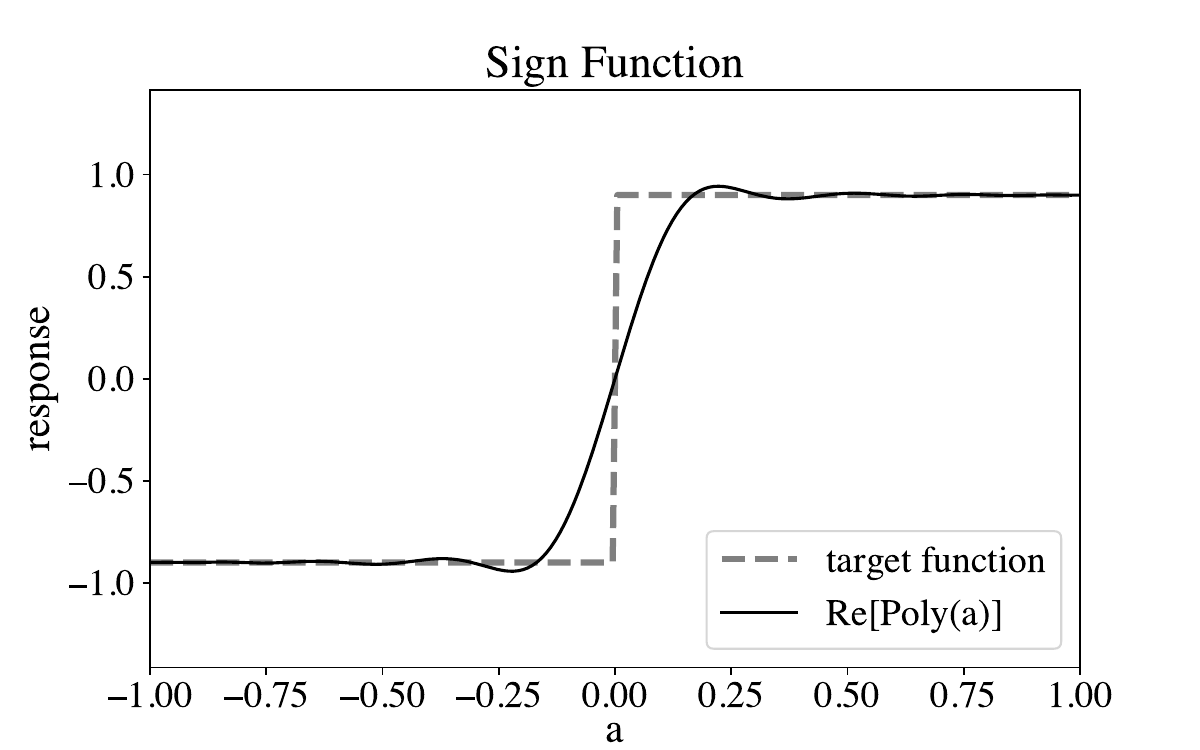}
    \caption{Response function for the polynomial approximation to the sign function with $d=19$ and $k=10$.}
    \label{fig:example_sign}
\end{figure}

%%%%%%%%%%%%%%%%%%%%%%%%%%%%%%%%%%%%%%%%
\subsection{Matrix inversion using $1/a$}

The $1/a$ function is useful for computing Moore-Penrose
pseudoinverses of matrices using the quantum singular value transform.
As discussed in Section~\ref{sec:matrix_inversion}, Chebyshev polynomials may be employed to approximate this function \cite{Childs_2017}.

For example, with $\kappa=3$ and $\epsilon=0.3$, a set of QSP phase
angles for this polynomial is:
\begin{verbatim}
pyqsp --plot-real-only --plot-npts=400 
--seqargs=3,0.3 --plot invert

[-0.27237279 -1.8808697   2.19755533 -0.860515
  0.84659086  0.62794236 -0.69688032 -0.62874403  
  0.7406656   0.44483992 -0.60489363 -0.60489363
  0.44483992  0.7406656  -0.62874403 -0.69688032 
  0.62794236  0.84659086 -0.860515   -0.94403733  
  1.26072295  1.29842354]
\end{verbatim}
The corresponding response function is shown in Figure~\ref{fig:example_inverse}.
\begin{figure}[htpb]
    \centering
    \includegraphics[width=\columnwidth]{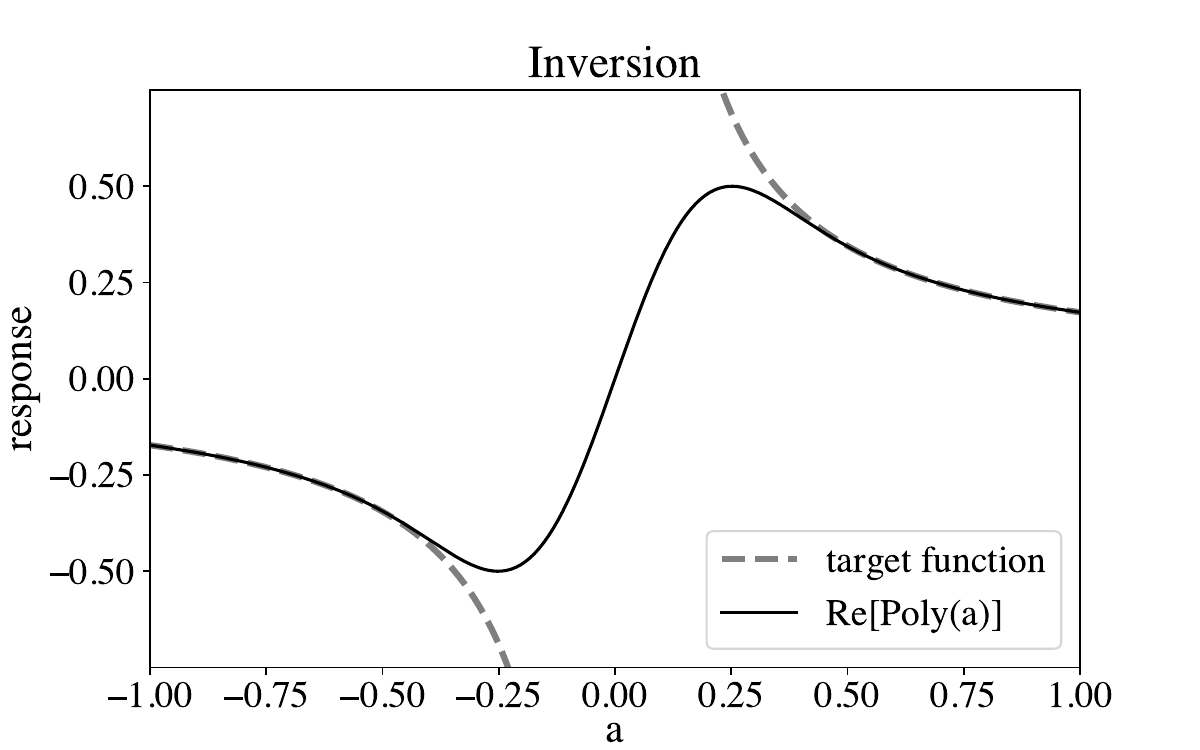}
    \caption{Response function for the polynomial approximation to the inverse function with $\kappa=3$ and $\epsilon=0.3$.}
    \label{fig:example_inverse}
\end{figure}

%%%%%%%%%%%%%%%%%%%%%%%%%%%%%%%%%%%%%%%%
\subsection{Cosine and sine functions for Hamiltonian simulation}
\label{subsec:hamsim_func}

For Hamiltonian simulation, we seek an approximation to $e^{-i a t}$.
As discussed in section \ref{sec:HamiltonianSimulation}, this can be accomplished using the Jacobi-Anger approximations of $\cos(at)$ and $\sin(at)$ of Eqs.~(\ref{eq:Jacobi-Anger1}) and~(\ref{eq:Jacobi-Anger2}).
The approximation is chosen to sufficient degree so as to bound the error to $\epsilon > 0$ in the region $a \in [-1, 1]$.

For example, with $t=5$ and $\epsilon=0.1$, a set of QSP phase angles for the approximation to $\cos(at)$ are
\begin{verbatim}
pyqsp  --plot-real-only --plot-npts=400 
--seqargs=10,0.1 --plot hamsim

[-1.70932079 -0.05312746  2.12066859 -0.83307065 
 -0.50074601  0.40728859  0.32838472  0.9142489  
 -2.81320793  0.40728859 -0.50074601  2.30852201
 -1.02092406 -0.05312746  3.00306819]
\end{verbatim}
The corresponding response function is shown in Figure~\ref{fig:example_hamsim-cosine}.
\begin{figure}[htpb]
    \centering
    \includegraphics[width=\columnwidth]{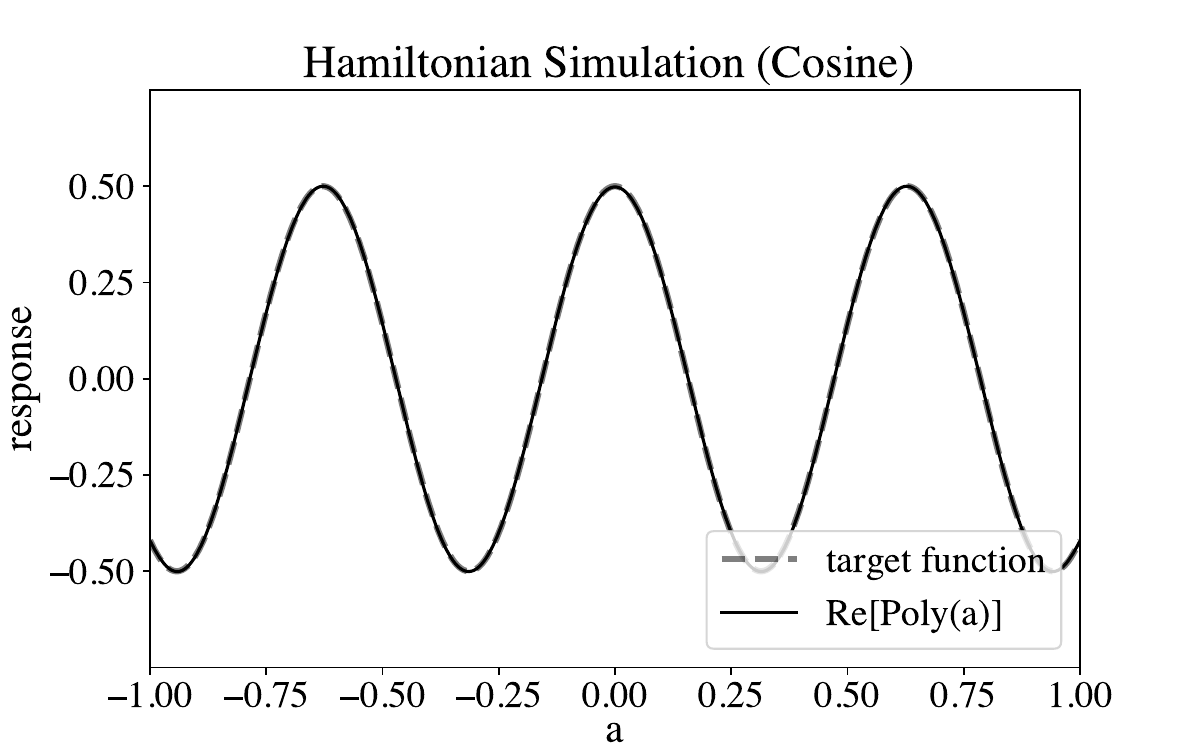}
    \caption{Response function for the polynomial approximation to the cosine function with $t=5$ and $\epsilon=0.1$.}
    \label{fig:example_hamsim-cosine}
\end{figure}

\noindent Similarly, the QSP phases for the approximation to $\sin( at)$ are:
\begin{verbatim}
[-1.63276817  0.20550406 -0.84198335  0.39732059 
 -0.26820613  2.41324245  0.04662674 -2.02847501  
  1.11311765  0.04662674 -0.72835021 -0.26820613
  0.39732059 -0.84198335  0.20550406 -0.06197184]
\end{verbatim}
The corresponding response function is shown in Figure~\ref{fig:example_hamsim-sine}.
\begin{figure}[htpb]
    \centering
    \includegraphics[width=\columnwidth]{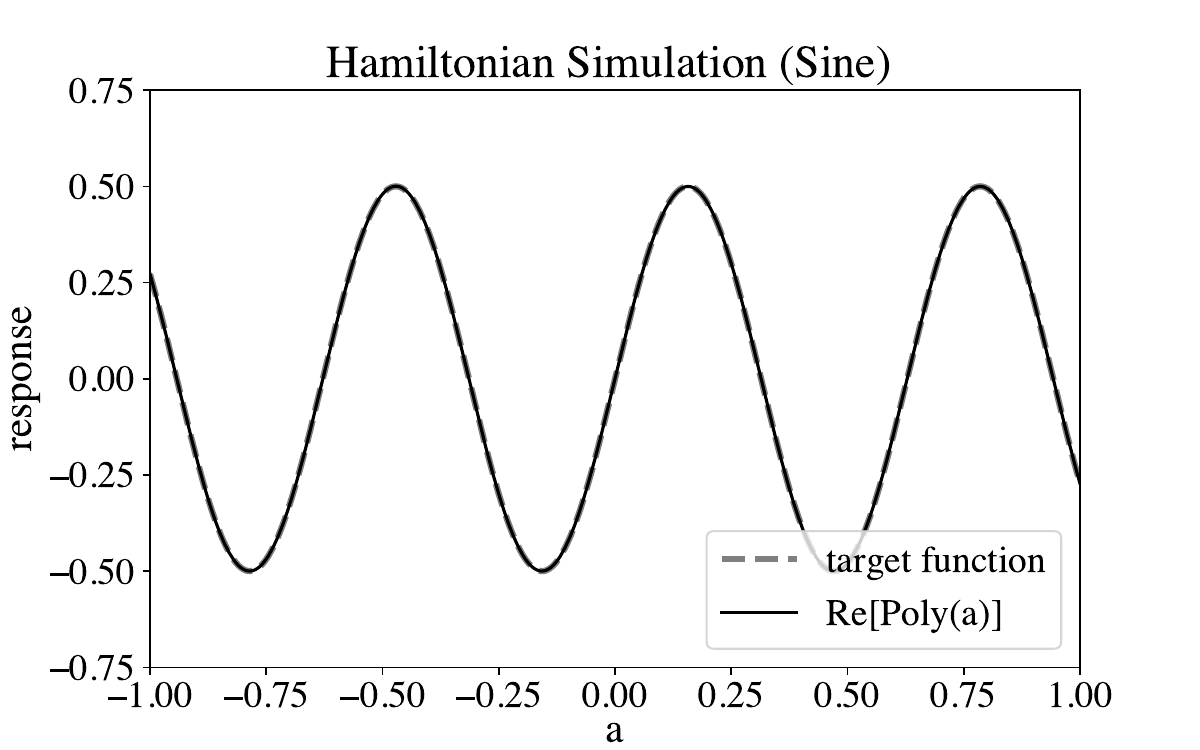}
    \caption{Response function for the polynomial approximation to the sine function with $t=5$ and $\epsilon=0.1$.}
    \label{fig:example_hamsim-sine}
\end{figure}

%%%%%%%%%%%%%%%%%%%%%%%%%%%%%%%%%%%%%%%%
\subsection{Threshold function}
\label{subsec:thresh_func}

Distinguishing eigenvalues and singular values may be performed using
a step function, which we will take to be  ${\rm step}(a-1/2)$ for illustrative purposes. This may be polynomially approximated using a Taylor series expansion of
\begin{equation}
    \begin{split}
        \rm{step}(a-1/2) &\approx \\
	    \frac{1}{2}\Big( {\rm erf} &\big( k(a+1/2)\big) - {\rm erf}\big( k(a-1/2) \big) \Big),
    \end{split}
\end{equation}
which becomes a good approximation for large $k$.  

For example, with $k=10$ and using a degree $d=18$ Taylor series, a set of QSP phase
angles for this polynomial is:
\begin{verbatim}
pyqsp --plot-real-only --plot-npts=400 
--seqargs=18,10 --plot poly_thresh

[0.73930816 -0.69010006 -0.63972139 -0.47754554  
 0.81797049  0.09205065 -0.87660105  0.13460844  
 0.23892207  1.32216648 -2.90267058  0.13460844
 2.2649916   0.09205065 -2.32362216  2.66404712 
-0.63972139 -0.69010006  2.31010449]
\end{verbatim}
The corresponding response function is shown in Figure~\ref{fig:example_threshold}.
\begin{figure}[htpb]
    \centering
    \includegraphics[width=\columnwidth]{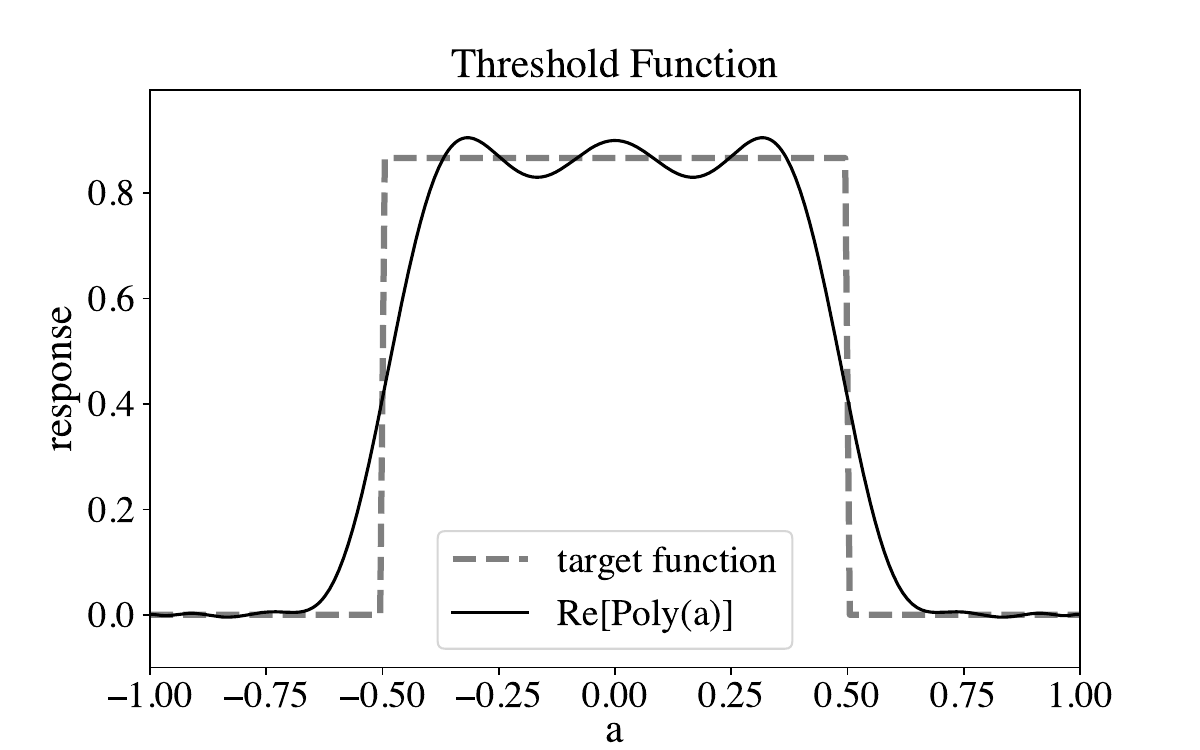}
    \caption{Response function for the polynomial approximation to the threshold function with $k=10$ and $d=18$.}
    \label{fig:example_threshold}
\end{figure}

Note that this is an even function of $a$, but it may be used just in
the region $a\ge 0$, e.g. to distinguish singular values that are
above or below $1/2$.  It can be made as sharp as desired by
increasing $k$ and the degree of the polynomial.  

%%%%%%%%%%%%%%%%%%%%%%%%%%%%%%%%%%%%%%%%
\subsection{Linear amplitude amplification}
\label{subsec:linear_amplification}

Linear amplitude amplification is a subroutine useful for a number of quantum algorithms including simulation.
The goal is to multiply inputs by a constant $1/2\Gamma$ for $\Gamma \in (0, 1/2]$.
As is usual for QSP, the absolute value of the output must bounded by $1$ and therefore we seek a polynomial approximation that performs the linear amplification only for small inputs.
We can obtain a suitable approximation by truncating the Taylor expansion of
\begin{equation}
    \begin{split}
        \rm{linear\_amplification}(a, \Gamma) &\approx \\
	    \frac{a}{2\Gamma} \times \frac{1}{2}\Big( {\rm erf} &\big( k(a + 2\Gamma)\big) - {\rm erf}\big( k(a - 2\Gamma) \big) \Big),
    \end{split}
\end{equation}
where $k$ is chosen to obtain the desired accuracy within the region $[-\Gamma, \Gamma]$.
This approximation is described in further detail in~\cite{low2017quantum}.

For example, with $\Gamma=0.25$ and using a degree $d=19$ Taylor series, a set of QSP phase angles for this polynomial is:
\begin{verbatim}
pyqsp --plot-real-only --plot-npts=400 
--seqargs=19,0.25 --plot poly_linear_amp

[0.07658557 -0.01961714 -0.10257913  0.08269406  
 0.16291683  0.43552219 -2.62323892  2.61402960  
 0.02001814 -2.21710253  0.92449012  0.02001814
-0.52756304  0.51835372  0.43552219  0.16291683
 0.08269406 -0.10257913  -0.0196171 -1.49421074]

\end{verbatim}
The corresponding response function is shown in Figure~\ref{fig:example_linamp}.
\begin{figure}[htpb]
    \centering
    \includegraphics[width=\columnwidth]{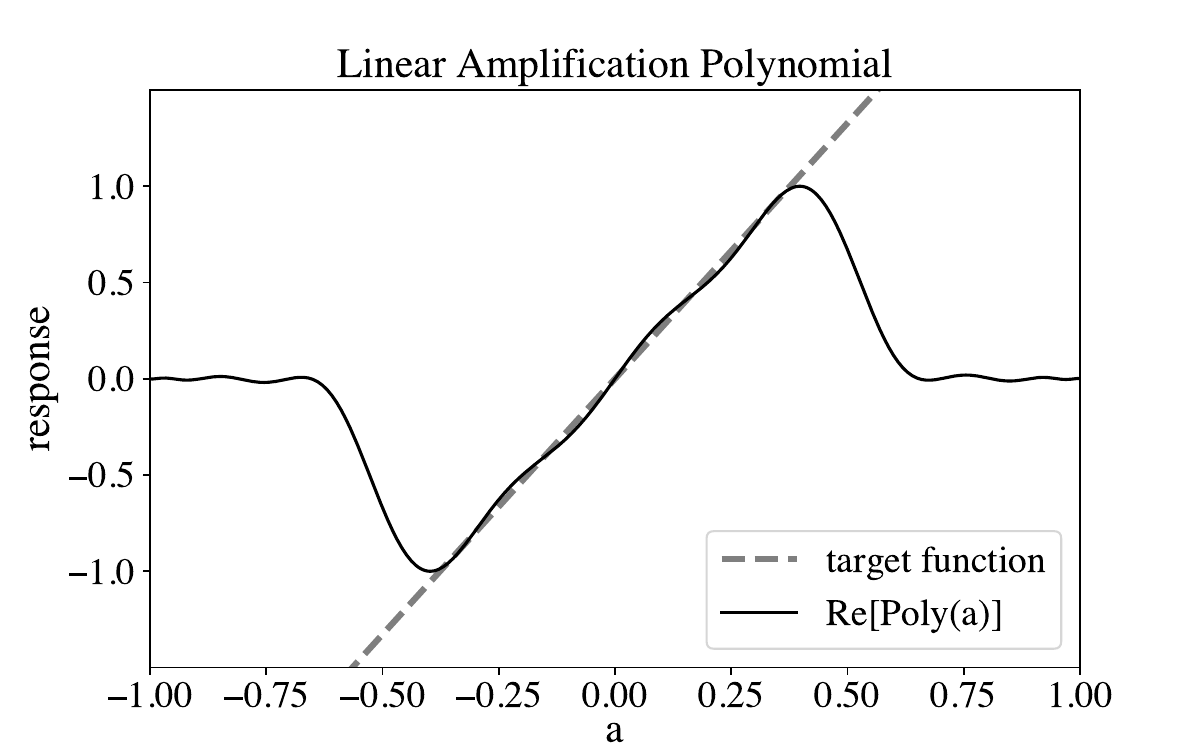}
    \caption{Response function for the linear amplification polynomial with $\Gamma=0.25$ and $d=19$.}
    \label{fig:example_linamp}
\end{figure}

%%%%%%%%%%%%%%%%%%%%%%%%%%%%%%%%%%%%%%%%
\subsection{Phase estimation polynomial}
\label{subsec:phase_est_func}

Similar to the threshold function is the phase estimation polynomial of Eq.~(\ref{eq:SignFunction_Target}) used in Section~\ref{sec:PhaseEstimation}.

For example, with $\Delta=10$ and using a degree $d=18$ Taylor series, a set of QSP phase
angles for this polynomial in the $(W_x, S_z, \braket{+| \cdot | +})$-QSP convention is:
\begin{verbatim}
pyqsp --plot-real-only --plot-npts=400 
--seqargs=18,10 --plot poly_phase

[-2.69295576  0.92644177 -2.47601161 -2.95790072 
 -3.07710363  2.40352005  2.38432687 -3.0712802 
 -2.54668308 -0.87407521  0.59490957 -3.0712802
 -0.75726578  2.40352005 -3.07710363 -2.95790072 
 -2.47601161  0.92644177  2.01943322]
\end{verbatim}
The corresponding response function is shown in Figure~\ref{fig:example_phase_++}.
\begin{figure}[htpb]
    \centering
    \includegraphics[width=\columnwidth]{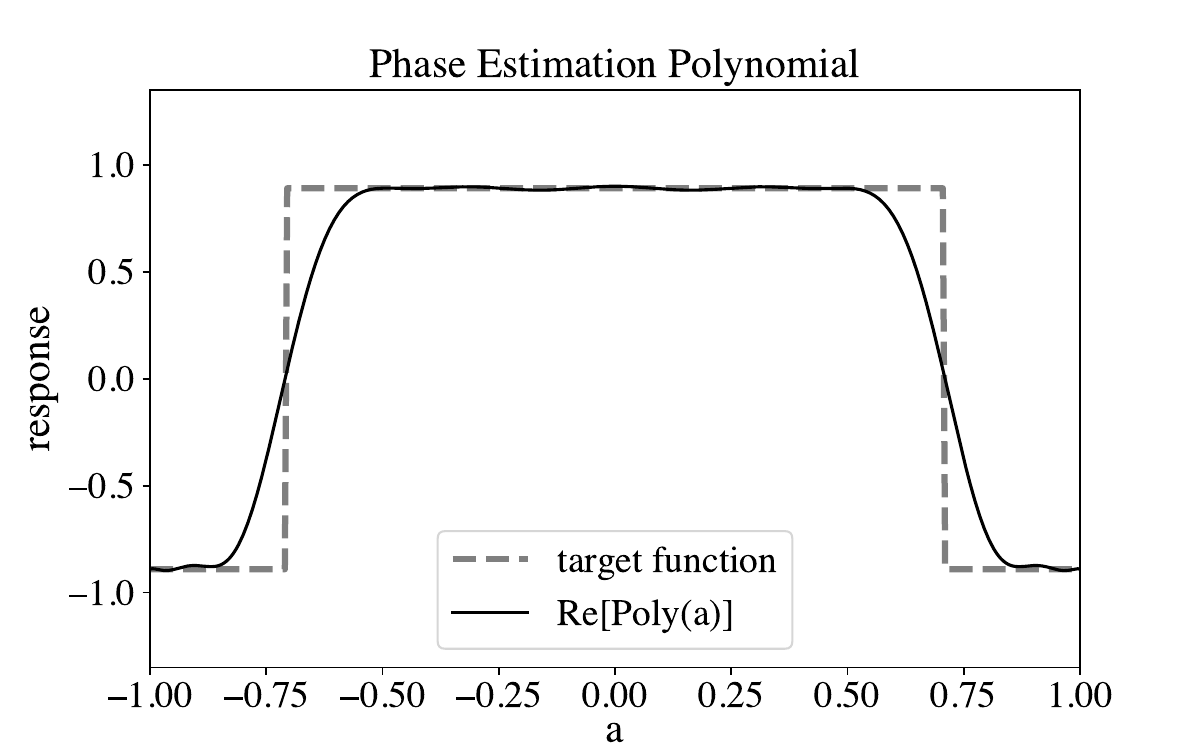}
    \caption{Response function for the real phase estimation polynomial with $\Delta=10$ and $d=18$ in the $(W_x, S_z, \braket{+| \cdot | +})$-QSP convention..}
    \label{fig:example_phase_++}
\end{figure}

A set of QSP phases for this polynomial can also be given in the $(W_x, S_z, \braket{0| \cdot | 0})$-QSP convention; use of this convention clarifies the reduction to the quantum Fourier transform presented in Section~\ref{sec:EmergentQFT}.
\begin{verbatim}
pyqsp --polydeg 16 --measurement="z"
--func="-1+np.sign(1/np.sqrt(2)-x)+
np.sign(1/np.sqrt(2)+x)"  --plot polyfunc

[0.6744825  2.4248297  2.7351234   2.7316442   
 0.0127715  3.915519   2.3178308  -0.00533221
 2.3088486  2.36385    2.3181891   1.585311 
 2.411246   0.4094785 -0.40136954  0.7154387   
 1.8687413]
\end{verbatim}
The corresponding response function is shown in Figure~\ref{fig:example_phase_00}.
\begin{figure}[htpb]
    \centering
    \includegraphics[width=\columnwidth]{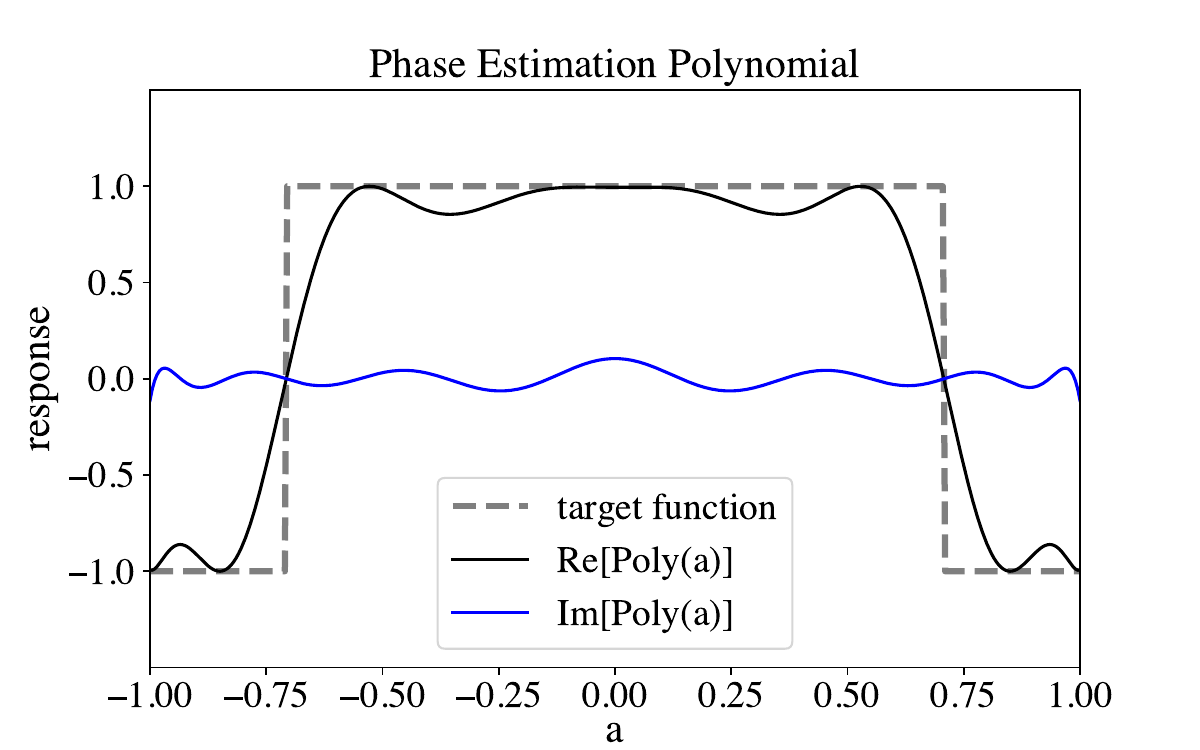}
    \caption{Response function for a degree $18$ polynomial approximation to the phase estimation function in the $(W_x, S_z, \braket{0| \cdot | 0})$ convention.}
    \label{fig:example_phase_00}
\end{figure}
Note that we are no longer using the real polynomial approximation of Eq.~(\ref{eq:SignFunction_Target}) since it does not satisfy the conditions of Theorem~\ref{thm:qsp_Wx-Sz-00}; because of this, there is a small non-zero imaginary response.  The QSP phase angles for this example are generated using an optimization algorithm.

%%%%%%%%%%%%%%%%%%%%%%%%%%%%%%%%%%%%%%%%
\subsection{Eigenstate filtering}

As in the eigenvalue threshold problem of Section~\ref{sec:Threshold}, 
suppose $\mathcal{H}$ is a Hermitian matrix with an eigenvalue $\lambda$ which
is known to be separated from other eigenvalues by a gap $\Delta_\lambda > 0$,
and the problem is to create, using QSP, a projection operator onto
the eigenspace corresponding to $\lambda$.  Lin and Tong~\cite{Lin_2020} show that the degree $d=2k$ polynomial
\bea
	f_k(x, \Delta_\lambda) = \frac{T_k\left(-1 + 2\frac{x^2-\Delta_\lambda^2}{1-\Delta_\lambda^2} \right)}{T_k\left(-1 + 2\frac{-\Delta_\lambda^2}{1-\Delta_\lambda^2} \right)}
\,,
\eea
known as the ``eigenstate filtering function,'' is an optimal
polynomial for filtering out the unwanted information from all other
eigenstates.

For example, with $\delta=0.3$ and using a degree $d=30$ Taylor series, a set of QSP phase
angles for this polynomial is:
\begin{verbatim}
pyqsp --plot-positive-only --plot-real-only 
--plot-tight-y --seqargs 30,0.3 --plot efilter

[-2.22655153  2.26982696 -0.76378956  0.07418111  
  0.25458387  0.5916072   0.30309002  0.10101557 
 -0.12943648 -1.00141425  0.60368299 -2.2897962
 -0.04337353  0.28364185  2.28161478 -0.61804648 
 -0.85997787  0.28364185 -0.04337353  0.85179646  
  0.60368299 -1.00141425 -0.12943648  0.10101557
  0.30309002  0.5916072   0.25458387  0.07418111 
 -0.76378956 -0.87176569  2.48583745]
\end{verbatim}
which produces this response function for $a>0$:
The corresponding response function is shown in Figure~\ref{fig:example_efilter}.
\begin{figure}[htpb]
    \centering
    \includegraphics[width=\columnwidth]{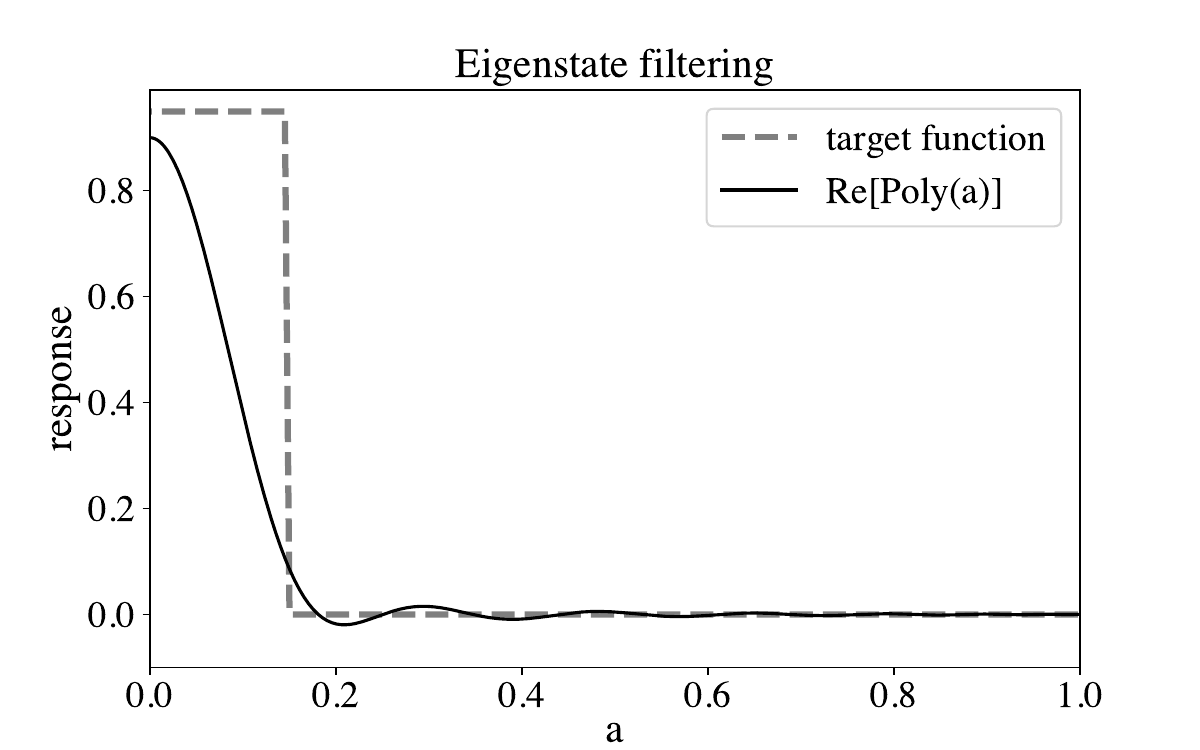}
    \caption{Response function for the polynomial approximation to the eigenstate filtering function with $\delta=0.3$ and $d=30$.}
    \label{fig:example_efilter}
\end{figure}

\noindent This is a better threshold function than the one presented in
section~\ref{subsec:thresh_func}, and the threshold can be located
where desired by changing $\Delta_\lambda$.  

%%%%%%%%%%%%%%%%%%%%%%%%%%%%%%%%%%%%%%%%
\subsection{Gibbs distribution}

Given $\mathcal{H} = \sum_\lambda \lambda |\lambda\>\<\lambda|$, its corresponding Gibbs distribution state $\rho(\beta) = \frac{1}{Z}e^{-\beta \mathcal{H}}$ is the density matrix
\bea
	\rho(\beta) = \frac{1}{Z}\sum_\lambda e^{-\beta\lambda} |\lambda\>\<\lambda|
\,.
\eea
An approximation to $e^{-\beta a}$ is useful for obtaining $\rho$ using QSP.
To ensure that the function has definite parity, we choose a polynomial approximation to $e^{-\beta |a|}$.

For example, with $\beta=3.5$ and using a degree $d=20$ Taylor series, a set of QSP phase
angles for this polynomial is:
\begin{verbatim}

pyqsp --plot-positive-only --plot-real-only 
--plot-tight-y --seqargs=20,3.5 --plot gibbs

[-0.0311925   0.15173154 -0.42846816  0.59591    
 -0.41539264 -0.16200557  0.12112529 -0.09068282 
 -0.92154011 -0.88213549  1.0199175  -0.88213549
  2.22005254 -0.09068282  0.12112529 -0.16200557 
 -0.41539264  0.59591    -0.42846816 -2.98986111  
  1.53960383]
\end{verbatim}
The corresponding response function is shown in Figure~\ref{fig:example_gibbs}.
\begin{figure}[htpb]
    \centering
    \includegraphics[width=\columnwidth]{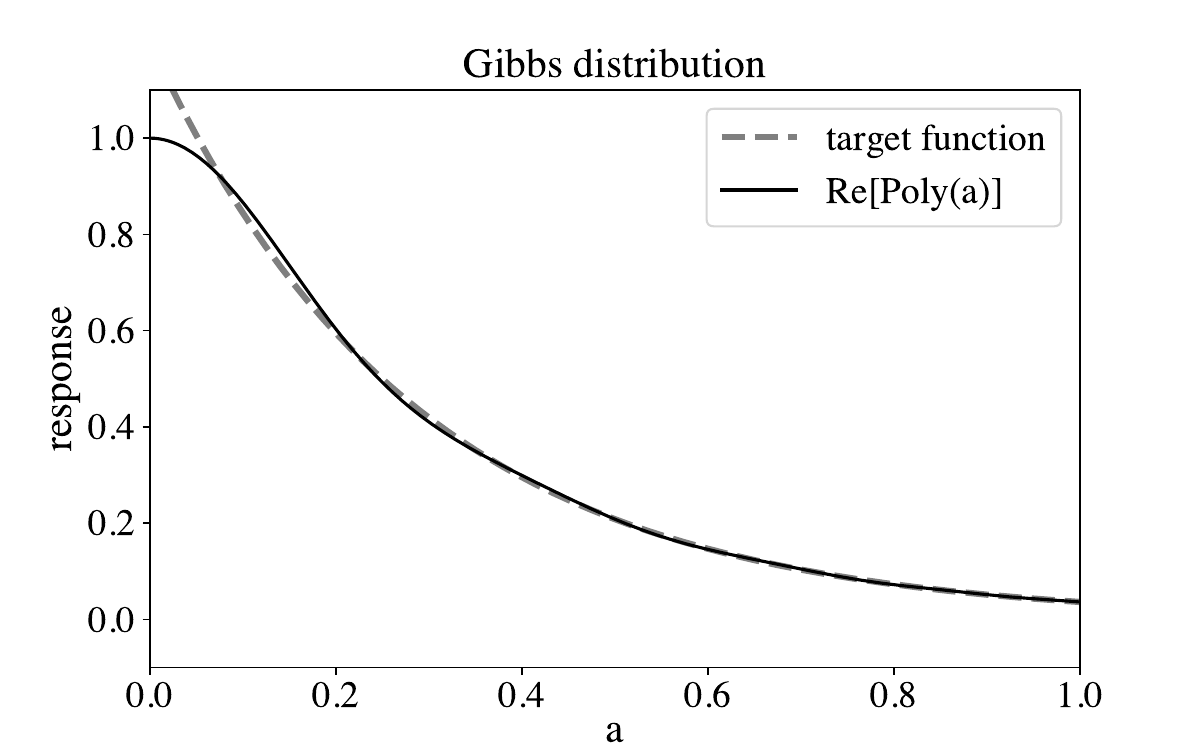}
    \caption{Response function for the polynomial approximation to the Gibbs distribution for $a > 0$ with $\beta=3.5$ using a degree $d=20$ polynomial. The response is scaled to maximize ${\rm Poly}(a)$. Note that the approximation deviates from the target near $a = 0$ as the symmetrized version of the Gibbs distribution is non-analytic about that point.}
    \label{fig:example_gibbs}
\end{figure}

%%%%%%%%%%%%%%%%%%%%%%%%%%%%%%%%%%%%%%%%
\subsection{ReLU}

The ``rectified linear unit'' activation function, $\text{ReLU}(x) := \text{max}(0,x)$, is popular in
machine learning, and QSP is a natural framework to employ for
realizing such nonlinear activation functions for quantum machine
learning.  A common differentiable approximation of the ReLU function
is the softplus function, which is made into an even function in this
version:
\be
% f(a) = \left\{ \begin{array}{c} 0~~{\rm if}~~ |a| > \delta \\ |a|~~ {\rm otherwise}  \end{array}\right .
	f(a) = \frac{\ln \lpL 1+e^{\Delta(|a|-\delta)} \rp}{\Delta}
\ee
where $\Delta$ parameterizes the steepness, and $\delta$ the offset of the threshold from $0$.

For example, with $\delta=0.6$ and $\Delta=15$ and using a degree $20$ Taylor series, a set of QSP phase
angles for this polynomial is:
\begin{verbatim}
pyqsp --plot-real-only --plot-tight-y
--seqargs=20,0.6,15 --plot relu

[0.5830891  -0.50867554  0.45797704 -1.83149903  
 0.20084092 -0.11936587  0.97960177  0.53415547 
-0.9957325  -0.9362886   1.24987957 -0.9362886
-0.9957325   0.53415547  0.97960177 -0.11936587  
 0.20084092  1.31009363 -2.68361561  2.63291712 
-0.98770723]
\end{verbatim}
The corresponding response function is shown in Figure~\ref{fig:example_relu}.
\begin{figure}[htpb]
    \centering
    \includegraphics[width=\columnwidth]{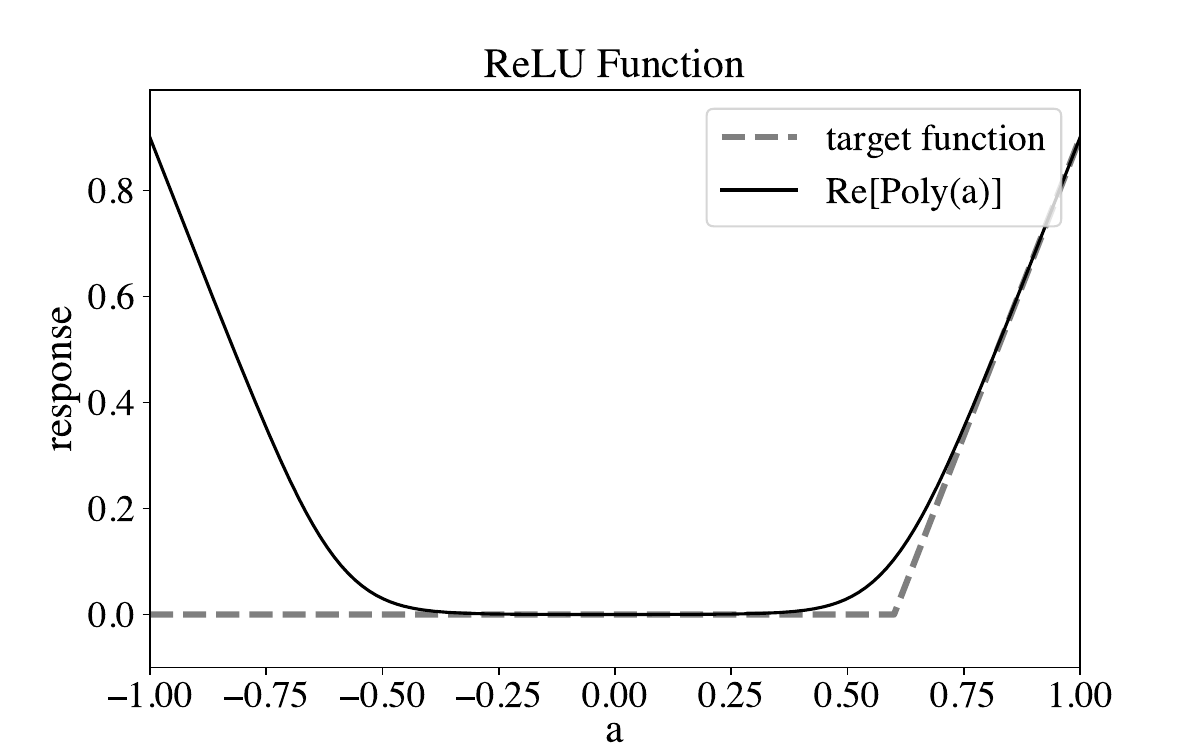}
    \caption{Response function for the polynomial approximation to the ReLU function with $\delta=0.6$ and $\Delta=15$.}
    \label{fig:example_relu}
\end{figure}

%%%%%%%%%%%%%%%%%%%%%%%%%%%%%%%%%%%%%%%%%%%%%%%%%%%%%%%%%%%%%%%%%%%%%%
% bibliography

\clearpage

\nocite{*}
\bibliographystyle{apsrev4-1}
\bibliography{Refs}

\end{document}